\newtheorem{example}{Example}
\newtheorem{theorem}{Theorem}
\newtheorem{corollary}{Corollary}
\newtheorem{definition}{Definition}
\newenvironment{sketch}[1][\proofname]{\proof[#1]\mbox{}}{\endproof}
\title{Repairing General Game Descriptions (extended version)}
\author{%
Yifan He$^1$\and
Munyque Mittelmann$^2$\and
Aniello Murano$^3$\and
Abdallah Saffidine$^4$ \and
Michael Thielscher$^1$\\
\affiliations
$^1$University of New South Wales, Australia \\
$^2$CNRS, LIPN, Sorbonne Paris North University, France\\
$^3$University of Naples Federico II, Italy\\
$^4$Potassco Solutions, Germany\\
\emails
\{yifan.he1,mit\}@unsw.edu.au, mittelmann@lipn.univ-paris13.fr, 
aniello.murano@unina.it, abdallahs@gmail.com
}
\begin{document}

\maketitle
\begin{figure*}
\footnotesize
\begin{center}
\begin{verbatim}
[cr] role(p).    base(win).  base(loss).      terminal:- true(win). goal(p,100):- true(win).
     input(p,l). input(p,r).                  terminal:- true(loss). goal(p, 0):- true(loss).
[c1] legal(p,l). [c2] next(loss):- does(p,l). [c3] next(win):- does(p,r).
\end{verbatim}
\end{center}
\caption{GDL description of a simple game. $c_{1},..,c_{r}$ are rule numbers, with $c_{r}$ containing all rules other than $c_{1},c_{2}$ and $c_{3}$. $c_{1}$ says $l$ is a legal action for $p$. $c_{2}$ says that $\gdl{loss}$ will always hold after ~$p$ does $l$, while $c_{3}$ says that $\gdl{win}$ would hold if~$p$ could do $r$ instead.}
\label{fig:gdl}
\end{figure*}
\begin{abstract}
    The Game Description Language (GDL) is a widely used formalism for specifying the rules of general games. Writing correct GDL descriptions can be challenging, especially for non-experts. Automated theorem proving has been proposed to assist game design by verifying if a GDL description satisfies desirable logical properties. However, when a description is proved to be faulty, the repair task itself can only be done manually. Motivated by the work on repairing unsolvable planning domain descriptions, we define a more general problem of finding minimal repairs for GDL descriptions that violate formal requirements, and we provide complexity results for various computational problems related to minimal repair. Moreover, we present an Answer Set Programming-based encoding for solving the minimal repair problem and demonstrate its application for automatically repairing ill-defined game descriptions.
\end{abstract}

\section{Introduction}

The Game Description Language (GDL) has been developed
as a lightweight knowledge representation formalism for describing the rules of arbitrary finite games~\cite{genesereth2005general}. It is used as the input language for general game-playing (GGP) systems, which
can learn to play any new game from the mere rules without human intervention, thus exhibiting a form of general intelligence. 

However, GDL can also model ill-defined games. For instance, a game may end up in a state where some players have no legal moves. In other cases, a game may have no sequence of joint actions that allows a player to win, resulting in an unfair game. Additionally, some games may run indefinitely, making it impossible to determine a winner.

To ensure that a GDL description is usable for the GGP competitions, the notion of \textit{well-formed} descriptions has been proposed \cite{genesereth2014general}. A well-formed game should ensure that each player has at least one legal action in any non-terminal state, the game must terminate after finitely many steps, and for each player, there is a sequence of joint moves leading to one of its winning states.

Writing correct GDL descriptions can be challenging for non-experts. Automated theorem proving has been proposed to assist game design by verifying whether a GDL game satisfies desired properties~\cite{schiffel2009automated,ruan2009verification}. But theorem provers can only verify GDL descriptions, not automatically repair them, which as of now can only be done manually.

Motivated by this, and in line with recent interest and work on automatically repairing planning domain descriptions~\cite{gragera2023planning}, in this paper, we introduce the more general problem of \textit{automatically repairing game descriptions\/} given in GDL that do not comply with any given formal requirements, such as well-formedness. We consider a repair as a set of modifications to the \emph{legality} and \emph{game evolution} rules. To avoid ``redesigning'' the original game, we focus on minimal repairs. We consider game properties expressed in Game Temporal Logic (GTL)~\cite{thielscher2010temporal}, a logic similar to the Linear Temporal Logic on finite traces (LTLf)~\cite{bansal2023model} 
with only the temporal operator ``weak next'', supporting both positive and negative properties that a game should, or should not, satisfy.

Our contribution is manifold. We provide a formal definition of game repair along with theoretical results on the minimal repair problem: sufficient conditions on when certain repair problems have or do not have solutions, and tight complexity results for different computational problems related to minimal repairs. We provide an encoding based on the Answer Set Programming (ASP) technique \emph{Guess and Check}~\cite{eiter2006towards} to solve the minimal repair problem, thus offering the first automated method for repairing GDL descriptions.
\subsubsection{Related work.}
 Our work is related to ASP-based approaches to formal model repair in various contexts, e.g.\ biological networks \cite{gebser2010repair}, service-based processes \cite{friedrich2010exception,lemos2019repairing}, \yifanm{logic programs~\cite{merhej2017repairing}}, and Petri nets \cite{ChiarielloIT24}. 
 
Another piece of related work is repairing unsolvable planning domain descriptions (PDDL)~\cite{gragera2023planning}. While PDDL and GDL differ, our work is more general: we can not only repair reachability to the goal (aka.\ winnability in GDL) but \emph{any\/} game properties expressible in GTL.

Given their syntactic and semantic similarity, ASP has been widely used for reasoning about GDL games, including for solving single or multiplayer games with ASP~\cite{thielscher2009answer,he2024solving}, and automatically verifying game-specific GTL properties with ASP~\cite{thielscher2010temporal}.




\subsubsection{Outline.} 
After providing necessary background on GDL and GTL, we define the GDL minimal repair problem and provide theoretical results in Section~\ref{sec:gdlrepair}. In Section~\ref{sec:encoding}, we present an ASP encoding to find minimal repairs. We demonstrate its use with a case study in Section~\ref{sec:exp}. We conclude in Section~\ref{sec:conclusion}. Proofs are available in the Appendix. 


\section{Preliminaries}
\label{sec:prelim}
We assume readers to be familiar with basic concepts of logic programming with negation~\cite{lloyd:founda} and Answer Set Programming (ASP)~\cite{gebser2012answer}. 

\subsection{Game Description Language}
The Game Description Language (GDL) can be used to describe the rules of any finite game with concurrent moves. GDL uses a normal logic program syntax along with the following preserved keywords used to describe the different elements of a game~\cite{genesereth2014general}\/: 
\begin{center}
    \footnotesize
    \begin{tabular}{cc}
    \toprule
    $\gdl{role}(P)$  &\!\!\!\! $P$ is a player \\
    $\gdl{base}(F)$ &\!\!\!\! $F$ is a base proposition for game positions \\ 
    $\gdl{input}(P,A)$ &\!\!\!\! Action $A$ is in the move domain of player $P$\\
    $\gdl{init}(F)$ &\!\!\!\! base proposition $F$ holds in the initial position \\
    $\gdl{true}(F)$ &\!\!\!\! base proposition $F$ holds in the current position \\
    $\gdl{legal}(P,M)$ &\!\!\!\! $P$ can do move $M$ in the current position   \\
    $\gdl{does}(P,M)$ &\!\!\!\! player $P$ does move $M$\\
    $\gdl{next}(F)$ &\!\!\!\! $F$ holds in the next position\\
    $\gdl{terminal}$ &\!\!\!\! the current position is terminal\\
    $\gdl{goal}(P,N)$ &\!\!\!\! $P$ gets $N$ points in the current position\\
    \bottomrule
    \end{tabular}
\end{center}
There are further restrictions for a set of GDL rules to be \textbf{valid}~\cite{genesereth2005general}\/: $\gdl{role}$ can appear only in facts; $\gdl{init}$ and $\gdl{next}$ can only appear as heads of rules; $\gdl{true}$ and $\gdl{does}$ can only appear in rule bodies. Moreover, $\gdl{init}$ cannot depend on $\gdl{true}$, $\gdl{does}$, $\gdl{legal}$, $\gdl{next}$, $\gdl{terminal}$, or $\gdl{goal}$ while $\gdl{legal}$, $\gdl{terminal}$, and $\gdl{goal}$ cannot depend on $\gdl{does}$. Finally, valid game descriptions must be \emph{stratified\/} and \emph{allowed\/}---such normal logic programs always admit a \textbf{finite grounding} and a unique stable model/answer set~\cite{lloyd:founda,gebser2012answer}. A valid description of a very simple game with a single player is given in Fig.~\ref{fig:gdl}.

Henceforth, we abbreviate ``GDL Description'' as $\gd$. 

A valid $\gd$ $G$ over ground terms~$\Sigma$ can be interpreted as a multi-agent state transition system\/: Let $\beta=\{f\in\Sigma\, |\, G \models base(f)\}$ be the \textbf{base propositions} and $\gamma=\{(p,a)\in\Sigma\times\Sigma\, |\, G \models input(p,a)\}$ the \textbf{move domain} for the players. Suppose that $S=\{f_{1},\myDots,f_{n}\} \subseteq \beta$ is any given position and $A=\{p_{1},\myDots,p_{k}\} \rightarrow \Sigma$ any function that assigns to each of $k\geq 1$ players an action from their move domain. In order to use the game rules~$G$ to determine the state update, $S$~needs to be encoded as a set of facts using keyword $\gdl{true}$\/: $S^{true}=\{true(f_{1}).,\myDots,~true(f_{n}).\}$ and the joint action~$A$ by a set of facts using keyword $\gdl{does}$\/: $A^{does}=\{does(p_{1},A(p_{1})).,~...,~does(p_{k},A(p_{k})).\}$. 

\begin{definition}[\citeauthor{schiffel2010multiagent} \citeyear{schiffel2010multiagent}]
\label{def:gdl}
The \emph{semantics\/} of a valid GDL description~$G$ is the state transition system
\begin{itemize}
\item $R=\{p \in\Sigma \,|\, G \models\gdl{role}(p)\}$ (player names) 

\item $\Sinit=\{f\in\beta\,|\,G \models\gdl{init}(f)\}$ (initial state) 

\item $T=\{S\subseteq\beta\,|\,G \cup S^{true}\models\gdl{terminal}\}$  (terminal states)

\item $l=\{(p, a, S)\,|\, G \cup S^{true}\models\gdl{legal}(p,a)\}$ (legal moves) 

\item $u(A, S)\! =\! \{f \in \beta\,|\, G \cup S^{true} \cup A^{does}\! \models\! \gdl{next}(f)\}\!$ (update) 

\item $g = \{(p, v, S) \,|\, G \cup S^{true} \models \gdl{goal}(p,v)\}$ (goal value)
\end{itemize}





\end{definition}

Let $\gamma(p)=\{a \mid (p,a) \in \gamma\}$  be the \textbf{\em move domain of $p$}, and $B\!=\!\{S \subseteq \beta \,|\,\exists p\in R.\, \forall a \in \gamma(p).\, G \cup S^{true} \not\models legal(p,a)\}$ be all states of $G$ in which 
some player has no legal action.
We represent a \emph{valid sequence\/} of $n$ steps starting at the initial state $S_{0}$ as $S_{0}~\xrightarrow[]{A_{0}}~S_{1}~\xrightarrow[]{A_{1}}~...~\xrightarrow[]{A_{n-1}}~S_{n}$
where $S_{i}\notin T$ for $i<n$ and all moves are legal in the corresponding state, i.e.\ $(p, A(p), S_{i}) \in l$ for each $p\in R$. 
Valid sequences are sometimes abbreviated as $(S_{0},S_{1},\myDots,S_{n})$. A sequence \emph{terminates in $n$ steps\/} if $S_{n} \in T$.  E.g.\ in the game in Fig.~\ref{fig:gdl}, the only terminating sequence is $\{\}~$\raisebox{-3pt}{$\stackrel{\yifanm{(p,l)}}{\rightarrow}$}~$\{\gdl{loss}\}$.
A sequence \emph{ends in a non-playable state after $n$ steps} if $S_{n} \in B \setminus T$. 

A sequence $(S_{0},\myDots,S_{m})$ is \textbf{{\em n}-max} if $m = n$ or $S_{m} \in T \cup  B$ with $m<n$~\cite{thielscher:AIJ12}. 
The \emph{\bf horizon} of a game is the smallest $n$ such that all $n$-max sequences terminate, end in a non-playable state, or enter a repeated state of the sequence. 

\citeauthor{genesereth2014general}~(\citeyear{genesereth2014general}) define a valid $\gd$ to be \emph{well-formed\/} if\/:
\begin{enumerate}
    \item For each $p$, there is a terminating sequence $(\Sinit,\myDots,S_{m})$ with $G \cup S_{m}^{true} \models goal(p,100)$ (\textbf{weak winnability}).
    \item No sequence ends in a non-playable state (\textbf{playability}).
    \item All play sequences terminate (\textbf{termination}). 
\end{enumerate}
Our $\gd$ in Fig.~\ref{fig:gdl} is not well-formed because it is not weakly winnable since action $r$ is not legal for the player. A standard requirement for $\gd$'s used in the GGP competition is to be well-formed~
{\cite{genesereth2014general}}.

We define a GD to be \textbf{{\em n}-well-formed} if it is well-formed and the game has a horizon of \emph{no more than} $n$.
\subsection{The Game Temporal Logic}

The Game Temporal Logic (GTL) \cite{thielscher2010temporal} is defined over $\gd$s and allows the formulation of properties that involve finitely many successive game states. 
\begin{definition}
     The set of \emph{GTL formulas\/} over a $\gd$ $G$ is\/:
\[
  	\varphi  ::= q \mid \varphi \land \varphi \mid \neg \varphi \mid \X \phi
\]
where $q$ is an atom $\textbf{true(f)}$ for some $f \in \beta$, or $\textbf{legal(p,a)}$ for some $(p,a) \in \gamma$, or any other ground atom of predicates of $G$ but which is neither $\textbf{init}$ nor $\textbf{next}$ and which does not depend, in $G$, on \textbf{does}. 
$\X$ is the temporal operator ``next''. Standard connectives like $\vee$ and $\rightarrow$ are defined as usual. 

The \emph{degree\/} $deg(\phi)$ of a formula $\phi$ is the maximal ``nesting'' of the unary operator $\X$ in $\phi$.
\end{definition}
\begin{definition} 
    Let $G$ be a valid $\gd$ and $\phi$ a GTL formula with $deg(\phi)=n$. We say \emph{$G$ satisfies $\phi$} (written $G \models_{t} \phi$) iff for all $n$-max sequences $(S_{0},\myDots,S_{m})$ we have $G,(S_{0},\myDots,S_{m}) \models_{t} \phi$ as the following inductive definition\/:
    \begin{align*}
    G,(S_{i},\myDots,S_{m}) &\models_{t} q \hspace{4.3em} \text{ iff } G \cup S_{i}^{true} \models q 
    \\
    G,(S_{i},\myDots,S_{m}) &\models_{t} \neg \phi \hspace{3.5em} \text{ iff } G, (S_{i},\myDots,S_{m})\not\models_{t} \phi \\
    G,(S_{i},\myDots,S_{m}) &\models_{t} \phi_{1} \wedge \phi_{2} \hspace{1.5em} \text{ iff } G, (S_{i},\myDots,S_{m})\models_{t} \phi_{1} \\
    &  \hspace{6.5em} \text{ and } G, (S_{i},\myDots,S_{m})\models_{t} \phi_{2}\\
    G,(S_{i},\myDots,S_{m})&\models_{t} \X \phi \text{ iff } G,(S_{i+1},\myDots,S_{m})\models_{t} \phi \, (i<m) \\
    G,(S_{m}) & \models_{t} \X \phi  \hspace{3.4em}  \text{always} 
    \end{align*}
\end{definition}

The \emph{GTL model checking task} decides if $G\models_{t} \phi$ holds, where $\phi$ is a GTL formula. Due to its syntactic and semantic similarities to GDL, ASP is a natural choice for GTL model checking. The encoding consists of (1)~an ASP encoding of the GTL formula, (2)~an ASP representation of the game rules, and (3)~an action generator in ASP. \yifanm{Detailed description and examples of the encoding are available in the original article~\cite{thielscher:AIJ12}.}
\begin{definition}[ASP encoding of GTL~\cite{thielscher:AIJ12}] \label{def:aspgtl}
Let $\phi$ be a GTL, $i \in \mathbb{N}$, and $\eta(\phi,i)$ a function that gives a unique atom of arity 0 for every $\phi$ and~$i$. The \emph{encoding $\enc(\phi,i)$ of $\phi$ at level $i$} 
is recursively defined (below, $q(\vec{t\,})$ means the predicate $q$ with arguments $\vec{t}$)\/:
\begin{itemize}
    \item $\enc(q(\vec{t\,}),i) = \{\eta(q(\vec{t\,}),i) \texttt{:-}\, q(\vec{t},i).\}$ 
    \item $\enc(\neg \phi,i) = \{\eta(\neg \phi,i) \texttt{:-}\,~not~\eta(\phi,i).\} \cup \enc(\phi,i)$
    \item $\enc(\phi_{1} \wedge \phi_{2},i) = \{\eta(\phi_{1} \wedge \phi_{2},i) \texttt{:-}\, \eta(\phi_{1},i),\eta(\phi_{2},i).\} \\ \hspace*{7.9em} \cup \enc(\phi_{1},i) \cup \enc(\phi_{2},i)$ 
    \item $\enc(\X \phi,i)=\{\eta(\X \phi,i) \texttt{:-} terminal(i)., \\ \hspace*{4.9em} \eta(\X \phi,i) \texttt{:-} no\_play(i).,\\     \hspace*{4.9em} \eta(\X \phi,i) \texttt{:-} \eta(\phi,i + 1).\} \cup \enc(\phi,i + 1)$
\end{itemize}
\end{definition}
\begin{definition}[\citeauthor{thielscher2010temporal} \citeyear{thielscher2010temporal}] \label{def:temporal}
The \textbf{Temporal-Extension} with horizon $n\geq 0$ of a valid $\gd$ $G$ (denoted $\pext^{n}(G)$) is defined as $\pext^{n}(G)=\bigcup_{0 \leq i \leq n} \{c^{i}\mid c \in G\}$ where $\cdot^{i}$ replaces each occurrence of

\begin{itemize}
    \item  $\gdl{init}(f)$ by $\gdl{true}(f,0)$; and $\gdl{next}(f)$ by $\gdl{true}(f,i+1)$.
    
    \item $\gdl{q}(\vec{t}\,)$ by $\gdl{q}(\vec{t},i)$ if $q \notin \{init,next\}$ is a predicate symbol that depends on $\gdl{true}$ or $\gdl{does}$.
\end{itemize}
    
\end{definition}
 
\begin{definition}[\citeauthor{thielscher:AIJ12} \citeyear{thielscher:AIJ12}]  \label{def:aspgenerator}
The \textbf{action generator} requires each player to make a legal action at each playable, non-terminal state up to step $n$\/: $\plegal^{n}$ consists of the following clauses $P_{i}$ for each $0 \leq i \leq n$.
   \begin{itemize}
       \item $no\_play(i)\texttt{:-}\,role(R),not~legal(R,A,i):\!input(R,A).$
       \item $end(i) \texttt{:-}\,~~1~\{terminal(i);~no\_play(i)\}.$
       \item  $end(i)\,\texttt{:-}\  end(i-1).$ for each $i > 0$.
       \item $1\{does(R,A,i):input(R,A)\}1 \texttt{:-}\, not~end(i), role(R).$
       \item  $\texttt{:-}\,~does (R, A , i), not~legal(R, A , i).$
   \end{itemize}
\end{definition}


Verifying $G \models_{t} \phi$ is then achieved by checking that there is no $deg(\phi)$-max with $G,(\Sinit,\myDots,S_{m}) \not\models_{t} \phi$.
\begin{theorem}[\citeauthor{thielscher:AIJ12} \citeyear{thielscher:AIJ12}] 
\label{theorem:basecase}
Let $G$ be a valid $\gd$, and $\phi$ be a GTL formula with $deg(\phi)$=$n$. Then, $G \models_{t} \phi$ iff the program $\plegal^{n} \cup \pext^{n}(G) \cup \enc(\phi,0) \cup \{\texttt{:-}\, \eta(\phi,0)\}$ has no stable model.
\end{theorem}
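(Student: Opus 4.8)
The plan is to prove the biconditional by establishing a tight correspondence between the stable models of the program and the $n$-max sequences of $G$, and then reading off the truth of $\phi$ from the designated atom $\eta(\phi,0)$. Write $P=\plegal^{n}\cup\pext^{n}(G)\cup\enc(\phi,0)$ for the program without the final constraint. First I would argue that the stable models of $\plegal^{n}\cup\pext^{n}(G)$ correspond to the $n$-max sequences $(\Sinit,\myDots,S_{m})$ of $G$: the action generator guesses, at each playable non-terminal step, one move per role from the move domain (the cardinality choice rule) that is forced to be legal (the integrity constraint on $\gdl{does}$), while $no\_play(i)$ and the forward-propagating $end(i)$ detect exactly the states in $T\cup B$ and freeze the unfolding thereafter, matching the definition of $n$-max. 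Since $G$ is valid, hence stratified and allowed, $\pext^{n}(G)$ deterministically computes $S_{i+1}=u(A_{i},S_{i})$ from the guessed moves via the $\gdl{next}(f)\mapsto\gdl{true}(f,i+1)$ rewriting and reproduces $\gdl{terminal}(i)$, $\gdl{legal}$, and every level-indexed atom $q(\vec{t},i)$ faithfully to Definition~\ref{def:gdl}. This is essentially the soundness and completeness of the standard GGP-in-ASP unfolding.

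The second step is a correctness lemma for the formula encoding: for the stable model $M$ associated with an $n$-max sequence $(\Sinit,\myDots,S_{m})$ and every subformula $\psi$ of $\phi$ placed at a level $i\leq m$, I claim $\eta(\psi,i)\in M$ iff $G,(S_{i},\myDots,S_{m})\models_{t}\psi$. I would prove this by structural induction on $\psi$. The atomic case uses the faithfulness of $q(\vec{t},i)$ from the first step; the $\neg$ and $\wedge$ cases are immediate from the defining rules of $\enc$ and the induction hypothesis. The interesting case is $\X\psi'$: when $i<m$ we have $S_{i}\notin T\cup B$, so $terminal(i)$ and $no\_play(i)$ are false and $\eta(\X\psi',i)$ holds iff $\eta(\psi',i+1)$ does, which by the hypothesis (at level $i+1\leq m$) matches $G,(S_{i+1},\myDots,S_{m})\models_{t}\psi'$; when $i=m$ the state is terminal or non-playable, so one of the first two rules of $\enc(\X\psi',i)$ fires and $\eta(\X\psi',i)$ is true, matching the clause $G,(S_{m})\models_{t}\X\psi'$ that holds always.

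Finally I would combine the two steps. Adding $\enc(\phi,0)$ is a conservative, stratified extension: the $\eta$ atoms are fresh and are defined by a program stratified by formula structure and level, so each stable model of $\plegal^{n}\cup\pext^{n}(G)$ extends uniquely, with $\eta(\phi,0)$ fixed by the lemma at $i=0\leq m$. The final constraint $\texttt{:-}\,\eta(\phi,0)$ then deletes precisely those models in which $\phi$ holds at level $0$. Hence $P\cup\{\texttt{:-}\,\eta(\phi,0)\}$ has a stable model iff some $n$-max sequence satisfies $G,(\Sinit,\myDots,S_{m})\not\models_{t}\phi$, i.e.\ iff $G\not\models_{t}\phi$; contrapositively, it has no stable model iff $G\models_{t}\phi$.

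The main obstacle is the first step: making the correspondence precise at the boundary. Two points need care. First, $deg(\phi)=n$ must be exploited to guarantee that $\enc(\phi,0)$ refers only to levels $0,\myDots,n$ and that no $\X$ sits at level $n$; this is what ensures, in the $i=m$ subcase of the lemma, that an $\X$ at level $i=m$ forces $m<n$ and therefore $S_{m}\in T\cup B$, so the lemma never needs a nonexistent successor of $S_{n}$. Second, the generator may guess a spurious move at level $n$ whenever $end(n)$ is false, yielding several stable models per horizon-length sequence; I would observe that such a move affects only unused level-$(n+1)$ atoms and changes neither the states $S_{0},\myDots,S_{n}$ nor the value of $\eta(\phi,0)$, so the resulting surjection onto $n$-max sequences still suffices for both directions of the existence claim.
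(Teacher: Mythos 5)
The paper does not prove this statement at all: Theorem~\ref{theorem:basecase} is imported as background from \citeauthor{thielscher:AIJ12}~(\citeyear{thielscher:AIJ12}), and the paper's own appendix proofs begin only at Theorem~\ref{theorem:exist}, so there is no in-paper proof to compare against. Judged on its own merits, your reconstruction is correct and is essentially the standard argument from the cited work: stable models of $\plegal^{n}\cup\pext^{n}(G)$ correspond (surjectively, as you rightly note, rather than bijectively) to $n$-max sequences, a structural induction shows $\eta(\psi,i)\in M$ iff $G,(S_{i},\myDots,S_{m})\models_{t}\psi$ for occurrences of subformulas at levels $i\leq m$, and the constraint $\texttt{:-}\,\eta(\phi,0)$ then eliminates exactly those models whose sequence satisfies $\phi$. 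You also correctly isolate the one genuinely delicate point: since $deg(\phi)=n$, an $\X$-subformula can only be placed at a level $i\leq n-1$, so in the case $i=m$ one necessarily has $m<n$, hence $S_{m}\in T\cup B$ and the $terminal(i)$/$no\_play(i)$ rules of $\enc(\X\psi',i)$ derive $\eta(\X\psi',i)$ unconditionally, matching the ``always'' clause of the GTL semantics without ever requiring a successor of $S_{n}$; this also explains why garbage values of $\eta$ atoms at levels beyond $m$ are harmless.
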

\yifanm{We refer to the size of a logic program as the size of the
ground program.} 
Since the size of the ASP program according to Definitions~\ref{def:aspgtl}--\ref{def:aspgenerator} is polynomial w.r.t. the size of $G$ and $\phi$, it is known that GTL model-checking is in co-NP.
\section{The GDL Repair Problem}
\label{sec:gdlrepair}
In this section, we propose a formal definition of GDL game repair and then present theoretical results on this problem.

\subsection{Problem Definition}
\label{sec:problem-def}
We first define some auxiliary notation. For a valid $\gd$ $G$ with base propositions~$\beta$ and move domain~$\gamma$ (cf. Section~\ref{sec:prelim}):
    \begin{itemize}
        \item $\mathcal{N}=\{next(f) \mid f \in \beta\}$ (the domain of $\gdl{next}$)
        \item $\mathcal{L}=\{legal(p,a) \mid (p,a) \in \gamma\}$ (the domain of $\gdl{legal}$)
        \item $\mathcal{F}=\{true(f),\,not~true(f) \mid f \in \beta\}$
        \item $\mathcal{A}=\{does(p,a),\,not~does(p,a) \mid (p,a) \in \gamma\}$
        \item $|P|=$ the total number of rules of a grounded program $P$.
        \item If $r$ is a grounded rule, then $hd(r)$ denotes the atom in the head of $r$, and $bd(r)$ the set of literals in the body of $r$.
    \end{itemize}
For a GTL formula $\phi$ and $op\in\{\wedge,\vee\}$ we define the abbreviation $nest(\phi,op,n)$ recursively as follows: 

\begin{itemize}
    \item $nest(\phi,op,0)=\phi$
    \item $nest(\phi,op,n)=\phi~op~(\X~nest(\phi,op,n-1))$.
\end{itemize}
For example, $nest(\gdl{true}(\gdl{win}),\vee,2)$ denotes the GTL formula $\gdl{true}(\gdl{win}) \vee (\X (\gdl{true}(\gdl{win}) \vee (\X \gdl{true}(\gdl{win}))))$.

    
    
    
    
    
    
For the definition of repair, we assume that the $\gd$ have been \textbf{grounded} and transformed into a simplified form according to the following definition.
\begin{definition} Suppose $G$ is a valid, grounded $\gd$.
$G$ is in \emph{\bf restricted form} iff for every rule $r\in G$ the following holds\/:
    \begin{itemize}
        \item ``$\gdl{legal}$'' does not appear in $bd(r)$.
        \item ``$\gdl{does}$'' does not appear in $bd(r)$ unless $hd(r)\in\mathcal{N}$.    
        \item If $r$ is a \emph{\bf legal rule}, i.e.\ $hd(r) \in \mathcal{L}$, then $bd(r) \subseteq \mathcal{F}$.
        \item If $r$ is a \emph{\bf next rule}, i.e.\ $hd(r)\in\mathcal{N}$, then $bd(r) \subseteq \mathcal{A} \cup \mathcal{F}$.
    \end{itemize}    
    
    
    
\end{definition}
The last two items state that for a $\gd$ in restricted form, the body of any legal (resp.\ next) rule can only have positive or negative atoms of the predicate $\gdl{true}$ (resp. $\gdl{true}$ or $\gdl{does}$).

\michaelm{Any valid $\gd$ has a finite grounding, and GDL in restricted form is as expressive as GDL as it can still express all finite perfect information games based on the same construction as by~\citeauthor{thielscher2011general}~(\citeyear{thielscher2011general}) for GDL-II. Thus, for simplicity, we only consider $\gd$ in restricted form.} Informally speaking, we are given a GD that satisfies (violates) undesired (desired, resp.) properties formulated in GTL. Then, to \emph{repair\/} a GD, we allow the following possible changes\/:
\begin{enumerate}
     \item modifying existing legal/next rules by      \begin{itemize}
         \item deleting the entire rule or some literals from its body,
         \item changing the head of a legal rule to a new atom in $\mathcal{L}$ (or the head of a next rule to a new atom in $\mathcal{N}$, resp.), or
         \item adding one or more literals in $\mathcal{F}$ to the body of a legal rule (or literals in $\mathcal{A} \cup \mathcal{F}$ to the body of a next rule);
     \end{itemize}
     \item adding a \emph{bounded\/} number of new legal/next rules.
\end{enumerate}
The goal is to make minimal changes so that the resulting $\gd$ \emph{ \bf satisfies or dissatisfies} some GTL properties. We only allow modifications to legal/next rules because changing other rules, such as the definition of base propositions, initial state, and goal, would be akin to defining a new game rather than repairing an existing one. 
To ensure that the set of possible repairs is always finite, we also assume a given bound on the number of legal/next rules that can be added.

Since the input $\gd$ is in restricted form, we may assume w.l.o.g. that it has a form $G=G_{L} \cup G_{N} \cup G_{E} \cup G_{R}$, where
\begin{itemize}
    
    \item 
    $G_{L}$ (resp. $G_{N}$) contains all the legal (resp. next) rules.

    \item 
    $G_{E}$ has a fixed number of ``empty'' rules of the form $\emptyset$.


    \item $G_{R}$ has all other rules (e.g., rules defining $init,goal,\myDots$).
\end{itemize}

$G_{E}$ is the section reserved for new legal/next rules, with     
$\emptyset$ being a ``placeholder'' symbol that does not appear elsewhere in $G$.
Let $G_{O}=G_{L} \cup G_{N}$ and $G_{C}=G_{L} \cup G_{N} \cup G_{E}$. We associate each $r \in G_{C}$ with a unique ID from the set $\mathcal{I}=\{1,\myDots,|G_{C}|\}$,
and we denote the $i$-th rule as $r_{i}$. We assume a simple order by which the IDs of the rules in $G_{L}$ are smaller than the IDs of rules in $G_{N}$, which in turn are smaller than the IDs of rules in $G_{E}$. 
A \emph{change tuple\/} formally defines an individual change to a $\gd$. Repairing a $\gd$ involves applying a set of change tuples (aka.\ \emph{repair\/}) to the $\gd$ \emph{simultaneously}. \yifanm{To conveniently describe individual changes to a $\gd$, we define the following notation:}
\begin{itemize}
    \item $Dom= (\{c\}\times (\{\emptyset\} \cup \mathcal{L} \cup \mathcal{N}))\cup (\{+,-\}\times (\mathcal{A} \cup \mathcal{F}))$
\end{itemize}
\begin{definition}[Change tuples and repairs]
\label{def:repset}
    Let $G$ be a valid game description. A \emph{\bf change tuple} on $G$ has the form $\langle i,(tp,l) \rangle\in \mathcal{I}\times\Dom$. A \emph{\bf repair} $\mathcal{R}$ is a \emph{\bf set} of change tuples. Suppose $L^{+}_{i}=\{l\,|\,\langle i,(+,l) \rangle \in \mathcal{R}\}$ and $L^{-}_{i}=\{l\,|\,\langle i,(-,l) \rangle \in \mathcal{R}\}$, for each $i \in \mathcal{I}$. The repair is \emph{\bf valid} iff for all $i \in \mathcal{I}$, all of the following hold:
    \begin{enumerate}[a)]
        \item \label{con1} $|\{h\,|\,\langle i,(c,h) \rangle \in \mathcal{R}\}| \leq 1$.
        \item \label{con2} $L_{i}^{+} \subseteq \left(\mathcal{A} \cup \mathcal{F}\right) \setminus bd(r_{i})$ and $L_{i}^{-} \subseteq bd(r_{i})$.
        \item \label{con3} If $r_{i} \in G_{L}$ and $\langle i,(c,h) \rangle \in \mathcal{R}$, then $h \in \{\emptyset\} \cup \mathcal{L}$.
        \item \label{con4} If $r_{i} \in G_{N}$ and $\langle i,(c,h) \rangle \in \mathcal{R}$, then $h \in \{\emptyset\} \cup \mathcal{N}$.
        \item \label{con5} If $hd(r_{i}) \in \mathcal{L}$ \textbf{or} $\langle i, (c, h) \rangle  \in \mathcal{R}$ for $h \in \mathcal{L}$, then $L^{+}_{i} \subseteq \mathcal{F}$.
    \end{enumerate}
    The resulting $\gd$, written $rep(G,\mathcal{R})$, after \emph{\bf applying a valid repair} $\mathcal{R}$ to $G$ is $r_{1}'\cup \myDots \cup r_{|G_{C}|}' \cup G_{R}$ where, for each  $i \in \mathcal{I}$:
    \begin{enumerate}[i)]
        \item $hd(r_{i}')\!=\!h$ if $\langle i,(c,h) \rangle \!\in\! \mathcal{R}$, otherwise $hd(r_{i}')\!=\!hd({r_{i}})$ 
        \item $bd(r_{i}')\!=\!bd(r_{i}) \cup L_{i}^{+} \setminus L_{i}^{-}$ if $hd(r'_{i}) \neq \emptyset$;\,else $bd(r_{i}')=\{\}$. 
    \end{enumerate}
        
\end{definition}
Intuitively, the change tuple $\langle i,(c,h) \rangle$ means changing $hd(r_{i})$ to $h$. Deleting rule~$i$ is achieved by setting $hd(r_{i})$ to $\emptyset$ with the change tuple $\langle i,(c,\emptyset) \rangle$. $\langle i,(+,l) \rangle$ with $l \notin bd(r_{i})$ means adding $l$ to the body of $r_{i}$ while $\langle i,(-,l) \rangle$ with $l \in bd(r_{i})$ means deleting $l$ from $r_{i}$'s body. Since a GDL rule can only have one head atom, condition~\ref{con1} must hold for a valid repair. Conditions~\ref{con3} and~\ref{con4} imply that a legal rule cannot be changed to a next rule or vice versa. Condition~ii) for the repair operation ensures that a rule $r_{i}$ in the resulting $\gd$ has an empty body when it has an empty head. 

  Adding a new rule with head $h$ is represented by $\langle i,(c,h) \rangle$ along with zero or more $\langle i,(+,l) \rangle$ tuples for some $h \neq \emptyset$ and $r_{i} \in G_{E}$. Up to $|G_E|$ new rules can thus be added. Condition~\ref{con5} requires that the body of any legal rule in the repaired $\gd$ does not contain a literal $l \in \mathcal{A}$, because $\gdl{legal}$ cannot depend on $\gdl{does}$ in a valid $\gd$. 

Since both $|G_{C}|$ as well as the domains for the base propositions and moves are fixed, the number of repairs is finite, and the space complexity of a repair is $O(|G_{C}| \cdot (|\mathcal{F}| + |\mathcal{A}|))$. 

\begin{example}
\label{example:repairset}
Let $G$ be the $\gd$ consisting of the rules in Fig.~\ref{fig:gdl} and $G_{E}=\,\{ \mbox{\tt [c4]}\,\emptyset\}$. Then, $G_{L}=c_{1}$, $G_{N}=c_{2} \cup c_{3}$, $G_{R}=c_{r}$, and $|\mathcal{I}|=|G_{C}|=4$. A valid (but not necessarily minimal) 
repair $\mathcal{R}$ of $G$ is $$\{\langle 3,(-,\gdl{does}(p,r)) \rangle,\ \langle 4,(c,legal(p,r)) \rangle\}$$ 
In words, remove $\gdl{does}(p,r)$ from the body of $c_{3}$ and change the head of rule $c_4$ (the empty rule) to $\gdl{legal}(p,r)$. Thus, $rep(G,\mathcal{R})=c_{1}   \cup c_{2} \cup \{next(win).\} \cup \{legal(p,r).\} \cup c_{r}$.  

\end{example} 
To capture 
the total cost of a repair to $G$, we consider a cost function based on the cost of each individual change\/:  
\begin{definition}[cost function]
\label{def:cost}
 Let $G$ be a valid $\gd$. A \emph{cost function} over $G$ is a mapping $cost:\mathcal{I} \times Dom \rightarrow \mathbb{N}_{>0}$. The
 \emph{cost $\cost(\mathcal{R})$ of a repair} $\mathcal{R}$ to $G$ is $\sum_{\langle i,dom \rangle \in \mathcal{R}} cost(i,dom)$. 


\end{definition}
\begin{definition}[Minimal repair problem]
\label{def:rp}
A \textbf{repair task} is a tuple $\langle G,\Phi^{+},\Phi^{-},cost\rangle$, with $G$ a $\gd$, $\Phi^{+}$ and $\Phi^{-}$ sets of GTL formulas over $G$, and a cost function.

A \textbf{solution} is a valid repair $\mathcal{R}$ with $rep(G,\mathcal{R}) \models_{t} \phi^{+}$ for all $\phi^{+} \in \Phi^{+}$, and $rep(G,\mathcal{R}) \not\models_{t} \phi^{-}$ for all $\phi^{-} \in \Phi^{-}$.

The \textbf{minimal repair problem} (MRP) takes a repair task $\mathcal{T}$ as input and outputs the lowest-cost solution repair $\mathcal{R}$ to  $\mathcal{T}$.
\end{definition}
\begin{example}
\label{example:repair}
Consider $\Phi^{+}\!=\!\{\psi_{end}(n),\psi_{play}(n)\}$ and $\Phi^{-} \!=\!\{\psi_{loss}(p,n) \mid G\models role(p)\}$ for some $\gd$ $G$ and $n>0$, then the answer to the MRP $\langle G,\Phi^{+},\Phi^{-},cost\rangle$ is the lowest-cost repair to make $G$ to be n-well-formed, where
\begin{itemize}
    \item $\psi_{end}(n) \ \DefMath\ \ nest(terminal,\vee,n)$
    \item $\psi_{lg}(p)=\bigvee_{a \in \gamma(p)} legal(p,a)$
    \item $\psi_{play}(n) \ \DefMath\ \ nest(terminal \vee \bigwedge_{G \models role(p)} \psi_{lg}(p),\wedge,n)$
    \item $\psi_{loss}(p,n) \ \DefMath\ \ nest(\neg terminal \vee \neg goal(p,100),\wedge,n)$
\end{itemize}
$\psi_{end}(n)$ requires all $n$-max sequences to terminate or end in a non-playable state. $\psi_{play}(n)$ requires all players to have some legal actions per non-terminal state. $\psi_{loss}(p,n) \in \Phi^{-}$ ensures some $n$-max sequence terminate with $goal(p,100)$.

For the $\gd$ in Example~\ref{example:repairset}, if $n\!=\!1$ and $cost(i,dom)\!=\!1$ uniformly for all $i \!\in \!\mathcal{I}$ and $dom \!\in\! Dom$, the repair with the change tuple $\langle 4, (c,legal(p,r) \rangle$ of cost~1 is a solution to the MRP, and $p$ can now weakly win the game (by playing $r$).
\end{example}
\subsubsection{Existence of a Solution Repair.} Repair tasks (or MRPs) may have no solutions, for example, if the GTL formulas are contradictory, or when $|G_{E}|$ is too small to allow for a repair. The following theorem states sufficient conditions to ensure that a game can always be repaired to be n-well-formed for any $n>0$.
\begin{theorem}
\label{theorem:exist}
Let $G$ be a $\gd$ and $R$ be the set of players in $G$. For any $n>0$, there exists a repair $\mathcal{R}$ on $G$ such that $rep(G,\mathcal{R})$ is n-well-formed if \textbf{all} of the following hold:
    \begin{itemize}
        \item $G \cup \Sinit^{true} \not\models \gdl{terminal}$, and for all $p_{i} \in R$, $|\gamma(p_{i})| \geq 2$.
        \item For each $p_{i} \in R$, there exists a state $S_{i} \subseteq \beta$ such that $G \cup S^{true}_{i} \models \gdl{goal}(p_{i},100) \wedge \gdl{terminal}$.
        \item $|G_{L}| + |G_{E}| \geq 2 \cdot |R|$ and $|G_{N}| + |G_{E}| \geq |\mathcal{N}| \cdot |R|$ as well as $|G_{L}| + |G_{N}| + |G_{E}| \geq (2+|\mathcal{N}|) \cdot |R|$.
    \end{itemize}
\end{theorem}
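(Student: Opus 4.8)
\begin{sketch}[Proof idea]
The plan is to prove the statement constructively: given a $\gd$ $G$ satisfying the three hypotheses, I will exhibit a valid repair $\mathcal{R}$ whose result $G'=rep(G,\mathcal{R})$ terminates in a single step, routing every legal joint action from the (unchanged) initial state to one of the terminal winning states $S_1,\ldots,S_{|R|}$ guaranteed by the second hypothesis. Since only legal and next rules are touched, the initial state, the terminal and goal rules, the base propositions $\beta$ and the move domain $\gamma$ are all inherited from $G$; because $terminal$, $goal$ and $init$ depend in a valid $\gd$ neither on $legal$ nor on $next$, their evaluation on any fixed state is unchanged. In particular the initial state stays non-terminal by the first hypothesis, so there is ``room'' to move once before terminating.

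First I would fix, for each player $p_i$, two distinct actions $a_i^0,a_i^1\in\gamma(p_i)$ (available since $|\gamma(p_i)|\ge 2$) and install the unconditional legal facts $legal(p_i,a_i^0)$ and $legal(p_i,a_i^1)$. Making both actions legal in every state immediately yields \emph{playability}, and restricting the legal rules to exactly these $2|R|$ facts (deleting any surplus original legal rules) pins the reachable joint-action space down to $\{a_i^0,a_i^1\}$ per player, which is what lets me later control \emph{all} transitions. Then I would design the next rules so that $G'$ reaches, in one step, exactly one winning state $S_i$ for each joint action. The naive attempt -- one group of rules per player firing on $does(p_i,a_i^1)$ -- fails, because when several players play their $1$-action several groups fire simultaneously and the successor becomes the \emph{union} of the corresponding $S_i$'s, which need be neither terminal nor winning.

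The crux is therefore a clean \emph{partition} of the $2^{|R|}$ joint actions into $|R|$ mutually exclusive classes, class $i$ driving the state to $S_i$, realizable with at most $|\mathcal{N}|\cdot|R|$ next rules. I would use a priority (``first $1$-vote'') encoding: for $i<|R|$, class $i$ is ``$p_1,\ldots,p_{i-1}$ play $a^0$ and $p_i$ plays $a^1$'', implemented for each $f\in S_i$ by $next(f)\,\texttt{:-}\,does(p_1,a_1^0),\ldots,does(p_{i-1},a_{i-1}^0),does(p_i,a_i^1)$; and class $|R|$ is ``$p_1,\ldots,p_{|R|-1}$ all play $a^0$'', implemented by $next(f)\,\texttt{:-}\,does(p_1,a_1^0),\ldots,does(p_{|R|-1},a_{|R|-1}^0)$ for each $f\in S_{|R|}$. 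Folding the all-$a^0$ default into the last class by leaving $p_{|R|}$ unconstrained there is exactly what keeps the budget $\sum_i|S_i|\le|\mathcal{N}|\cdot|R|$ met rather than exceeded. A short case analysis on the index of the first player voting $a^1$ shows these classes are exhaustive and pairwise disjoint, so for every legal joint action exactly one group fires and the successor equals the intended $S_i$, which is terminal by hypothesis (note the bodies contain only ground $does$ literals, so the successor depends on the joint action alone). Hence every play sequence terminates after one step -- horizon $1\le n$, giving \emph{termination} -- and for each $p_i$ the joint action in which only $p_i$ votes $a^1$ (resp.\ the all-$a^0$ action for $p_{|R|}$) reaches $S_i$ with $goal(p_i,100)$, giving \emph{weak winnability}; thus $G'$ is $n$-well-formed.

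The remaining obstacle -- and the place where the three size inequalities are actually consumed -- is to verify that this target rule set can be \emph{realized} as a valid repair given that the placeholder section $G_E$ is shared between new legal and new next rules. I would phrase this as a small feasibility (transportation) argument: a legal fact may occupy a position in $G_L$ or $G_E$ only (a next position cannot acquire a legal head, by condition~\ref{con4}), a next rule a position in $G_N$ or $G_E$ only (a legal position cannot acquire a next head, by condition~\ref{con3}), and we must place $2|R|$ legal facts and $\sum_i|S_i|\le|\mathcal{N}|\cdot|R|$ next rules. The two demands are met individually by $|G_L|+|G_E|\ge 2|R|$ and $|G_N|+|G_E|\ge|\mathcal{N}|\cdot|R|$, and the competition for the shared $G_E$ slots is resolved precisely by $|G_L|+|G_N|+|G_E|\ge(2+|\mathcal{N}|)|R|$; these are exactly the feasibility conditions of the associated two-demand/three-supply transportation problem. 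It then remains to do the bookkeeping that the chosen change tuples satisfy conditions~\ref{con1}--\ref{con5} of Definition~\ref{def:repset} and that $G'$ is still valid: the new legal rules are facts and the new next rules carry only positive $does$ literals, so no negative dependency or unsafe variable is introduced and stratification and allowedness are preserved. I expect getting the partition to fit the next-rule budget together with this allocation/feasibility step to be the main technical content; the three well-formedness checks are then routine.
\end{sketch}
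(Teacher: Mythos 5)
Your proposal is correct and follows essentially the same route as the paper's own proof: reduce to 1-well-formedness, install two unconditional legal facts per player, and use exactly the same ``first player to deviate'' priority partition of the $2^{|R|}$ joint actions (with the all-default class folded into the last player's winning state) to route every joint action to one of the terminal winning states $S_i$, within the budget of $2|R|$ legal and $|\mathcal{N}|\cdot|R|$ next rules. Your transportation-style feasibility argument for allocating these rules across $G_L$, $G_N$, and $G_E$ is a correct (and slightly more explicit) justification of the allocation step that the paper simply asserts.
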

\begin{sketch}[Proof (Sketch).] We show that the conditions ensure that $G$ can always be repaired to be 1-well-formed, hence, it is n-well-formed for any $n > 0$. Note that \emph{any\/} $\gd$ in restricted form with at most $|G_{L}|+|G_{E}|$ legal, at most $|G_{N}|+|G_{E}|$ next, and at most $|G_{C}|$ legal or next rules and with the same set of base propositions, move domain, and other rules $G_{R}$, are obtainable from $G$ by a valid repair. If all conditions hold, there is a $\gd$ with $2 \cdot |R|$ legal rules and at most $|\mathcal{N}|\cdot |R|$ next rules so that all \textbf{1-max} sequences end in one of the $|R|$ terminal states, with the $i$-th one a winning state for $p_{i}$.
\end{sketch}

A repair task may have no solution if $|G_{E}|$ is chosen to be too small. The following theorem provides sufficient conditions to confirm that a repair task has no solution, even for arbitrarily large $|G_{E}|$\/: If $\psi_{end}(n) \!\in\! \Phi^{+}$ for some $n$ (i.e., the desired maximal horizon of the resulting game so that all $n$-max sequences terminate or end in a non-playable state) and the repair task has no solution with $|G_{E}|$ reaching some polynomial bound, increasing $|G_{E}|$ (i.e., allow for more new rules to be added) is not enough to make the task solvable.
\begin{theorem}
\label{theorem:unsat}
Let $\mathcal{T}\!=\!\langle G,\Phi^{+},\Phi^{-},cost\rangle$ be a repair task with $\psi_{end}(n) \!\in\! \Phi^{+}$ for some $n$. Let $K\!=\!max(1,|\Phi^{-}|)$ and $R$  the set of players in $G$.
If $\mathcal{T}$ has no solution and $|G_{E}|\!\geq\! K \cdot (n+1) \cdot |\mathcal{L}|+ n \cdot K\cdot |\mathcal{N}| \cdot (|R| + 1)$, then any repair task $\mathcal{T'}\!=\!\langle G',\Phi^{+},\Phi^{-},cost'\rangle$, where $G'\!=\!G_{L} \cup G_{N} \cup G_{E}' \cup G_{R}$ \yifanm{and $|G_{E}'| \geq |G_{E}|$}, also has no solution. 
\end{theorem}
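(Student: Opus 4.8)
The plan is to prove the statement in contrapositive form: whenever the larger task $\mathcal{T}'$ has a solution, so does $\mathcal{T}$. Since solution existence is independent of the cost function (a cost only ranks solutions, it never rules one out), it suffices to show that any solution $\mathcal{R}'$ of $\mathcal{T}'$ can be \emph{compressed} into a valid repair $\hat{\mathcal{R}}$ of $G$ that uses at most $|G_E|$ empty slots and yields a $\gd$ $\hat{G}$ with $\hat{G}\models_{t}\phi^{+}$ for all $\phi^{+}\in\Phi^{+}$ and $\hat{G}\not\models_{t}\phi^{-}$ for all $\phi^{-}\in\Phi^{-}$. (Note that the opposite, ``monotone'' direction is trivial; the content of the theorem is exactly that solvability \emph{saturates} at the bound.) Because $|G_E|$ already meets the stated polynomial bound, exhibiting such an $\hat{\mathcal{R}}$ contradicts the assumption that $\mathcal{T}$ has no solution.

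First I would establish a \emph{horizon lemma}. Since $\psi_{end}(n)\in\Phi^{+}$, the repaired game $G''=rep(G',\mathcal{R}')$ satisfies $G''\models_{t}\psi_{end}(n)$, and unfolding $nest(terminal,\vee,n)$ shows that every valid play of $G''$ reaches a state in $T\cup B$ within $n$ steps. Consequently, for a GTL formula of \emph{arbitrary} degree the relevant $deg(\phi)$-max sequences of $G''$ all have length at most $n$, so only levels $0,\ldots,n$ of $\pext^{n}(G'')$ are ever consulted. This bounds the depth of everything that follows.

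Next I would extract a bounded family of \emph{witnesses}. For each $\phi^{-}\in\Phi^{-}$, the failure $G''\not\models_{t}\phi^{-}$ yields, via Theorem~\ref{theorem:basecase} (a stable model of the associated program), one violating play of length $\le n$; if $\Phi^{-}=\emptyset$ I keep a single default play. This gives at most $K=\max(1,|\Phi^{-}|)$ tracks, each a sequence $S_{0},\ldots,S_{m}$ with $m\le n$. I would then build $\hat{G}$ from $G_{R}$ together with \emph{state-specialised} legal and next rules that (i) reproduce, in each track state, exactly the $legal$-atoms holding there in $G''$ — one rule per track-state per atom of $\mathcal{L}$, at most $K(n+1)|\mathcal{L}|$ legal rules — and (ii) reproduce each track transition while redirecting any deviating joint action to a controlled terminal sink — an intended rule plus one per-player ``deviation'' rule for each atom of $\mathcal{N}$, at most $n\,K\,|\mathcal{N}|\,(|R|+1)$ next rules. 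These counts sum to exactly the stated bound, so after reusing the existing $G_{L}$ and $G_{N}$ slots (emptying or re-heading them as permitted by Definition~\ref{def:repset}) the remaining new rules fit in the $\le|G_E|$ empty slots, realising $\hat{G}$ as $rep(G,\hat{\mathcal{R}})$ for a valid $\hat{\mathcal{R}}$; I would also verify that $\hat{G}$ remains in restricted form, stratified and allowed, as every added rule is a legal/next rule with body in $\mathcal{F}$ resp.\ $\mathcal{A}\cup\mathcal{F}$.

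The hard part — and the step I expect to dominate the argument — is verifying that $\hat{G}$ still satisfies every $\phi^{+}\in\Phi^{+}$. The violation of each $\phi^{-}$ transfers easily, because along a track the states and (by construction) the $legal$, $terminal$ and $goal$ valuations coincide with those of $G''$, so each witness violates $\phi^{-}$ in $\hat{G}$ exactly as in $G''$. Satisfaction of $\Phi^{+}$, however, is a \emph{universal} condition over \emph{all} plays of $\hat{G}$, and GTL satisfaction is non-monotone because of negation; it is therefore not enough that on-track plays agree with $G''$, since the preserved $legal$-atoms also make deviating plays available. The crux is to argue that the sink construction forces every deviation to terminate at once while leaving the truth value of each subformula of every $\phi^{+}$ unchanged relative to a $\Phi^{+}$-satisfying play of $G''$ — this is where the $(|R|+1)$ factor is spent, and where the bookkeeping over formula degrees, the weak-next clause $G,(S_{m})\models_{t}\X\phi$ at a terminated play, and the interaction with $\psi_{end}(n)$ must be handled with care. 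Once this is in place, $\hat{\mathcal{R}}$ is a solution to $\mathcal{T}$, closing the contradiction.
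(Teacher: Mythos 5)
Your overall strategy matches the paper's: reduce to showing that a solution of the larger task can be compressed into a game with at most $K \cdot (n+1) \cdot |\mathcal{L}| + n \cdot K \cdot |\mathcal{N}| \cdot (|R|+1)$ legal/next rules, built from $K$ witness tracks of length $\leq n$, with state-specialised legal rules reproducing the $\gdl{legal}$ valuation on track states (the $K\cdot(n+1)\cdot|\mathcal{L}|$ term) and next rules reproducing the track transitions. However, there is a genuine gap at exactly the step you flag as ``the hard part,'' and it is not a matter of careful bookkeeping: your \emph{terminal sink} for deviating joint actions cannot work in general. Deviating plays are legitimate plays of $\hat{G}$ (you must keep all of $G''$'s $\gdl{legal}$ atoms on track states, since GTL formulas may mention them), so every $\phi^{+} \in \Phi^{+}$ must hold on them. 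But the truth of $\gdl{terminal}$, $\gdl{goal}$, and every other GTL-relevant atom at the sink state is determined by $G_{R}$, which a repair is not allowed to touch: there may be no state $S_{sink} \subseteq \beta$ that is even terminal under $G_{R}$, and even if one exists, a formula such as $nest(q,\wedge,n) \in \Phi^{+}$, where $q$ is an atom holding on all reachable states of $G''$ but failing at $S_{sink}$, is violated by any play entering the sink. So no choice of sink state, and no amount of care with the weak-next clause, can make the construction go through for \emph{arbitrary} $\Phi^{+}$.

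The paper avoids this by never leaving the track graph: instead of a sink, every deviating joint action at a track state $S$ is redirected to a \emph{default successor} $S_{d} \in succ(S)$, i.e.\ an existing on-track state, using a lexicographic encoding of joint-action vectors ($|R|$ next rules per atom of $\mathcal{N}$ per state, which is where the $n \cdot K \cdot |\mathcal{N}| \cdot |R|$ term is spent, plus $n \cdot K \cdot |\mathcal{N}|$ rules for the intended transitions). With this choice, every play of the constructed game is a concatenation of track transitions of $G''$ ending in $T \cup B$, hence is itself a play of $G''$; since $G'' \models_{t} \phi^{+}$ quantifies over all of its plays, every $\phi^{+}$ is inherited wholesale, with no formula-by-formula argument needed, while the $K$ witness tracks still refute each $\phi^{-}$. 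Replacing your sink by a default on-track successor is precisely the missing idea that turns your outline into a proof.
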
  
\subsection{Complexity Results}
After having formally defined the game description repair problem, we now present some complexity results related to MRP by considering different decision problems. 
\begin{itemize}
    \item $\mrpb$: Given a repair task $\mathcal{T}$ and a cost \textbf{bound} $C \in \mathbb{N}$, decide if there is a solution repair of cost at most $C$. 
    \item $\mrpt$: For a repair task $\mathcal{T}$ and change \textbf{tuple} $t=\langle i,dom \rangle$
    , decide if some lowest-cost solution repair to $\mathcal{T}$ contains $t$.
\end{itemize}
\begin{theorem}
\label{brp:sigma2}
   $\mrpb$ is  $\Sigma_{2}^{P}$-complete.
\end{theorem}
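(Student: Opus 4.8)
The plan is to prove $\Sigma_2^P$-completeness in the standard two halves, using Theorem~\ref{theorem:basecase} as the bridge between GTL satisfaction and ASP stable-model (non-)existence. For membership I would use the characterization $\Sigma_2^P = \mathrm{NP}^{\mathrm{NP}}$. On input $\langle\mathcal{T},C\rangle$, nondeterministically guess a repair $\mathcal{R}$; by the stated space bound $O(|G_C|\cdot(|\mathcal{F}|+|\mathcal{A}|))$ the object $\mathcal{R}$ has polynomial size, and checking the validity conditions~(a)--(e) of Definition~\ref{def:repset} together with $cost(\mathcal{R})\leq C$ takes polynomial time. It then remains to verify that $rep(G,\mathcal{R})$ is a solution. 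By Theorem~\ref{theorem:basecase}, for each formula $\phi$ the statement $rep(G,\mathcal{R})\models_t\phi$ holds iff a polynomial-size ASP program has no stable model; hence $\models_t$ is a co-NP predicate and $\not\models_t$ an NP predicate of $rep(G,\mathcal{R})$ and $\phi$. The guessing machine therefore issues one NP-oracle query per formula, asking whether $rep(G,\mathcal{R})\not\models_t\phi$, and accepts iff every query for $\phi^+\in\Phi^+$ returns ``no'' and every query for $\phi^-\in\Phi^-$ returns ``yes''. Since there are polynomially many formulas and each query is an NP question, the whole procedure is an NP computation with an NP oracle, placing $\mrpb$ in $\Sigma_2^P$. (A valid repair preserves restricted form and, by condition~(e), keeps $\gdl{legal}$ independent of $\gdl{does}$, so stratification and the remaining validity requirements are maintained and $rep(G,\mathcal{R})$ is a valid $\gd$, as Theorem~\ref{theorem:basecase} needs.)

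For hardness I would reduce from the $\Sigma_2^P$-complete problem QSAT$_2$: decide whether $\exists X\,\forall Y\,\psi(X,Y)$ is true for a quantifier-free Boolean matrix $\psi$. The guiding correspondence is $\exists X\leftrightarrow$ the existential choice of a repair and $\forall Y\leftrightarrow$ the universal quantification over $n$-max sequences built into $G\models_t\phi^+$, so I set $\Phi^-=\emptyset$ to give the whole problem the shape $\exists\mathcal{R}\,\forall\text{seq}$. Concretely I would build a valid single-player $\gd$ $G$ whose base propositions include a truth-value proposition for each $x_j$ and each $y_i$, plus a step counter and persistence rules. The repair encodes an assignment to $X$: for each $x_j$ there is a rule whose head a repair must reset to either $next(x_j^{\mathit{true}})$ or $next(x_j^{\mathit{false}})$, both at identical cost, and the budget $C$ is set to exactly the cost of making one such choice per variable. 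The assignment to $Y$ is produced by branching: at step $i$ the player chooses one of two actions recording $y_i$, so the $n$-max sequences (with $n$ the number of choice steps) are in bijection with the assignments to $Y$. Finally $\psi$ is written as the GTL formula $\phi^+=\X^{n}\,\widehat\psi$, where $\widehat\psi$ is the Boolean combination (available through $\wedge$ and $\neg$) of the atoms $true(x_j^{\mathit{true}})$ and $true(y_i)$ mirroring $\psi$, evaluated at the step where the full assignment is present in the state. Then $rep(G,\mathcal{R})\models_t\phi^+$ holds iff for every $Y$-sequence $\psi$ is satisfied, so a within-budget solution repair exists iff $\exists X\,\forall Y\,\psi$ is true, and the reduction is clearly polynomial.

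The main obstacle is the hardness construction, specifically confining the repair's expressive power so that it can only encode a choice of $X$. Because a repair may delete or rewrite legal/next rules and add up to $|G_E|$ new ones, a careless design would let it ``cheat'': delete legal rules so that a state becomes non-playable and the sequence ends early (making $\X$-formulas vacuously true by the last clause of $\models_t$), tamper with the persistence or $Y$-recording rules, or add rules forcing $\widehat\psi$ to read as true regardless of $Y$. The key device for ruling this out is the tight cost bound: with every relevant change costing a uniform amount and $C$ fixed to the exact cost of one head-reset per $X$-variable, any solution must spend its whole budget on the intended $X$-encoding and can make no further change, so the $Y$-branching, persistence, and evaluation machinery stay intact. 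I would still need to verify that each within-budget repair yields a valid $\gd$ and that no $n$-max sequence terminates or becomes non-playable before step $n$; these verifications, rather than the quantifier bookkeeping, are where the real work lies.
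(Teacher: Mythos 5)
Your membership argument is correct and essentially identical to the paper's: guess the polynomial-size repair, check validity and cost in polynomial time, and decide each $rep(G,\mathcal{R})\models_t\phi$ question via Theorem~\ref{theorem:basecase} with one NP-oracle call per formula, giving $\mathrm{NP}^{\mathrm{NP}}=\Sigma_2^P$.

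Your hardness half follows the paper's high-level strategy (repair $\leftrightarrow$ existential block, the universal quantification over sequences inside $\models_t$ $\leftrightarrow$ universal block, $\Phi^{-}=\emptyset$), but it has a genuine gap exactly at the point you call ``the key device.'' You claim that because $C$ equals the exact cost of one head-reset per existential variable, ``any solution must spend its whole budget on the intended $X$-encoding and can make no further change.'' This does not follow: a solution only needs cost $\leq C$, and nothing in your construction forces a reset for every $x_j$ (leaving a head untouched is permitted and, since $\widehat\psi$ reads only $true(x_j^{\mathit{true}})$, it just encodes $x_j=\bot$). Hence a repair making $n-1$ resets plus one arbitrary cheating change of the same uniform cost stays within budget. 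In your sequential single-player design there is real machinery for that one change to attack: the step counter and the $y_i$-recording/persistence rules are next rules, hence repairable, and a single head change or body-literal deletion there can force early termination or non-playability, making $\X^{n}\,\widehat\psi$ vacuously true by the last clause of the GTL semantics---precisely the cheat you were worried about. The paper closes this hole differently: every unintended change is priced at $2^{n}$, \emph{strictly above the entire budget} $C=2^{n}-1$, while the intended changes (deleting $true(x_i)$ from the persistence rule $next(x_i)\,\texttt{:-}\,true(x_i)$, at costs $2^{n-1},\ldots,2^{0}$) sum to exactly $C$; moreover $|G_E|=0$ forbids rule additions, and not deleting a literal harmlessly encodes $x_i=\bot$. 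The paper also removes the multi-step machinery altogether by using $m$ players who pick the universal assignment in one simultaneous joint move, checked by $\Phi^{+}=\{\X\, terminal\}$, so there is nothing left for a stray change to sabotage. Your gadget could be patched in the same spirit (price all unintended changes above $C$ and set $|G_E|=0$), but as stated the budget argument is incorrect, and you yourself leave the required verifications open.
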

\begin{proof} 
    (Membership) We can guess a repair $\mathcal{R}$, validate if $\mathcal{R}$ is valid with cost $\leq C$, and calculate $G'=rep(G,\mathcal{R})$ in PTIME. We can check if $G' \not\models_{t} \phi_{i}^{-}$ for all $\phi_{i}^{-} \in \Phi^{-}$ in NP time,   
    and if $G' \models_{t} \phi^{+}_{i}$ for all $\phi^{+}_{i} \in \Phi^{+}$ in co-NP time ($\theoremshort$~\ref{theorem:basecase}). Thus, $\mrpb$ is in $\text{NP}^{\text{NP}} = \Sigma_{2}^{P}$. 
    
(Hardness) As a $\Sigma_{2}^{P}$-hard problem, we reduce deciding the validity of a quantified Boolean formula (QBF) of the following form~\cite{stockmeyer1976polynomial} to an $\mrpb$\/:
    \begin{equation}
    \label{qbf}
        \Psi=\exists x_{1}\myDots x_{n} \forall y_{1} \myDots y_{m} \ \ E,  \ \ n, m \geq 1  \tag{E1}
    \end{equation}
    where $E=D_{1} \vee \myDots \vee D_{k}$, each $D_{i}=l_{i}^{1} \wedge l_{i}^{2} \wedge l_{i}^{3}$, and each $l_{i}^{j}$ is a literal over variables $\existblock \cup \univblock$ with $\existblock=\{x_{1},\myDots,x_{n}\}$ and $\univblock=\{y_{1},\myDots,y_{m}\}$. 

    Let $G$ be a $\gd$ with players $p_{1},\myDots,p_{m}$, base propositions $r_{1},x_{1},\myDots,x_{n},y_{1}^{+},\myDots,y_{m}^{+},y_{1}^{-},\myDots,y_{m}^{-}$, and the move domain of each player being $\{pos,neg\}$. $G$ contains the following rules with $G_{L}=R_{1} \cup R_{2}$, $G_{N}=R_{3} \cup \ldots \cup R_{6}$, $|G_{E}|=0$, and $G_{R}$ is $R_{7}$ plus the definitions of $\gdl{role}$, $\gdl{base}$, and $\gdl{input}$:
    \begin{itemize}
        \item $R_{1} = \bigcup_{i=1}^{m} \{[r_{i}]\ \ \gdl{legal}(p_{i},pos).\}$
        \item $R_{2}= \bigcup_{i=1}^{m} \{[r_{i+m}]\ \ \gdl{legal}(p_{i},neg).\}$
        \item $R_{3} = \bigcup_{i=1}^{n} \{[r_{i+2m}]\ \ \gdl{next}(x_{i}) \texttt{:-} \gdl{true}(x_{i}).\}$
        \item $R_{4} = \bigcup_{i=1}^{m} \{[r_{i+2m+n}]~\gdl{next}(y^{+}_{i}) \texttt{:-} \gdl{does}(p_{i},pos).\}$
        \item $R_{5} = \bigcup_{i=1}^{m} \{[r_{i+3m+n}]~\gdl{next}(y^{-}_{i}) \texttt{:-} \gdl{does}(p_{i},neg).\}$
        \item \yifanm{$R_{6} = \{[r_{4m+n+1}]~next(r_{1}).\}$}
        \item $R_{7}=\bigcup_{i=1}^{k} \{\gdl{terminal} \texttt{:-} \sigma(l_{i}^{1}),\sigma(l_{i}^{2}),\sigma(l_{i}^{3}),true(r_1).\}$
    \end{itemize}
     \ \ \ \ where $\sigma(l_{i}^{j})\!\!=\!\!\begin{cases} 
        \gdl{true}(y_{s}^{+})& \!\!\!\text{if}~l_{i}^{j}=y_{s}~\text{and}~y_{s} \in \univblock\\
        \gdl{true}(y_{s}^{-})& \!\!\!\text{if}~l_{i}^{j}=\neg y_{s}~\text{and}~y_{s} \in \univblock\\
        \gdl{true}(x_{s}) & \!\!\!\text{if}~l_{i}^{j}=x_{s}~\text{and}~x_{s} \in \existblock \\
        not~\gdl{true}(x_{s})& \!\!\!\text{if}~l_{i}^{j}=\neg x_{s}~\text{and}~x_{s} \in \existblock \\
    \end{cases}$
    
    Let $C=2^{n}\!-\!1$, $\Phi^{-}=\{\}$, and $\Phi^{+}=\{\X terminal\}$, and 
    $\cost(i,(-,l))=2^{2m+n-i}$ when $2m\!+\!1 \leq i \leq2m\!+\!n$ and $l \in bd(r_{i})$; otherwise, $cost(i,(tp,l))=2^{n}$.
    Put in words,
    removing $true(x_{i})$ from the body of the $i$-th rule of $R_{3}$  ($1 \leq i \leq n$)  costs $2^{n-i}$ (i.e., $2^{n-1},\myDots,2^{1},2^{0}$) while any other change costs $2^{n}$. We require the repaired $\gd$ to satisfy $G' \models_{t} \X terminal$ and the repair must cost $\leq 2^{n}-1$. Note that the $\mrpb$ $\langle G,\Phi^{+},\Phi^{-},cost,C \rangle$ can be constructed in PTIME w.r.t. the size of $\Psi$. $\Psi$ is valid iff the $\mrpb$ is true\/: 
    
    ``$\Rightarrow$'': If $\Psi$ is valid, there exists $X_{T} \subseteq \existblock$ such that if we set all $x \in X_{T}$ to true and all $x \in \existblock \setminus X_{T}$ to false, then for any $Y_{T} \subseteq \univblock$ if we assign all $y \in Y_{T}$ to true and all $y \in \univblock \setminus Y_{T}$ to false, some $D_{j} \in E$ must be satisfied. Consider a repair where $\langle i\!+\!2m,(-,true(x_{i}) \rangle$ is a change tuple in $\mathcal{R}$ (i.e., we remove $\gdl{true}$$(x_{i})$ from the $i$-th rule of $R_{3}$) iff $x_{i} \in X_{T}$. The repair costs $\leq \!2^{n}\!-\!1$, and $\gdl{terminal}$ holds in the next state \yifanm{for any joint action $A$. Concretely, if 
    $A^{does}\!=\!\bigcup_{i=1}^{m}\{does(p_{i},a_{i}).\}$ 
    where $a_{i}=pos$ iff $y_{i} \in Y_{T}$, then the $j$-th rule of form $R_{7}$ in the repaired $\gd$ can be activated in the next state $S_{1}$ where $S_{0}~\xrightarrow[]{A}~S_{1}$}. 
    
    ``$\Leftarrow$'': Similarly, if the $\mrpb$ has a solution $\mathcal{R}$, then assigning $x_{i}\!=\!\top$ iff $\langle i\!+\!2m,(-,true(x_{i}) \rangle \in \mathcal{R}$, gives a satisfiable partial assignment to the existential variables in~$\Psi$.   
\end{proof} 
\begin{theorem}
\label{mrp:tuple}
    $\mrpt$ is $\Delta_{3}^{P}$-complete.
\end{theorem}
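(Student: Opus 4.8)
The plan is to prove membership in $\Delta_{3}^{P}=\text{P}^{\Sigma_{2}^{P}}$ and $\Delta_{3}^{P}$-hardness separately, in both cases leveraging the $\Sigma_{2}^{P}$-completeness of $\mrpb$ (Theorem~\ref{brp:sigma2}).

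\emph{Membership.} First I would compute the cost $C^{*}$ of a lowest-cost solution using polynomially many calls to a $\Sigma_{2}^{P}$ oracle. Since every cost is a positive integer given in binary, the cost of any valid repair is bounded by $C_{\max}=\sum_{\langle i,dom\rangle\in\mathcal{I}\times Dom}cost(i,dom)$, which has polynomially many bits (there are polynomially many change tuples, each of polynomially bounded cost). Using $\mrpb$ as an oracle, a binary search over $[0,C_{\max}]$ finds the least $C$ for which $\langle\mathcal{T},C\rangle$ is a yes-instance; if even $C_{\max}$ fails then $\mathcal{T}$ has no solution, so no lowest-cost solution contains $t$ and the algorithm answers ``no''. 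Otherwise that least $C$ is $C^{*}$. A single additional oracle query then decides whether there is a \emph{valid} repair $\mathcal{R}$ with $t\in\mathcal{R}$, cost $\leq C^{*}$, $rep(G,\mathcal{R})\models_{t}\phi^{+}$ for all $\phi^{+}\in\Phi^{+}$, and $rep(G,\mathcal{R})\not\models_{t}\phi^{-}$ for all $\phi^{-}\in\Phi^{-}$. This is a minor variant of $\mrpb$ with the extra, polynomially checkable, constraint $t\in\mathcal{R}$ imposed on the guessed $\mathcal{R}$; guessing $\mathcal{R}\ni t$ and verifying the GTL properties via Theorem~\ref{theorem:basecase} keeps it in $\Sigma_{2}^{P}$. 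Because $C^{*}$ is the minimum cost, any solution of cost $\leq C^{*}$ has cost exactly $C^{*}$ and is therefore a lowest-cost solution; hence the query answers ``yes'' iff $t$ lies in some lowest-cost solution. The whole procedure is deterministic polynomial time with a $\Sigma_{2}^{P}$ oracle, so $\mrpt\in\Delta_{3}^{P}$.

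\emph{Hardness.} For hardness I would reduce from the $\Delta_{3}^{P}$-complete problem of deciding a designated bit of the lexicographically optimal witness of a true QBF $\Psi=\exists x_{1}\dots x_{n}\forall y_{1}\dots y_{m}\,E$, i.e.\ deciding whether $x_{j}$ is true in the lexicographically minimal assignment to $\existblock$ that satisfies $\forall\univblock\,E$ (the QBF-analogue of the classical ``bit of the lex-optimal satisfying assignment'' problem that is $\Delta_{2}^{P}$-complete). I would reuse the gadget from the proof of Theorem~\ref{brp:sigma2} essentially verbatim, keeping $\Phi^{+}=\{\X terminal\}$, $\Phi^{-}=\{\}$, and the weighted cost in which removing $true(x_{i})$ from the $i$-th rule of $R_{3}$ costs $2^{n-i}$ while every other change costs $2^{n}$, but dropping the cost bound $C$. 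The key observation is that any repair using even one ``expensive'' change costs at least $2^{n}$, whereas removing an arbitrary subset of the $true(x_{i})$ literals costs at most $2^{n}-1$; thus, whenever $\Psi$ is true, a lowest-cost solution is a pure-removal repair, its cost equals the binary number $b_{1}\dots b_{n}$ with $b_{i}=1$ iff $true(x_{i})$ is removed (equivalently $x_{i}$ is set true), and this number is minimized exactly by the lexicographically minimal feasible assignment $X^{*}$. Since distinct removal sets yield distinct costs, the lowest-cost solution is unique and encodes $X^{*}$. Taking $t=\langle j+2m,(-,true(x_{j}))\rangle$, the answer to $\mrpt(\mathcal{T},t)$ is ``yes'' iff $x_{j}$ is true in $X^{*}$; as the gadget is built in polynomial time, this establishes $\Delta_{3}^{P}$-hardness.

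\emph{Main obstacle.} The crux is the hardness direction, specifically singling out a \emph{canonical} lowest-cost repair and handling feasibility. I must argue that the exponentially spaced, binary-encoded cost weights force the minimizer to be a pure-removal repair and make it unique, so that ``some lowest-cost solution contains $t$'' collapses to a statement about the single, well-defined assignment $X^{*}$. The more delicate secondary point is the case where $\Psi$ has \emph{no} feasible assignment: here pure-removal repairs are never solutions, yet expensive modifications to the legal/next rules could conceivably still force $\X terminal$ (e.g.\ by deleting the $\gdl{does}$ guards from $R_{4}$/$R_{5}$ so that all $true(y^{\pm}_{s})$ hold unconditionally). I would resolve this by reducing from the bit problem under the convention (and feasibility-guaranteeing normal form) in which a feasible $\existblock$-assignment always exists, for which $\Delta_{3}^{P}$-completeness still holds; this guarantees that a pure-removal solution of cost $<2^{n}$ always exists and hence dominates every expensive repair, so the minimal solution is the canonical one as required.
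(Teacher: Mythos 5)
Your proof is correct and takes essentially the same approach as the paper: membership by binary-searching the optimal cost $C^{*}$ with polynomially many $\Sigma_{2}^{P}$-oracle calls followed by one final query, and hardness by reusing the gadget of Theorem~\ref{brp:sigma2} so that the unique lowest-cost repair encodes the lexicographically minimal satisfying assignment of a true QBF, with your feasibility fix (reducing only from true QBFs) matching the paper's reduction from TQBF. The only minor difference is in the final membership query: you invoke a variant oracle that directly imposes $t\in\mathcal{R}$ (legitimate, since that variant is still in $\Sigma_{2}^{P}$), whereas the paper keeps the oracle exactly $\mrpb$ by rescaling the cost function to $cost'(j,dom')=2\cdot cost(j,dom')$ for every tuple except $cost'(i,dom)=2\cdot cost(i,dom)-1$ for the designated one, and then asking for a solution of cost at most $2\cdot C^{*}-1$.
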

\begin{sketch}[Proof (Sketch).] (Membership) Since the maximum possible cost of a valid repair is bounded ($\defshort$~\ref{def:repset} and~\ref{def:cost}) and the answer to $\mrpb$ is monotonic, we can do a binary search for the lowest cost bound $C$ such that the $\mrpb$ returns ``yes''. Once we have that $C$, consider a new cost function: $cost'(j,dom')=2 \cdot cost(j,dom')$ if $\langle j,dom' \rangle \neq \langle i,dom \rangle$, and $cost'(i,dom) \!= \!2 \cdot cost(i,dom) \!- \! 1$. The answer to the $\mrpt$ is the same as $\mrpb$ with the new cost function and a cost bound $2 \! \cdot \! C \! - \!1$. 
Since the number of $\mrpb$ oracle calls is polynomial w.r.t. the size of the input, $\mrpt$ is in $\Delta_{3}^{P}$. 

(Hardness) For a TQBF~\cite{krentel1992generalizations} of the form~(\eqref{qbf}), we create an $\mrpt$ with the same $G$, $\Phi^{+}$, $\Phi^{-}$, and cost function as in $\theoremshort$~\ref{brp:sigma2}. The $\Delta_{3}^{P}$-hard problem ``Deciding if the lexicographically smallest satisfiable partial assignment to the existential variables of a TQBF of the form (\ref{qbf}) has $x_{n}=\top$'' can be reduced to checking if the change tuple $\langle n\!+\!2m,(-,true(x_{n})) \rangle$ is in some optimal repair.
\end{sketch}
The complexity class of the actual function problem MRP is given as follows, and the proof can be easily obtained by modifying the proof of Theorem~\ref{mrp:tuple}.
\begin{theorem}
\label{mrp:optimization}
    MRP is $F\Delta_3^{P}$-complete. 
\end{theorem}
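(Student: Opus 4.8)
The plan is to prove membership and hardness separately, using the identity $F\Delta_3^P = FP^{\Sigma_2^P}$ (polynomial time with a $\Sigma_2^P$ oracle), and reusing almost verbatim the machinery already developed for Theorems~\ref{brp:sigma2} and~\ref{mrp:tuple}. The oracle throughout will be (a slight generalization of) $\mrpb$, which is $\Sigma_2^P$-complete by Theorem~\ref{brp:sigma2}.

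For \textbf{membership}, I would first compute the optimal cost $C$ exactly as in the membership argument for $\mrpt$: since the maximum possible cost of a valid repair is bounded (Definitions~\ref{def:repset} and~\ref{def:cost}) and the answer to $\mrpb$ is monotone in the cost bound, a binary search over the (at most exponentially large, hence polynomially many bits) cost range finds the least $C$ with a ``yes'' answer, using polynomially many $\Sigma_2^P$ queries. The new ingredient is reconstructing an actual optimal repair, not just its cost. I would do this by self-reduction over the polynomially many candidate change tuples (there are $|\mathcal{I}|\cdot|Dom|$ of them, and a repair has size $O(|G_C|\cdot(|\mathcal{F}|+|\mathcal{A}|))$). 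Enumerate the candidate tuples $t_1,\ldots,t_N$ and maintain disjoint sets $I$ (forced in) and $O$ (forced out). For each $t_j$, ask whether there is a valid solution repair $\mathcal{R}$ of cost $\leq C$ with $I\subseteq\mathcal{R}$, $O\cap\mathcal{R}=\emptyset$, and $t_j\notin\mathcal{R}$; if yes put $t_j$ in $O$, otherwise in $I$. Each such query is still in $\Sigma_2^P$: one existentially guesses $\mathcal{R}$, checks validity, the cost bound, and the polynomial-time-checkable membership constraints, and then verifies the GTL requirements ($rep(G,\mathcal{R})\models_t\phi^+$ in co-NP and $rep(G,\mathcal{R})\not\models_t\phi^-$ in NP by Theorem~\ref{theorem:basecase}), all absorbed into the $\Sigma_2^P=\exists\forall$ verification. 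After $N$ queries, $I$ is a completely determined optimal repair, which we output. In total this is $FP^{\Sigma_2^P}=F\Delta_3^P$.

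For \textbf{hardness} I would give a metric reduction from the function problem ``compute the lexicographically smallest satisfying partial assignment to the existential block of a valid TQBF of form~(\ref{qbf})'', which is the natural $F\Delta_3^P$-complete function analogue of the $\Delta_3^P$-complete decision problem used in Theorem~\ref{mrp:tuple}. I reuse the exact construction $\langle G,\Phi^+,\Phi^-,cost\rangle$ of Theorem~\ref{brp:sigma2}. There, removing $\gdl{true}(x_i)$ from the $i$-th rule of $R_3$ (tuple $\langle i{+}2m,(-,\gdl{true}(x_i))\rangle$) encodes setting $x_i=\top$ at cost $2^{n-i}$, every other change costs $2^n>2^n-1=C$, and the budget $2^n-1$ permits any assignment but no other modification. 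Minimizing $\sum_{x_i=\top}2^{n-i}$ is exactly minimizing the binary number $x_1\cdots x_n$, so the \emph{unique} optimal repair consists precisely of the $R_3$-removal tuples selecting the lexicographically smallest satisfying existential assignment. The back-translation $g$ reads off, in polynomial time, which tuples $\langle i{+}2m,(-,\gdl{true}(x_i))\rangle$ occur in the returned optimal repair, recovering the full assignment; this is a valid metric reduction, establishing $F\Delta_3^P$-hardness.

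The main obstacle I anticipate is entirely on the membership side: verifying that a single optimal repair can be \emph{reconstructed} with polynomially many oracle calls while keeping every query inside $\Sigma_2^P$. The subtlety is that the forced-in/forced-out constraints of the self-reduction must be folded into the existential guess of $\mathcal{R}$ (rather than encoded through further cost-function manipulation as in Theorem~\ref{mrp:tuple}), so one must check that these side constraints are polynomial-time decidable and do not raise the quantifier complexity of the underlying GTL checks. Once this is confirmed, both directions follow by adapting the proof of Theorem~\ref{mrp:tuple}, as claimed.
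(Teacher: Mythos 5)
Your proof is correct, and both halves follow the same skeleton as the paper's: membership by binary-searching the optimal cost $C$ and then reconstructing an optimal repair tuple-by-tuple with polynomially many $\Sigma_2^P$ oracle calls, and hardness by reducing the computation of the lexicographically minimum satisfying existential assignment of a TQBF of form~(\ref{qbf}) to MRP via exactly the construction of Theorem~\ref{brp:sigma2}, reading the assignment off the unique optimal repair. The hardness halves essentially coincide. The one genuine difference is how the reconstruction queries are posed in the membership half. You fold the forced-in set $I$, the forced-out set $O$, and the exclusion of the current tuple directly into a generalized oracle (``is there a solution repair of cost $\leq C$ respecting these membership constraints?''), and you correctly note that these side constraints are polynomial-time checkable, so the query stays in $\Sigma_2^P$ --- which suffices for $FP^{\Sigma_2^P}=F\Delta_3^P$ even though the oracle is no longer literally $\mrpb$. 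The paper instead keeps the oracle exactly $\mrpb$ by rescaling costs: every cost is doubled, each committed tuple receives an odd discount ($2\cdot cost - 1$), and the shrinking bound $2C-n$ forces any witnessing repair to contain all committed tuples. Your variant buys a cleaner termination argument (once every tuple is classified into $I$ or $O$, the invariant forces the witnessing repair to equal $I$ exactly), at the price of introducing a new oracle problem; the paper's variant buys direct reuse of the $\Sigma_2^P$-complete problem of Theorem~\ref{brp:sigma2}, at the price of the cost-manipulation bookkeeping, whose final step (that the committed set is itself a solution repair) is slightly subtler to verify. Both are sound, so your proposal establishes the theorem.
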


We conclude by pointing out that solving an MRP with $\Phi^{+}=\{\}$ is simpler; e.g., fixing weak winnability only can be achieved by $\Phi^{+}=\{\}$ and the same $\Phi^{-}$ as in Example~\ref{example:repair} for some $n>0$, so that for every player, there is an $n$-max sequence that terminates with $goal(p,100)$. 
 \begin{theorem}
 \label{mrp:cheap}
     $\mrpb$ is NP-complete, 
     $\mrpt$ $\Delta_{2}^{P}$-complete, and MRP $F\Delta_{2}^{P}$-complete if $\Phi^{+}=\{\}$.
 \end{theorem}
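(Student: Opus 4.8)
The plan is to observe that dropping $\Phi^+$ removes exactly the co-NP oracle responsible for the extra quantifier alternation in Theorems~\ref{brp:sigma2} and~\ref{mrp:tuple}, so that each complexity bound collapses by one level. I would organize the proof around this single observation and then supply matching hardness reductions for the collapsed classes.

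For \textbf{membership}, first I would revisit the membership argument of Theorem~\ref{brp:sigma2}. With $\Phi^+=\{\}$, a solution repair only has to guarantee $rep(G,\mathcal{R})\not\models_t\phi^-$ for every $\phi^-\in\Phi^-$. By the definition of $\models_t$, this holds iff for each $\phi^-$ there is a $deg(\phi^-)$-max sequence that fails $\phi^-$. Such a sequence has length bounded by $deg(\phi^-)$ (polynomial in the input), and each transition is checkable in PTIME because the stratified, allowed program $rep(G,\mathcal{R})\cup S_i^{true}\cup A_i^{does}$ has a unique stable model computable in polynomial time. Hence I can nondeterministically guess the repair $\mathcal{R}$ together with one witness sequence per $\phi^-$ and verify validity, the cost bound, and all the $\not\models_t$ conditions in polynomial time, placing $\mrpb$ in NP. For $\mrpt$ I would reuse the binary-search reduction of Theorem~\ref{mrp:tuple} verbatim: the only change is that the oracle is now the NP problem $\mrpb$ rather than a $\Sigma_2^P$ one, so polynomially many calls give $P^{NP}=\Delta_2^P$; the function problem MRP similarly lands in $F\Delta_2^P$.

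For \textbf{hardness}, I would adapt the reductions already in the excerpt, keeping the same cost gadget (costs that are powers of two, so that the budget forces a consistent assignment) but removing the universal player block. Concretely, for NP-hardness of $\mrpb$ I would reduce SAT: the existential variables are encoded by repair choices exactly as in Theorem~\ref{brp:sigma2} (deleting $true(x_i)$ from a rule fixes the value of $x_i$), and the game is built so that a single $n$-max sequence reaches a winning terminal state iff the chosen assignment satisfies the formula. Taking $\Phi^-$ to be the $\psi_{loss}$ formula of Example~\ref{example:repair} makes $rep(G,\mathcal{R})\not\models_t\psi_{loss}$ equivalent to the existence of such a satisfying/winning sequence, so no adversarial player moves are needed and one player suffices. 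For $\mrpt$ I would reduce the $\Delta_2^P$-complete problem ``does the lexicographically smallest satisfying assignment of a SAT formula set $x_n=\top$?'', again mirroring Theorem~\ref{mrp:tuple}: the power-of-two costs make the minimum-cost solution repair correspond to the lexicographically extremal satisfying assignment, so asking whether the change tuple for $x_n$ is in an optimal repair decides the target bit. The $F\Delta_2^P$-hardness of MRP follows by the same modification used to pass from Theorem~\ref{mrp:tuple} to Theorem~\ref{mrp:optimization}.

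The main obstacle I anticipate is the hardness direction, specifically getting the game gadget right once the universal block is dropped. In Theorem~\ref{brp:sigma2} the player moves realized the universal quantifier and $\gdl{terminal}$ was required to hold under \emph{all} moves; here I must instead design $G$ so that satisfiability corresponds to the \emph{existence} of a single winning sequence while still respecting the restricted-form constraints and keeping exactly one (essentially passive) player. I would need to check carefully that the clause/terminal rules, now feeding a winnability property rather than a universal termination property, encode SAT rather than a trivially satisfiable DNF, and that the power-of-two cost gadget continues to force a well-defined optimal repair so that the lexicographic correspondence in the $\Delta_2^P$ reduction remains exact.
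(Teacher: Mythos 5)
Your proposal is correct and, for the most part, matches the paper's proof: the NP membership argument (guess the repair plus one witness $deg(\phi^{-})$-max sequence per formula in $\Phi^{-}$, verify validity, cost, and all $\not\models_{t}$ conditions in PTIME) is exactly the paper's, and the paper likewise obtains the $\Delta_{2}^{P}$ and $F\Delta_{2}^{P}$ bounds by asserting that the binary-search/doubled-cost machinery of Theorems~\ref{mrp:tuple} and~\ref{mrp:optimization} goes through verbatim with an NP oracle, as you do. The one genuine divergence is the NP-hardness gadget. The paper does \emph{not} reuse the body-deletion encoding of Theorem~\ref{brp:sigma2}: it builds a fresh game with $n$ players, legal facts $legal(p_{i},neg)$, next rules $next(c_{j}) \texttt{:-}\, \mu(l_{j}^{k})$ with one rule per literal of each clause (where $\mu$ maps literals to $\gdl{does}$ atoms), and the rule $goal(p_{1},100)\texttt{:-}\,true(c_{1}),\myDots,true(c_{m})$ in $G_{R}$; the affordable repairs are \emph{head changes} from $legal(p_{i},neg)$ to $legal(p_{i},pos)$ at cost $2^{n-i}$, so after repair each player has a unique legal action, the unique play sequence encodes the assignment through the $\gdl{does}$ atoms, and the CNF clause disjunction is handled natively by the three next rules per clause atom, with $\Phi^{-}=\{\neg\X goal(p_{1},100)\}$. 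Your variant---keep the body-deletion repairs of Theorem~\ref{brp:sigma2}, use one passive player, and read clause satisfaction off the state atoms $true(x_{i})$ at step~1---also works, but the CNF-versus-DNF obstacle you flag must be discharged by adding auxiliary clause-satisfaction rules (and a step-guard fluent like $r_{1}$) to $G_{R}$, which the repair cannot touch; this is routine, not a gap. Your route buys maximal reuse of Theorem~\ref{brp:sigma2}, so the lexicographic-assignment correspondence needed for the $\Delta_{2}^{P}$ and $F\Delta_{2}^{P}$ hardness carries over literally; the paper's route buys a self-contained gadget in which clause disjunction needs no auxiliary predicates, while both rely on the same power-of-two cost trick and the resulting uniqueness of the optimal repair.
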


\section{Encoding}
\label{sec:encoding}
We discuss an ASP-based approach to solving minimal repair problems. Theorem~\ref{mrp:optimization} shows that MRP is $F\Delta_{3}^{P}$-complete, beyond the expressiveness of normal logic programs with optimization statements ($F\Delta_{2}^{P}$). Disjunctive logic programs with optimization statements, however, can express $F\Delta_{3}^{P}$ problems~\cite{javier2025}.
Hence, we can encode an
MRP by a disjunctive ASP with optimization statements using the saturation technique~\cite{eiter1995computational}, and extract the change tuples as well as the repaired $\gd$ from the stable model of the program.

With the help of an existing \textit{guess and check} tool\footnote{Available at 
\url{https://github.com/potassco/guess_and_check/}}, a suitable disjunctive ASP can be automatically generated rather than having to be written from scratch. 
A \emph{guess and check} (G\&C) program~\cite{eiter2006towards} is a pair of logic programs $\langle P_{G},P_{C} \rangle$. $\mathcal{M}$ is a \emph{stable model of $\langle P_{G},P_{C} \rangle$\/} iff $\mathcal{M}$ is a stable model of $P_{G}$, and $P_{C}\cup\mathcal{H}$ is unsatisfiable, where $\mathcal{H}$ is the set of atoms of the form $holds(x)$ in $\mathcal{M}$. The Potassco software translates a G\&C program into a disjunctive program and solve it. 
To solve MRP using G\&C, we map rules in the $G_{C}$ part of a $\gd$ to ASP atoms as follows. 

\begin{definition}\label{def:mapping} Let $G$ be a valid $\gd$. \yifanm{We define a mapping $\tau$ from literals in $G_{C}$ to tuples: for each} $q \in \emptyset \cup \mathcal{A} \cup \mathcal{F} \cup \mathcal{L} \cup \mathcal{N}$,
\begin{itemize}
    \item $\tau(q)=\emptyset$ if $q=\emptyset$
    \item $\tau(q)=(ba,f)$ if $q=next(f)$, for some $f$
    \item $\tau(q)=(ac,(p,a))$ if $q=legal(p,a)$, for some $p,a$
    \item $\tau(q)=(pos,ba,f)$ if $q=true(f)$, for some $f$
    \item $\tau(q)=(neg,ba,f)$ if $q=not~true(f)$, for some $f$
    \item $\tau(q)=(pos,ac,(p,a))$ if $q=does(p,a)$, for some $p,a$
    \item $\tau(q)=(neg,ac,(p,a))$ if $q=not~does(p,a)$, for $p,a$
\end{itemize}
We define the inverse mapping from tuples to GDL atoms as $\tau^{-1}$ such that $\tau^{-1}(\tau(q))=q$ and $\Pi$ be the mapping from rules in $G_{C}$ to a \emph{\bf set} of ASP atoms of the form $ha()$ and $lit()$: 
    \begin{itemize}
        \item $\hd(i,\tau(q)) \in \Pi(G_{C})$ iff $hd({r_{i}})=q$ for $i \in \mathcal{I}$
        \item $lit(i,\tau(q)) \in \Pi(G_{C})$ iff $q \in bd(r_{i})$ for $i \in \mathcal{I}$
    \end{itemize}
$\Pi^{-1}$ is the inverse mapping such that $\Pi^{-1}(\Pi(G_{C}))=G_{C}$.
\end{definition}
\noindent
For example, $c_{2}$ in Fig.~\ref{example:repairset} is mapped to $\hd(2,(ba,loss))$ and $lit(2,(pos,ac,(p,l)))$, and $\Pi^{-1}$ maps them back to $c_{2}$. 

First, we create a program $P_{\mathcal{D}}$ defining the \textbf{domain} of: the IDs of old rules $1,\ldots,|G_{O}|$ ($o\_rule$), the IDs of rules in $G_{C}$ ($rule$), the IDs of empty rules ($e\_rule$), the polarity of literals pos/neg ($pol$), and the set of atomic propositions ($atom$), which contains $atom(ac,(p,a))$ for each $(p,a)$ in the move domain and $atom(ba,f)$ for each $f$ as a base proposition. To avoid naming conflicts, we use $o\_\hd$ (old head) and $o\_lit$ when referring to the ASP representation (cf.~$\defshort$~\ref{def:mapping}) of the input $G$ (i.e., $\Pi(G_{C})$), and $lit$ and $hd$ for the repaired $G'$. 

Based on $P_{\mathcal D}$ we create a \textbf{generator} $\pgen(G)$ for the MRP,
\begin{itemize}
    \item $\pgen(G)= P_{\mathcal{D}} \cup \Pi(G_{C}) \cup \{\eqref{enc:rule1},\ldots,\eqref{enc:rule14}\}$ (cf.~Fig.~\ref{fig:program})
\end{itemize}
to manipulate the ASP representation of all the rules $G_{C}'$ that can be obtained from $G$ with some valid repair. We also create a $\gd$ $\ginv$ (not in restricted form) to \textbf{simulate} $\Pi^{-1}$:
\begin{itemize}
    \item $\ginv=\{\eqref{enc:rule16}, \ldots, \eqref{enc:rule21}\}$ (cf.~Fig.~\ref{fig:program})
\end{itemize}
which ensures that if $\mathcal{X}=\Pi(G_{C}')$ for some repaired $\gd$ $G'$ obtained from $G$, $\ginv \cup \mathcal{X} \cup G_{R}$ is equivalent $\Pi^{-1}(\mathcal{X}) \cup G_{R}$.

Before we discuss the G\&C encoding, let's take a closer look at $\pgen$ and $\ginv$. Clauses~\eqref{enc:rule1}-\eqref{enc:rule1.1} specify that every rule in $G_{E}$ can remain empty or be modified to a legal/next rule. We use $rtype(i,ba)$ (resp.\ $rtype(i,ac)$) to denote the \emph{type of} the $i$-th rule as a next (resp.\ legal) rule. Clauses~\eqref{enc:rule5}-\eqref{enc:rule6} state that every rule in $G_{O}$ can be deleted (i.e., its head changed to~$\emptyset$) or acquire a new head of the \emph{same type}. Clause~\eqref{enc:rule7} says that if the new head $hd(r_{i}')$ of a rule differs from the original one, the repair must contain the change tuple $\langle i, (c,hd(r'_{i})) \rangle$. We use $tup(i,tp,l)$ to denote the change tuple $\langle i,(tp,\tau^{-1}(l)) \rangle$ (cf.\ \defshort~\ref{def:mapping}).

Clauses~\eqref{enc:rule8}-\eqref{enc:rule12} model repairs to the rule bodies after we have fixed the heads.
For every rule $r_{i}$ that does not have a dummy head in the resulting $\gd$ (i.e., $hd(r_{i}') \neq \emptyset$), \eqref{enc:rule8}~says that we can remove any literal $l$ in the body with the change tuple $\langle i,(-,l) \rangle$; \eqref{enc:rule9}~says that we can add any $true(f)$ or its negation to the body if $f \in \beta$; and \eqref{enc:rule10}~says that we can add any $does(p,a)$ or its negation to the body if $(p,a) \in \gamma$ and $r_{i}$ is a next rule. Clauses~\eqref{enc:rule11}-\eqref{enc:rule12} ensure that if $hd(r_{i}')=\emptyset$, $bd(r_{i}')=\{\}$ (cf.\ \defshort~\ref{def:repset}). Otherwise, $bd(r_{i}')$ in the resulting $\gd$ contains all literals that have been added to $r_{i}$ and all literals in $bd(r_{i})$ in the input $\gd$ that have not been removed.

We also introduce constraints to improve the quality of the repair (clauses \eqref{enc:rule13}--\eqref{enc:rule14})\/: Rules with an atom that appears positively and negatively ($a$ and $not~a$) in the body, or with two different actions for the same player (i.e., $does(p,a)$ and $does(p,b)$ with $a \neq b$), are redundant, hence not allowed.

\begin{figure*}[ht]
\footnotesize
\centering
\begin{enumerate}[label=(\arabic{enumi}):~,ref=(\arabic{enumi}),align=right,leftmargin=\widthof{(100)}+\labelsep]
\item \label{enc:rule1}  \texttt{1\{ha(I,(TP,F)):atom(TP,F);ha(I,$\emptyset$)\}1~:-~e\_rule(I)}. 
\item  \label{enc:rule1.1}\texttt{rtype(I,TP) :- ha(I,(TP,F)),~e\_rule(I)}.



\item   \label{enc:rule5} \texttt{rtype(I,TP) :- o\_ha(I,(TP,F)).}

\item  \label{enc:rule6} \texttt{1\{ha(I,(TP,F)):atom(TP,F);ha(I,$\emptyset$)\}1 :-~rtype(I,TP), o\_rule(I)}.

\item \label{enc:rule7} \texttt{tup(I,c,F)~:-~rule(I),~ha(I,F),~not~o\_ha(I,F)}.

\item \label{enc:rule8} \texttt{\{tup(I,-,(Q,TP,F))\}~:- o\_lit(I,(Q,TP,F)),~not~ha(I,$\emptyset$)}.

\item \label{enc:rule9} \texttt{\{tup(I,+,(Q,ba,F))\}~:-~atom(ba,F),}~\texttt{pol(Q),}~\texttt{not~o\_lit(I,(Q,ba,F)),}~\texttt{rule(I),}~\texttt{not~ha(I,$\emptyset$)}. 

\item \label{enc:rule10}
                \texttt{\{tup(I,+,(Q,ac,F))\} :-~not~o\_lit(I,(Q,ac,F)),} \texttt{not~ha(I,$\emptyset$),}~\texttt{rtype(I,ba),}~\texttt{atom(ac,F),}~\texttt{pol(Q)}.

\item \label{enc:rule11} \texttt{lit(I,(Q,TP,F))~:-~tup(I,+,(Q,TP,F))}.             

\item \label{enc:rule12}
                    \texttt{lit(I,(Q,TP,F))~:-~o\_lit(I,(Q,TP,F)),~not~tup(I,-,(Q,TP,F)),~not~ha(I,$\emptyset$)}.

\item  \label{enc:rule13} \texttt{:-~lit(I,(pos,TP,F)),~lit(I,(neg,TP,F))}.        

\item   \label{enc:rule14} \texttt{:-~lit(I,(pos,ac,(P,A1))),~lit(I,(pos,ac,(P,A2))),~A1$<$A2}.    


\item  \label{enc:rule16} \texttt{err\_t(I)~:-~lit(I,(pos,ba,F)), not true(F).}

\item   \label{enc:rule17} \texttt{err\_t(I)~:-~lit(I,(neg,ba,F)),~true(F).}

\item    \label{enc:rule18} \texttt{err\_d(I)~:-~lit(I,(pos,ac,(P,A))), not does(P,A).}

\item    \label{enc:rule19} \texttt{err\_d(I)~:-~lit(I,(neg,ac,(P,A))), does(P,A).}

\item   \label{enc:rule20} \texttt{legal(P,A)~:-~ha(I,(ac,(P,A))),~not err\_t(I).}

\item    \label{enc:rule21} \texttt{next(F)~:-~ha(I,(ba,F)), not err\_t(I), not err\_d(I).}
\item  \label{enc:optimization}
            \texttt{:$\sim$~tup(I,TP,LIT). [cost(I,(TP,$\tau^{-1}$(LIT))),I,TP,LIT]}
      
\end{enumerate}
\caption{The ASP encoding of GDL repair, symbols like $+$,$-$,$\emptyset$ should be replaced by constants in actual implementation}
\label{fig:program}
\end{figure*}
To sum up, due to the way $\pgen(G)$ is encoded, by Definition~\ref{def:repset}, any stable model of $\pgen(G)$ corresponds to a valid $\gd$ $G'$ that can be obtained from $G$ with some valid repair. The predicates $lit$ and $hd$ record the ASP representation of $G_{C}'$, and the predicate $tup$ records all the change tuples. 

We now prove that $\ginv$ models $\Pi^{-1}$. If $\mathcal{X}=\Pi(G_{C}')$ for some repaired $\gd$ $G'$, then $G''=\ginv \cup \mathcal{X} \cup G_{R}$ is equivalent to $G'=\Pi^{-1}(\mathcal{X}) \cup G_{R}$ in the sense that at any state, a ground atom holds in $G'$ iff it holds in $G''$. 
\begin{theorem}
\label{theorem:inv}
    Let $G$ be a valid $\gd$, $\mathcal{M}$ a stable model of $\pgen(G)$, and $\mathcal{X}$ the set of all atoms $lit,ha$ in $\mathcal{M}$. If $q(\vec t\,)$ is a ground atom with a predicate symbol in $\{true,does,$ $legal,next\}$ or appears in $G_{R}$, then for any state $S$ and joint action $A$ over $G$'s base propositions and move domain\/:
    \begin{itemize}
          \item $\Pi^{-1}(\mathcal{X}) \cup G_{R} \cup S^{true} \cup A^{does} \models q(\vec t\,)$ iff $\ginv \cup \mathcal{X} \cup G_{R} \cup S^{true} \cup A^{does} \models q(\vec t\,)$.
    \end{itemize}
\end{theorem}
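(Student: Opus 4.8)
The plan is to reduce the two-way entailment to a single claim: that the reconstructed rules $\Pi^{-1}(\mathcal{X})$ and their simulation $\ginv \cup \mathcal{X}$ derive exactly the same $legal$ and $next$ atoms under any fixed $S^{true}$ and $A^{does}$, and then propagate this agreement through the shared block $G_R$. Both $G' = \Pi^{-1}(\mathcal{X}) \cup G_R$ and $G'' = \ginv \cup \mathcal{X} \cup G_R$ are stratified---$G'$ because (by the preceding discussion) $\mathcal{X} = \Pi(G_C')$ for a valid repair, so it is a valid $\gd$ in restricted form, and $G''$ because the auxiliaries $\texttt{err\_t},\texttt{err\_d}$ lie strictly below $legal,next$ (which mention them only under negation), which in turn lie below the $G_R$ predicates. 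Hence each program has a unique stable model and $\models$ means membership in it. I would then split both programs on the set $U$ of ground atoms over the predicates $true,does,legal,next$ together with the auxiliaries $lit,ha,\texttt{err\_t},\texttt{err\_d}$: every rule whose head lies in $U$ (the facts $S^{true},A^{does},\mathcal{X}$, the legal/next rules of $\Pi^{-1}(\mathcal{X})$, and all clauses of $\ginv$) has its entire body in $U$, whereas the $G_R$ rules---shared verbatim and, by restricted form, never defining $legal$ or $next$---have heads outside $U$. Thus $U$ is a splitting set for both programs, the bottom determines $legal/next$, and the common top $G_R$ is evaluated afterwards.

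The core step is the body-satisfaction correspondence. First I would spell out $\Pi^{-1}(\mathcal{X})$: each id $i$ with $ha(i,(ac,(p,a))) \in \mathcal{X}$ reconstructs a legal rule $legal(p,a) \texttt{:-}\, B_i$ with $B_i \subseteq \mathcal{F}$, each $i$ with $ha(i,(ba,f)) \in \mathcal{X}$ a next rule $next(f) \texttt{:-}\, B_i$ with $B_i \subseteq \mathcal{A} \cup \mathcal{F}$, where $B_i = \{\tau^{-1}(l) \mid lit(i,l) \in \mathcal{X}\}$; an id with $ha(i,\emptyset)$ yields a deleted rule and, by generator clauses \eqref{enc:rule11}--\eqref{enc:rule12}, carries no $lit(i,\cdot)$ atom. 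Reading \eqref{enc:rule16}--\eqref{enc:rule17}, $\texttt{err\_t}(i)$ is derivable from $S^{true}$ exactly when some positive $true$-literal of $B_i$ fails in $S$ or some negative one is violated, i.e.\ exactly when the $\mathcal{F}$-part of $B_i$ is unsatisfied by $S$; symmetrically, \eqref{enc:rule18}--\eqref{enc:rule19} make $\texttt{err\_d}(i)$ derivable from $A^{does}$ exactly when the $\mathcal{A}$-part of $B_i$ is unsatisfied by $A$. Hence $B_i$ holds under $(S,A)$ iff both $\texttt{err\_t}(i)$ and $\texttt{err\_d}(i)$ are underivable, and \eqref{enc:rule20}--\eqref{enc:rule21} then fire $legal(p,a)$ (resp.\ $next(f)$) under precisely the condition of the reconstructed rule $legal(p,a)\texttt{:-}\, B_i$ (resp.\ $next(f)\texttt{:-}\, B_i$). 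Here I would note that a legal body contains no $does$-literal (restricted form), so \eqref{enc:rule20} rightly guards on $\texttt{err\_t}$ alone, and that a head atom holds iff \emph{some} contributing id has a satisfied body, matching the existential over ids on both sides. Deleted rules fire nothing either way. Since $true$ and $does$ are pinned identically by $S^{true}$ and $A^{does}$, the two bottoms agree on all of $true,does,legal,next$ (the auxiliaries $lit,ha,\texttt{err\_t},\texttt{err\_d}$ may differ, but they are internal to $G''$ and irrelevant to the claim).

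Finally I would invoke the splitting decomposition: as the two bottoms agree on $true,does,legal,next$ and the tops are the identical program $G_R$ (mentioning only these four predicates and other $G_R$ predicates, never the auxiliaries), the unique stable models of $G'$ and $G''$ coincide on every atom over a $G_R$ predicate as well as over $legal,next,true,does$, which gives the stated iff for every $q(\vec t\,)$. The main obstacle I anticipate is the rigorous handling of negation across the auxiliary layer---arguing under stable-model semantics that the indirect derivation of $legal/next$ through the negated guards $\texttt{err\_t},\texttt{err\_d}$ exactly mirrors direct body evaluation---together with justifying that the splitting is legitimate once the negated $true/does$ literals of $B_i$ are routed through those guards; dispatching the empty-head (deleted) case and the same-head multiplicity of rule ids are minor but necessary verifications.
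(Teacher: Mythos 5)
Your proof is correct and follows essentially the same route as the paper's: the heart of both arguments is the correspondence between satisfaction of the reconstructed bodies $B_i$ and the underivability of the guards $\texttt{err\_t}(i)$, $\texttt{err\_d}(i)$ via clauses \eqref{enc:rule16}--\eqref{enc:rule21}, combined with the observations that $true$/$does$ are pinned by facts and that $G_{R}$ predicates depend only on $S^{true}$ and $G_{R}$ itself. Your splitting-set and stratification scaffolding is just a more formal packaging of the paper's informal dependency argument for the $G_{R}$ case, so the two proofs coincide in substance.
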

\begin{sketch}[Proof (Sketch).]
    Let $G'=\Pi^{-1}(\mathcal{X}) \cup G_{R}$ and $G''=\ginv \cup \mathcal{X} \cup G_{R}$.
    By construction of $\pgen(G)$, $G'$ must be a valid $\gd$ obtained from $G$ with some valid repair.
    There are four cases. First, if $q \in \{true,does\}$, the statement trivially holds.
    Second, if $q \in G_{R}$, the statement holds because $q(\vec t\,)$ solely depends on $S^{true}$ and $G_{R}$ since, as $G$ is in restricted form, $q$ 
    cannot appear as head in $\ginv \cup \mathcal{X}$ or $\Pi^{-1}(\mathcal{X})$. 
    Third, if $q(\vec t\,)=legal(p,a)$ for some $p,a$. Since both $G'$ and $G''$ are valid $\gd$s, whether $legal(p,a)$ holds or not 
    does not depend on $A^{does}$. Thus, $G' \cup S^{true} \models q(\vec t\,)$ iff $\hd(i,(ac,(p,a))) \in \mathcal{X}$ for some $i$ and there is no $f$ such that: i) $true(f) \in S^{true}$ and $lit(i,(neg,ba,f)) \in \mathcal{X}$, \textbf{or} ii) $true(f) \notin S^{true}$ and $lit(i,(pos,ba,f)) \in \mathcal{X}$. This is exactly what clauses \eqref{enc:rule16}--\eqref{enc:rule17} and \eqref{enc:rule20} in Fig.~\ref{fig:program} are modeling\/: \eqref{enc:rule16} and \eqref{enc:rule17} state that $err\_t(i)$ is justified iff for some $f$ and $i$: i) $true(f) \in S^{true}$ and $lit(i,(neg,ba,f)) \in \mathcal{X}$, \textbf{or} ii) $true(f) \notin S^{true}$ and $ lit(i,(pos,ba,f)) \in \mathcal{X}$. \eqref{enc:rule20} enforces $G''  \cup S^{true} \models q(\vec t\,)$ iff there is some $i$ such that $\hd(i,(ac,(p,a))) \in \mathcal{X}$ and $err\_t(i)$ is not justified. The final case when $q=next$ is analogous to the previous case (see \eqref{enc:rule16}--\eqref{enc:rule19} and \eqref{enc:rule21}). 
\end{sketch}
Based on Theorem~\ref{theorem:basecase} and~\ref{theorem:inv}, we can check if the repaired game description $G'=G_{C}'\cup G_{R}$, with  $\mathcal{X}=\Pi(G_{C}')$ generated by $\pgen(G)$, satisfies a given GTL property as follows. 
\begin{corollary}
\label{theorem:correct}
     Let $G$ be a valid $\gd$, $\phi$ be a GTL formula with $deg(\phi)=n$, $\mathcal{M}$ a stable model of $\pgen(G)$, and $\mathcal{X}$ the set of all atoms of $lit,\hd$ in $\mathcal{M}$. Let $G'=\Pi^{-1}(\mathcal{X}) \cup G_{R}$, and  
    $\pver(\phi)=\plegal^{n}   \cup \pext^{n}(\ginv \cup G_{R}) \cup \enc(\phi,0) \cup \{\texttt{:-} \eta(\phi,0)\}$. Then, $G' \models_{t} \phi$ iff $\mathcal{X} \cup \pver(\phi)$ has no stable model. 
\end{corollary}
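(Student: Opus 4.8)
The plan is to deduce the corollary from Theorem~\ref{theorem:basecase}, applied to the genuine restricted-form description $G'=\Pi^{-1}(\mathcal{X}) \cup G_{R}$, together with the single-step equivalence of Theorem~\ref{theorem:inv}, lifted across the time levels of the temporal extension. Since $\mathcal{X}$ is the $lit/ha$-part of a stable model of $\pgen(G)$, it equals $\Pi(G_{C}')$ for some $G'$ obtainable from $G$ by a valid repair, so $G'$ is a valid restricted-form $\gd$ and Theorem~\ref{theorem:basecase} applies to it directly, giving: $G' \models_{t} \phi$ iff $\plegal^{n} \cup \pext^{n}(G') \cup \enc(\phi,0) \cup \{\texttt{:-}\,\eta(\phi,0)\}$ has no stable model.

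First I would rewrite the target program. The atoms of $\mathcal{X}$ use only the predicates $lit$ and $ha$, which are neither $\gdl{init}$/$\gdl{next}$ nor dependent on $\gdl{true}$ or $\gdl{does}$ in $\ginv \cup \mathcal{X} \cup G_{R}$; hence the temporal extension leaves them untouched, so $\mathcal{X}$ commutes with $\pext^{n}$ and $\mathcal{X} \cup \pver(\phi) = \plegal^{n} \cup \pext^{n}(\ginv \cup \mathcal{X} \cup G_{R}) \cup \enc(\phi,0) \cup \{\texttt{:-}\,\eta(\phi,0)\}$. Writing $G''=\ginv \cup \mathcal{X} \cup G_{R}$, this is precisely the Theorem~\ref{theorem:basecase} verification program built on $G''$ rather than $G'$. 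It therefore suffices to prove that the two verification programs, one over $\pext^{n}(G')$ and one over $\pext^{n}(G'')$, are simultaneously (un)satisfiable.

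The core step is to build a bijection between the stable models of $\plegal^{n} \cup \pext^{n}(G')$ and those of $\plegal^{n} \cup \pext^{n}(G'')$ that agrees on every timestamped interface atom $true(f,i)$, $does(p,a,i)$, $legal(p,a,i)$, $terminal(i)$, $goal(p,v,i)$, $no\_play(i)$, $end(i)$ and on every timestamped $G_{R}$-atom, differing only on the auxiliary atoms $lit$, $ha$, $err\_t(i)$, $err\_d(i)$ of $G''$. I would argue by induction on the level $i$ from $0$ to $n$: the level-$0$ states coincide because both arise from the shared $\gdl{init}$ rules in $G_{R}$; and assuming the states $true(\cdot,i)$ coincide, Theorem~\ref{theorem:inv} (instantiated with the current state $S$ and the guessed joint action $A$ at step $i$) gives that $legal(\cdot,\cdot,i)$, $terminal(i)$, $goal(\cdot,\cdot,i)$, and via $\gdl{next}$ the successor state $true(\cdot,i+1)$, all coincide. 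Because the action generator constrains $does(\cdot,\cdot,i)$ only through the coinciding $legal(\cdot,\cdot,i)$, $terminal(i)$, $no\_play(i)$, the same joint actions are admissible, so the guesses correspond one-to-one; and the auxiliary atoms are functionally fixed by $\mathcal{X}$ and the current state/action through the stratified clauses \eqref{enc:rule16}--\eqref{enc:rule19}, so each interface-level model of the $G'$ program extends to exactly one model of the $G''$ program, and conversely. Since $\enc(\phi,0)$ and $\texttt{:-}\,\eta(\phi,0)$ reference only interface predicates ($true$, $legal$, $terminal$, $no\_play$) through the recursion of Definition~\ref{def:aspgtl}, corresponding models derive identical $\eta$-atoms and are retained or eliminated together, closing the chain of equivalences.

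I expect the inductive lifting to be the main obstacle: Theorem~\ref{theorem:inv} is a purely single-step statement, so the delicate part is verifying that the auxiliary predicates of $\ginv$ inject no extra non-determinism across levels (this rests on their stratification, with $err\_t(i)$ depending only negatively on the already-fixed $true(\cdot,i)$) and that the non-monotonic action generator yields matching guesses at every level, so that the single-step equivalence genuinely composes into a global bijection rather than merely a level-wise one.
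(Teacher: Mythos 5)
Your proof is correct and takes essentially the same route as the paper's: both rest on the observation that $\mathcal{X}$ commutes with $\pext^{n}$ (so $\mathcal{X}\cup\pver(\phi)$ is exactly the Theorem~\ref{theorem:basecase} verification program for $G''=\ginv\cup\mathcal{X}\cup G_{R}$) and on transferring verification between $G'$ and $G''$ via the equivalence of Theorem~\ref{theorem:inv}. The only difference is presentational: the paper performs the transfer semantically (the two descriptions agree on all GTL-relevant atoms and on successor states, hence have the same GTL theory, and Theorem~\ref{theorem:basecase} is applied to $G''$), whereas you apply Theorem~\ref{theorem:basecase} to $G'$ and carry out the transfer syntactically as a level-wise stable-model bijection between the two ASP programs---a more explicit rendering of the same lifting step.
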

We return to the G\&C framework. Consider an MRP instance $\langle G,\{\phi^{+}_{1},\ldots,\phi^{+}_{m}\},\{\phi^{-}_{1},\ldots,\phi^{-}_{n}\},cost\rangle$. It is trivial that in GTL, $G \models_{t} \phi_{i}^{+}$ holds for all $1 \leq i \leq m$ iff $G \models_{t} \phi_{0}$, where $\phi_{0}=\phi^{+}_{1} \wedge \ldots \wedge \phi^{+}_{m}$. Thus, we can replace formulas in $\Phi^{+}$ with the single formula $\phi_{0}$. For the G\&C, we define:
\begin{itemize}
    \item $P_{G}=\{\eqref{enc:optimization}\} \cup \pgen(G)\cup P_{H}  \cup \bigcup_{i=1}^{n} \pver(\phi_{i}^{-},i)$
    \item $P_{C}=\pver(\phi_{0}) \cup P'_{H}$ (cf.\ Corollary~\ref{theorem:correct}).
\end{itemize}
where $P_{H}=\{holds(q(I,F))\texttt{:-}q(I,F). \mid q \in \{lit,\hd\}\}$ 
and
    $P'_{H}=\{q(I,F)\texttt{:-}holds(q(I,F)). \mid q \in \{lit,\hd\}\}$.
    
Here, $P_{G}$ consists of: First, the repair generator $\pgen(G)$ to generate the ASP representation of $\mathcal{X}=\Pi(G_{C}')$ of all possible resulting $\gd$s $G'=G'_{C} \cup G_{R}$. Second, an ASP optimization statement written as a weak constraint~\eqref{enc:optimization} (cf.\ Fig.~\ref{fig:program}) which says that the stable model of $P_{G}$ should minimize the total cost of the change tuples, ensuring that the repair output by $\pgen(G)$ is an optimal one. 

Third, $P_{G}$ uses a checker $\pver(\phi_{j}^{-},j)$ for every $\phi_{j}^{-}$ to ensure that $G' \not\models_{t} \phi^{-}_{j}$. $\pver(\phi^{-}_{j},j)$ modifies $\pver(\phi^{-}_{j})$ in Corollary~\ref{theorem:correct} by \emph{extending} each occurrence of $q(\vec t\,)$ with a constant $j$ to $q(\vec t ,j)$ if $q$ is a predicate that is neither $\hd$ nor $lit$. For example, $does(R,A,i)$ in $\plegal^{n}$ become $does(R,A,i,j)$ ($\defshort$~\ref{def:aspgenerator}). The extension is crucial because with $\pver(\phi)$, we can only show $G' \not\models_{t} \phi$ for a single $\phi$ by showing that $\mathcal{X} \cup \pver(\phi)$ has a stable model, which involves generating a $deg(\phi)$-max sequence such that $G',(\Sinit,\myDots,S_{k}) \not\models_{t} \phi$. In our case, we need to show $G' \not\models_{t} \phi^{-}_{j}$ for all $j \leq n$, which requires generating $n$ sequences simultaneously that do not interfere with each other, where the $j$-th one dissatisfies $\phi^{-}_{j}$. To do so, we create $n$ ``copies'' of $\pver(\phi)$ letting them \emph{share} the set $\mathcal{X}$ and the $j$-th copy generates a sequence that dissatisfies $\phi^{-}_{j}$
to prove $G' \not\models_{t} \phi^{-}_{j}$. 

Finally, $P_{H}$ in $P_{G}$ ``wraps'' all atoms of $lit$ and $\hd$ (i.e., atoms in $\mathcal{X}$) of a stable model of $P_{G}$ with the G\&C preserved predicate $holds$, allowing $G'$ to be \emph{shared} between $P_{G}$ and $P_{C}$. Let $\mathcal{H}$ denote the set of all atoms of the form $holds(x)$ in the stable model of $P_{G}$.
The $P_{H}'$ part in $P_{C}$ ``decodes'' every instances in $\mathcal{H}$ back to $lit$ and $\hd$ (i.e., the set $\mathcal{X}$). (Recall that in G\&C, the stable model of $P_{G}$ is the stable model of the overall program iff $\mathcal{H} \cup P_{C}$ is unsatisfiable.) This is equivalent to $\pver(\phi_{0}) \cup \mathcal{X}$ has no stable model, which proves that $G' \models_{t} \phi_{0}$ (cf. Corollary~\ref{theorem:correct}). 

To sum up, $P_{G}$ generates an answer to the MRP by only considering the constraints in $\Phi^{-}$. The repaired $\gd$ is shared between $P_{G}$ and $P_{C}$ via the special G\&C predicate $holds$. Letting $P_{C}$ together with the set of $holds$ instances to have no stable models ensures that all formulas in $\Phi^{+}$ hold, which in turn implies that the answer generated by $P_{G}$ is an answer to the overall MRP. Our construction also indicates that for MRP with $\Phi^{+}=\{\}$ (cf.~Theorem~\ref{mrp:cheap}), $P_{G}$ alone, a normal logic program with optimization statements, suffices to solve the problem.

\section{Case Study}
\label{sec:exp}
The motivation for introducing the GDL repair problem is that, in practice, game descriptions might violate the \yifanm{intention of the game designers. Once these violated properties are expressed in GTL and a cost function specified manually,} we can use our encoding in Section 4 to automatically generate minimal changes to fix a $\gd$. Our encoding works for \textbf{any} GDL repair task. We demonstrate this through a simple case study on Tic-Tac-Toe, beginning with the well-formedness property and then others.
\subsubsection{Instance Description} In Tic-Tac-Toe, two players $x$ and~$o$ alternate in marking cells on the board, beginning with~$x$. Turn-taking is modeled by the following pair of GDL rules\/: 
\begin{enumerate}
    \item $next(control(o)) \texttt{:-} true(control(x)).$
    \item $next(control(x)) \texttt{:-} true(control(o)).$
\end{enumerate}
The player who does not have control in a state can only perform a $noop$ action with no effect. The player who first places three of their marks in a horizontal, vertical, or diagonal row is the winner. If all 9 cells of the board are marked and neither $x$ nor $o$ wins, the game ends in a draw.

We purposefully break the $\gd$ by removing the second rule. As a result, player $x$ can never take control after step~1, and $o$ cannot take control after step~2. Consequently, the broken $\gd$ is not well-formed as it violates both the weak winnability and termination properties.
\subsubsection{Experimental Setup} 
To repair the broken $\gd$, we need to formulate an MRP of the form $\langle G, \Phi^{+},\Phi^{-},cost \rangle$ where $G=G_{L} \cup G_{N} \cup G_{R} \cup G_{E}$ (cf. Section~\ref{sec:problem-def}). We need to specify: $|G_{E}|$, the maximum number of new legal/next rules we can add; $\Phi^{+}$, the GTL properties that the repaired $\gd$ should satisfy; $\Phi^{-}$, the GTL properties that the repaired $\gd$ should dissatisfy; and a cost function. 

For simplicity, we set $|G_{E}|=2$ and use the following uniform cost function for all our experiments:
\begin{itemize}
    \item $cost(i,(+,l))=cost(i,(-,l))=1$
    \item $cost(i,(c,h))=1$, if $r_{i} \in G_{E}$
    \item $cost(i,(c,\emptyset))=|bd(r_{i})|+1$
    \item $cost(i,(c,h))=2 \cdot |bd(r_{i})|+2$, if $h \neq \emptyset$ and $r_{i} \in G_{o}$
\end{itemize}
This models the ``editing'' cost. Adding or deleting a body literal costs~1, as does adding the head of a new legal/next rule. We model deleting a rule (i.e., setting the head to~$\emptyset$) as deleting the head and all literals in the \emph{old} body of a rule. Likewise, replacing the head of an old rule is considered a significant change, costed as\/: removing the old rule and creating a new rule with a new head and the same body. 

All MRPs are solved by G\&C with Clingo~5.7.2 with the inverse linear search-based core-guided optimization configuration~\cite{lifschitz2019answer} on a Latitude 5430 laptop~\footnote{Source code link: \url{https://github.com/hharryyf/gdlRepair}}.

\subsubsection{Repairing the well-formedness property}
The most fundemantal, general property of any ``good'' $\gd$ is well-formedness, i.e., playability, termination, and weak winnability. We can repair a $\gd$ to be $n$-well-formed, for a give user-specified $n$ as the desired maximal horizon for the repaired game. In practice, $n$ would be based on domain knowledge of the game that the $\gd$ was intended to describe. For Tic-Tac-Toe, the desired maximal horizon is obviously~9, the number of cells that can be marked. Hence, in the MRP, we specify $\Phi^{+}=\{\psi_{play}(9),\psi_{end}(9)\}$ and $\Phi^{-}=\{\psi_{loss}(x,9),\psi_{loss}(o,9)\}$ (cf. Example~\ref{example:repair}). 

Our encoding from Section~\ref{sec:encoding} is able to automatically solve this MRP with an optimal repair at a cost of~1\/: The solution generated by our G\&C program is to simply create a new rule with an empty body: $next(control(x))$.
The resulting $\gd$ is well-formed, 
and it is syntactically close to the input, requiring only a single modification to the input.

\subsubsection{Refining the repair with additional constraints}
While the solution to our example Tic-Tac-Toe MRP successfully restores well-formedness, this may not be the only intended property for a game.
Specifically, after repairing the $\gd$ with the creation of the simple fact $next(control(x))$, the base proposition (aka.\ \emph{fluent\/}) $control(x)$ is true in all positions of the game. As a result, the repaired $\gd$ allows player~$x$ to always take control after step 1, which changes Tic-Tac-Toe from a turn-taking game to a simultaneous-move game. 

In a GDL description, having a next rule with an empty body is usually undesired as it allows a base proposition to persist unconditionally across all play sequences after step~1. Such a behavior can reduce the dimension of the state space of a game and hence oversimplify the game. 

\yifanm{If a game designer wants to avoid this behavior}, the following GTL formula, a kind of ``fluent dynamic constraint'' can be added to $\Phi^{-}$\/:
\begin{equation}
 \psi_{s}(f,n)\ \DefMath\ \neg \gdl{terminal} \wedge \X nest(true(f), \wedge,n-1)
    \nonumber
\end{equation}
This ensures that in the repaired $\gd$, either the game terminates at $\Sinit$, or there exists some $n$-max sequences where $true(f)$ does not hold continuously after step 1 which effectively rules out repairs that introduce rules like $next(f).$

Refining the MRP in this way means to find a repair with $\Phi^{-}=\{\psi_{loss}(x,9),\psi_{loss}(o,9),\psi_{s}(control(x),9)\}$ and $\Phi^{+}=\{\psi_{play}(9),\psi_{end}(9)\}$.

Now, the optimal repair generated by our G\&C has a cost of~2 to satisfy both well-formness and the additional fluent dynamic constraint. One repair output by G\&C suggests adding the rule: $next(control(x))\texttt{:-}\,true(control(o))$, which successfully restores the $\gd$ to the standard GDL description of Tic-Tac-Toe.

\subsubsection{Further refine the repair with the turn-taking constraint}
While the above minimal repair is arguably the desired one, the automatic solver G\&C outputs another lowest-cost repair of cost 2, namely, adding the rule: $next(control(x))\texttt{:-}\,not~does(x,mark(1,1)).$

The new rule says that $x$ will take control of the game in the next step as long as that player does not mark the cell $(1,1)$ in the current round. This is a perfectly \yifanm{appropriate} repair as the resulting variant of Tic-Tac-Toe is well-formed and does satisfy the additional constraint from above. However, this repair may still be considered undesirable because if~$x$ begins with marking the cell $(1,2)$ in step~1, for example, the player keeps control in step~2. In this case both $control(x)$ and $control(o)$ will be true in step~2, and hence they can still mark cells simultaneously in some game states.

If it is desirable to ensure that an $n$-well-formed two-player game after repairing satisfies a strict turn-taking property, which is to say that, in our example, at each step of the game either $x$ or $o$ take control of the game but not both, we can introduce the following additional GTL constraint\/:
\begin{equation}
        \psi_{con}(n) \ \DefMath\ \ nest(\psi_{x} \vee \psi_{o}, \wedge, n)
        \nonumber    
\end{equation}
where, $\psi_{x}=true(control(x)) \wedge \neg true(control(o))$ and $\psi_{o}=true(control(o)) \wedge \neg true(control(x))$.

To repair Tic-Tac-Toe with this extra constraint, the MRP uses: $\Phi^{-}=\{\psi_{loss}(x,9),\psi_{loss}(o,9),\psi_{s}(control(x),9)\}$ and $\Phi^{+} =\{\psi_{con}(9),\psi_{play}(9), \psi_{end}(9)\}$. 

Now the optimal repair still has cost~2, but the undesired repair suggested by G\&C when only considering the constraint to rule out a static fluent~$control(x)$ is eliminated. One automatically generated repair suggests to add the rule: $\gdl{next}(control(x)) \texttt{:-}\,\gdl{does}(x,noop)$.
This new rule states that whenever $x$ does $noop$, it will take control in the next step. In Tic-Tac-Toe, $x$ can only do $noop$ when $o$ takes control, which means this new rule is effectively ``equivalent'' to the original rule that we deleted---namely, $x$ taking control in the next step if $o$ takes control in the current step. 

More importantly, after introducing the turn-taking constraint, all lowest-cost repairs computed by G\&C result in a $\gd$ \emph{equivalent} to the originally correct Tic-Tac-Toe game description, in the sense that at every state of the repaired game and the original Tic-Tac-Toe $\gd$, the same set of legal actions are available to the players and each joint action leads to the same successor state in all these repaired games. 
\subsubsection{Summary}
The following table summarizes the lowest repair cost ($C_{opt}$) for each of the 3 MRPs that we considered in our study\/: repair the well-formedness property only (WF), repair the well-formedness property with the fluent dynamic constraint (WF + FD), and repair all 3 properties simultaneously (WF + FD + TT). We also record 
the time for G\&C to find the first optimal repair ($T_{1}$), and the time to compute all optimal repairs ($T_{a}$).
\begin{center}
\footnotesize
\begin{tabular}{lc|rr}
\toprule
   \textbf{Property} & $C_{opt}$  & $T_{1}$ \textbf{(sec)} & $T_{a}$ \textbf{(sec)}  \\
  \hline
    WF & 1 &   34.76& 38.72 \\
    WF + FD &  2 & 50.99 & 326.41 \\
    WF + FD + TT & 2 & 122.29& 195.15\\
    \bottomrule
\end{tabular}    
\end{center}
We observe that G\&C can find an appropriate repair for Tic-Tac-Toe in all 3 situations in a feasible amount of time with the resulting $\gd$ syntactically close to the original. Moreover, the case study demonstrates the ability of our encoding to automatically repair ill-defined GDL descriptions. \yifanm{Upon detecting---either manually or using a GTL theorem prover \cite{thielscher:AIJ12}---that a given $\gd$  violates human intentions, one can specify the desired properties as GTL formulas and use our approach to automatically generate a repaired $\gd$ that satisfies the intended properties while remaining syntactically closest to the original $\gd$. And, if the repair suggested by G\&C is still unsatisfactory, one can refine the repair by defining new MRPs with additional GTL properties based on \emph{additional} human knowledge of the resulting $\gd$, until the generated repair is satisfactory.}



\section{Conclusion}
\label{sec:conclusion}

We investigated the problem of repairing GDL descriptions, with a focus on minimal repairs. We established sufficient conditions under which certain repair problems have, or do not have, solutions. We proved tight complexity bounds for the minimal repair problem and introduced the first automated method for repairing GDL descriptions using ASP, thereby extending the capabilities of automated theorem proving in GGP from mere fault detection to actual \emph{rectification}. \yifanm{One potential limitation of our automated method concerns efficiency and scalability. This is due to the complexity of the problem, but may be improved with the development of more efficient disjunctive ASP solvers.} 

\yifanm{For future work, we plan to design a broken GDL description dataset from existing descriptions~\cite{ggprepo} and systematically evaluate the performance of our encoding. We also intend to explore whether certain fragments of the repair problem (e.g., the $F\Delta_2^{P}$ fragment in Theorem~\ref{mrp:cheap}) can be solved more efficiently. 
Another future direction is to explore the repair of game properties formulated in more expressive logics, such as LTLf~\cite{bansal2023model}, as well as to investigate how our approach can be extended to games with imperfect information~\cite{thielscher2010general}, enabling the repair of epistemic properties~\cite{haufe2012automated}.}





\section*{Acknowledgements}


\michaelm{We sincerely thank the anonymous reviewers for their insightful and thorough feedback, and for the care they invested in reviewing our work.}
\munyquem{This project has received funding from the  European Union’s H2020 Marie Sklodowska-Curie project with grant agreement No 101105549. }

\bibliographystyle{kr}
\bibliography{kr-sample}

\appendix
\section{Proofs of Theorems}
\setcounter{theorem}{1}
\renewcommand{\thelemma}{\Alph{section}\arabic{theorem}}
\begin{theorem}
Let $G$ be a $\gd$ and $R$ be the set of players in $G$. For any $n>0$, there exists a repair $\mathcal{R}$ on $G$ such that $rep(G,\mathcal{R})$ is n-well-formed if \textbf{all} of the following hold:
    \begin{itemize}
        \item $G \cup \Sinit^{true} \not\models \gdl{terminal}$, and for all $p_{i} \in R$, $|\gamma(p_{i})| \geq 2$.
        \item For each $p_{i} \in R$, there exists a state $S_{i} \subseteq \beta$ such that $G \cup S^{true}_{i} \models \gdl{goal}(p_{i},100) \wedge \gdl{terminal}$.
        \item $|G_{L}| + |G_{E}| \geq 2 \cdot |R|$ and $|G_{N}| + |G_{E}| \geq |\mathcal{N}| \cdot |R|$ as well as $|G_{L}| + |G_{N}| + |G_{E}| \geq (2+|\mathcal{N}|) \cdot |R|$.
    \end{itemize}
\end{theorem}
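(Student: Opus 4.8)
The plan is to prove the stronger claim that $G$ can be repaired to be \emph{1-well-formed}: since a horizon of at most $1$ is trivially at most $n$ for every $n>0$, and well-formedness itself is independent of $n$, this yields $n$-well-formedness for all $n>0$ at once. So I would reduce everything to exhibiting a single valid repair $\mathcal{R}$ with $rep(G,\mathcal{R})$ 1-well-formed.

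First I would isolate the combinatorial fact implicit in the sketch: \emph{every} $\gd$ in restricted form that shares $\beta$, $\gamma$, and $G_R$ with $G$ and has at most $|G_L|+|G_E|$ legal rules, at most $|G_N|+|G_E|$ next rules, and at most $|G_C|$ legal-or-next rules in total, is the image $rep(G,\mathcal{R})$ of some valid repair. This is a feasibility question about assigning the target rules to the fixed slots: a legal target rule may occupy a legal slot (by changing the head and rewriting the body over $\mathcal{F}$) or an empty slot; a next target rule may occupy a next slot or an empty slot; and any leftover slot is deleted via $\langle i,(c,\emptyset)\rangle$. By a standard Hall/flow argument the assignment exists exactly when the three inequalities of condition~3 hold, instantiated with the target counts $2|R|$ legal rules and $\sum_i|S_i|\le|\mathcal{N}|\cdot|R|$ next rules.

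Next I would exhibit the target game $G^\ast$. I fix two moves $0_i,1_i\in\gamma(p_i)$ for each player (possible since $|\gamma(p_i)|\ge 2$) and install $2|R|$ legal rules with empty bodies making both moves legal in every state, so each player always has a legal move and $\Sinit\notin B$. For the next rules I order the players and, for each $i$, trigger $p_i$'s winning state $S_i$ (from condition~2) on exactly those joint actions in which $p_1,\dots,p_{i-1}$ all play $0$ while $p_i$ plays $1$ (for $i=|R|$, simply $p_1,\dots,p_{|R|-1}$ all play $0$). Concretely, for every $f\in S_i$ I add
\[
\gdl{next}(f)\ \texttt{:-}\ \gdl{does}(p_1,0_1),\dots,\gdl{does}(p_{i-1},0_{i-1}),\gdl{does}(p_i,1_i).
\]
These ``first player to defect'' conditions are pairwise mutually exclusive and partition the entire joint-action space, so for any joint action exactly one block fires and the successor equals \emph{precisely} some $S_i$; this is what defeats the monotonicity of GDL, which would otherwise force unwanted unions of base propositions. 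The block for $p_i$ uses $|S_i|\le|\mathcal{N}|$ rules, so $G^\ast$ has $2|R|$ legal and $\sum_i|S_i|\le|\mathcal{N}|\cdot|R|$ next rules, fitting the budget of the previous paragraph.

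Finally I would verify 1-well-formedness of $G^\ast$ from conditions~1 and~2. Since $\Sinit\not\models\gdl{terminal}$ and no player is stuck, every $1$-max sequence has length exactly one, and each single transition lands in some $S_i$, which is terminal and carries $\gdl{goal}(p_i,100)$ because $G_R$ (hence all terminal and goal rules) is untouched by any repair; this gives termination, playability, and horizon $\le 1$ simultaneously. For weak winnability, the joint action realising region $i$ drives $\Sinit$ to $S_i$, a terminating sequence in which $p_i$ scores $100$. The hard part is really the third paragraph: getting the interface between the fixed rule slots and the valid-repair conditions exactly right, and checking that the mutually exclusive triggering covers \emph{all} legal joint actions, so that no play can escape to a non-terminal or ``polluted'' successor.
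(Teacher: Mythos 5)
Your proposal is correct and follows essentially the same route as the paper's proof: reduce to 1-well-formedness, observe that any restricted-form $\gd$ fitting the slot budget (at most $|G_L|+|G_E|$ legal, $|G_N|+|G_E|$ next, $|G_C|$ total legal-or-next rules) is reachable by a valid repair, then build the target game with two empty-body legal rules per player and the ``first player to defect'' partition of joint actions sending each block to the corresponding terminal winning state $S_i$. Your explicit remarks on mutual exclusivity of the triggering blocks and the Hall-type slot assignment are just more detailed renderings of steps the paper asserts implicitly; the construction and counting ($2|R|$ legal rules, at most $|\mathcal{N}|\cdot|R|$ next rules) coincide exactly.
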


\begin{proof}
    We show that $G$ can always be repaired to be 1-well-formed. Note that if a game is 1-well-formed, it is definitely n-well-formed for any $n>0$.
    
    Let $p_{1},\ldots,p_{|R|}$ be the names of $|R|$ players.
    Note that any valid GDL in restricted form with at most $|G_{L}|+|G_{E}|$ (resp. $|G_{N}|+|G_{E}|$) legal (resp. next) rules, and at most $|G_{C}|$ legal or next rules that has the same base, move domain, and $G_{R}$ can be obtained from the $G$ with some valid repair.

    Since any GDL with no more than $2 \cdot |R|$ legal rules and no more than $|\mathcal{N}| \cdot |R|$ next rules satisfies condition 3, as long as we can construct a game $G'$ with no more than $2 \cdot |R|$ legal rules and no more than $|\mathcal{N}| \cdot |R|$ next rules such that $G'$ is 1-well-formed, we can ensure that there must exist a repair $\mathcal{R}$ such that $G'=rep(G,\mathcal{R})$, and the theorem holds. 

    From condition 1, $|\gamma(p_{i})| \geq 2$ for all $p_{i} \in R$, we know that we can pick two actions $a_{i}^{1}$ and $a_{i}^{2}$ from $\gamma(p_{i})$. From condition 2, we know that we can pick $|R|$ states where the $i$-th one is $S_{i}$ such that $S_{i} \subseteq \beta$ and $G \cup S^{true}_{i} \models \gdl{goal}(p_{i},100) \wedge \gdl{terminal}$.
    
    The GDL construction is as follows. It uses exactly $2 \cdot |R|$ legal rules and no more than $|\mathcal{N}| \cdot |R|$ next rules.

    For the $i$-th player $p_{i}$, construct 2 legal rules.
    \begin{equation}
    \label{eq:lg1}
        legal(p_{i},a_{i}^{1}). \tag{E2}
    \end{equation}
    \begin{equation}
    \label{eq:lg2}
        legal(p_{i},a_{i}^{2}). \tag{E3}
    \end{equation}
    For each player $p_{i}$ and each $f \in S_{i}$ where $i < |R|$, construct the following next rule.
    \begin{multline}
    \label{eq:nxt1}
        next(f) \texttt{:-} does(p_{1},a_{1}^{2}), \ldots,does(p_{i-1},a_{i-1}^{2}),\\does(p_{i},a_{i}^{1}). \tag{E4}
    \end{multline}
    And for each $f \in S_{|R|}$, construct the following next rule.
    \begin{equation}
    \label{eq:nxt2}
        next(f) \texttt{:-} does(p_{1},a_{1}^{2}), \ldots,does(p_{|R|-1},a_{|R|-1}^{2}). \tag{E5}    
    \end{equation}
    Observe that we created 2 legal rules for each player. And since $S_{i} \subseteq \beta$, we ensure that the total number of next rules is no more than $|R| \cdot |\mathcal{N}|$.

    In the construction, ~(\eqref{eq:lg1}) and~(\eqref{eq:lg2}) ensure that the game is playable at all states, and at each state, there are $2^{|R|}$ possible joint actions. Condition 1 ensures that the game cannot terminate at the initial state. 
    ~(\eqref{eq:nxt1}) and~(\eqref{eq:nxt2}) ensure that each of the $2^{|R|}$ possible legal joint actions at $\Sinit$ can lead to one of the next state within $S_{1},\ldots,S_{|R|}$. Concretly, $2^{|R|-1}$ joint actions lead to $S_{1}$, $2^{|R|-2}$ joint actions lead to $S_{2},\ldots$, 2 joint actions lead to $S_{|R|-1}$ and 2 joint actions lead to $S_{|R|}$.

    Since all $S_{i}$ are terminal states and $S_{i}$ is winning state for $p_{i}$, reachable with some legal joint actions from $\Sinit$, we know that the constructed game has a horizon 1, and is weakly winnable and playable. This confirms that if all 3 conditions are satisfied, $G$ can definitely be repaired to be 1-well-formed, hence, n-well-formed for any $n>0$.
\end{proof}
\begin{theorem}
Let $\mathcal{T}\!=\!\langle G,\Phi^{+},\Phi^{-},cost\rangle$ be a repair task with $\psi_{end}(n) \!\in\! \Phi^{+}$ for some $n$. Let $K\!=\!max(1,|\Phi^{-}|)$ and $R$  the set of players in $G$.
If $\mathcal{T}$ has no solution and $|G_{E}|\!\geq\! K \cdot (n+1) \cdot |\mathcal{L}|+ n \cdot K\cdot |\mathcal{N}| \cdot (|R| + 1)$, then any repair task $\mathcal{T'}\!=\!\langle G',\Phi^{+},\Phi^{-},cost'\rangle$, where $G'\!=\!G_{L} \cup G_{N} \cup G_{E}' \cup G_{R}$ (with $|G_{E}'|$ empty rules), also has no solution. 
\end{theorem}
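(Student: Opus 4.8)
The plan is to argue the contrapositive: assuming the larger task $\mathcal{T}'$ (with $|G'_E|\ge|G_E|$) has a solution, I construct a solution for $\mathcal{T}$, contradicting that $\mathcal{T}$ has none. So fix a valid repair $\mathcal{R}'$ with $G''=rep(G',\mathcal{R}')$ satisfying all $\phi^+\in\Phi^+$ and violating all $\phi^-\in\Phi^-$. The decisive hypothesis is $\psi_{end}(n)\in\Phi^+$: it forces every $n$-max sequence of $G''$ into $T\cup B$ within $n$ steps, so the entire behaviour relevant to the task is confined to the depth-$n$ unfolding of $G''$.

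First I would fix witnesses: for each $\phi^-_j\in\Phi^-$ pick one $n$-max sequence $\pi_j=(S_0,\dots,S_{m_j})$ with $G'',\pi_j\not\models_t\phi^-_j$ (when $\Phi^-=\emptyset$, set $K=1$ and take any single $n$-max sequence as an anchor); by $\psi_{end}(n)$ each has $m_j\le n$. Together the $K=\max(1,|\Phi^-|)$ witnesses form a finite DAG rooted at $\Sinit$ with at most $K(n{+}1)$ distinct states and at most $nK$ transitions. I then build a game $G^*$ obtainable from $G$ using only the $|G_E|$ fresh slots (deleting all old legal/next rules via change tuples $\langle i,(c,\emptyset)\rangle$ so they cannot interfere) that embeds this DAG: at each state I install legal rules whose bodies are the full $\gdl{true}$-description of that state, so each fires exactly there and enables precisely the witness actions; and for each transition I install next rules keyed on the current state and joint action to recreate the intended successor, while catch-all rules guarded by literals $not\ \gdl{does}(p,a_p)$ divert every other (spurious) joint action to a fixed terminal sink. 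Counting the legal rules gives at most $|\mathcal{L}|$ per state over $\le K(n{+}1)$ states, i.e.\ $K(n{+}1)|\mathcal{L}|$; the next rules comprise, per transition and per target atom in $\mathcal{N}$, one on-path rule plus one deviation-catcher per player, i.e.\ $(|R|{+}1)$ rules, giving at most $nK|\mathcal{N}|(|R|{+}1)$ in total. This is exactly the stated bound, so $G^*$ fits and is a valid repair of $\mathcal{T}$.

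The hard part will be verifying that $G^*\models_t\phi^+$ for every $\phi^+\in\Phi^+$, since this must hold for all $n$-max sequences of $G^*$, not merely the embedded witnesses. Enabling witness actions player-by-player inevitably creates spurious joint actions---the cross-product of independently legal single moves---wherever two witnesses branch or merge at a shared state, so $G^*$ has sequences that $G''$ never had; the deviation-catchers keep these under control by terminating them at once. The genuinely delicate step is to show that these truncated, early-terminating sequences still satisfy each $\phi^+$. Here I would exploit that GTL has only the weak-next $\X$: on a sequence that terminates at step $i$, every nested $\X$ beyond depth $i$ is vacuously true, so the value of a degree-$d$ formula depends only on the on-witness prefix $S_0,\dots,S_i$ and the sink. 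The subtlety is that negated-next subformulas do not collapse for free, so the sink must be chosen compatibly with $\Phi^+$ (e.g.\ routing deviations into a continuation that agrees with an existing $\Phi^+$-satisfying witness up to depth $d$, rather than into an arbitrary terminal); establishing that such a compatible routing always exists, and that the embedded witnesses themselves still satisfy $\phi^+$ once the legal relation has been thinned to the witness actions, is the crux. Once this is in place, $G^*$ solves $\mathcal{T}$, contradicting the hypothesis and proving that $\mathcal{T}'$ has no solution either.
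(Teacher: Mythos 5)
There is a genuine gap, and you locate it yourself: everything you defer as ``the crux'' is exactly the part of the argument that the paper has to (and does) solve, and your two concrete design choices are what make it unsolvable in your setup. First, you thin the legal relation at witness states to ``precisely the witness actions.'' GTL atoms may be of the form $legal(p,a)$ (the logic explicitly admits them, and e.g.\ $\psi_{play}(n)\in\Phi^{+}$ uses them), so changing which actions are legal at a witness state can flip the truth of formulas in $\Phi^{+}$ and $\Phi^{-}$ along the very sequences you are trying to preserve; your witnesses need not remain witnesses. The paper avoids this by installing, at each witness state $S$, legal rules reproducing \emph{all} legal actions $lg(S,p)$ of the solution game $G^{1}$ at $S$, so that every GTL-eligible ground atom has the same truth value at every witness state in the constructed game as in $G^{1}$ (its condition~2). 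Second, you route spurious joint actions to a ``fixed terminal sink'' and then need the sink to be ``compatible'' with every $\phi^{+}$ --- a condition you cannot in general discharge (note also that a terminal state need not even exist: witnesses may end in non-playable states). The paper instead routes every spurious joint action to a \emph{default successor} $S_{d}\in succ(S)$, i.e.\ to another witness state. Consequently every maximal sequence of the constructed game is a path along witness edges of $G^{1}$ starting at the initial state and ending in $T\cup B$; any such path is itself an ending sequence of $G^{1}$, and since $G^{1}$ solves $\mathcal{T}'$, it satisfies $\phi^{+}$ outright (the paper's Statement~1). No weak-next truncation argument and no formula-compatibility analysis are needed --- the whole difficulty you flag dissolves once deviations are kept inside the witness DAG rather than sent outside it.

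Apart from this, your skeleton does match the paper's: the contrapositive framing, extracting $K=\max(1,|\Phi^{-}|)$ witness sequences of length at most $n$ (using $\psi_{end}(n)\in\Phi^{+}$), deleting all old legal/next rules and rebuilding the witness DAG inside the $G_{E}$ slots, and the rule count $K(n{+}1)|\mathcal{L}|$ for legal rules plus $nK|\mathcal{N}|$ on-path next rules plus $nK|\mathcal{N}||R|$ deviation rules (the paper's deviation rules are keyed per state and per player via a lexicographic ordering of joint-action vectors, using bodies of the form $not~does(p_{i},a_{i}^{1}),\ldots,not~does(p_{i},a_{i}^{b})$, which is essentially your catch-all idea). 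So the proposal is a correct outline with the decisive step missing, and with two choices (thinned legality, terminal sink) that would have to be reverted to close it.
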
  
\begin{proof}
    Recall that $\beta$ is the set of base propositions, $\gamma$ is the move domain, and $\gamma(p)$ is the move domain of player $p$.
    
    First, it is trivial that if a repair task has solutions, then increasing the size of $|G_{E}|$ won't make it unsolvable, because we can simply ignore those additional empty rules and use the old solution repair as the solution to the new problem. 

    Hence, the theorem is equivalent to showing that if the repair task $T'=\langle G',\Phi^{+},\Phi^{-},cost' \rangle$ where $G'=G_{L} \cup G_{N} \cup G_{E}' \cup G_{R}$ has a solution, the repair task $T''=\langle G'',\Phi^{+},\Phi^{-},cost'' \rangle$ where $G''=G_{L} \cup G_{N} \cup G_{E}'' \cup G_{R}$ and $|G_{E}''|=K \cdot (n+1) \cdot |\mathcal{L}|+ n \cdot K\cdot |\mathcal{N}| \cdot (|R| + 1)$ must have a solution. 
     In other words, as long as we can construct a $\gd$ $G^{sol}$ with at most $K \cdot (n+1) \cdot |\mathcal{L}|+ n \cdot K\cdot |\mathcal{N}| \cdot (|R| + 1)$ legal/next rules in total such that $G^{sol} \models_{t} \phi^{+}$ for all $\phi^{+} \in \Phi^{+}$ and $G^{sol} \not\models_{t} \phi^{-}$ for all $\phi^{-} \in \Phi^{-}$ based on the fact that $\mathcal{T}'$ has a solution, we can repair $G''$ by eliminating all rules in $G''_{L}$ and $G''_{N}$ and replace the section $G''_{E}$ with $G^{sol}_{L} \cup G^{sol}_{N}$, and hence we find a solution to $\mathcal{T''}$ and we are done.

    Let's construct $G^{sol}$ based on the fact that $\mathcal{T}'$ has a solution repair $\mathcal{R}$. Let $G^{1}=rep(G',\mathcal{R})$, we know that $G^{1} \models_{t} \phi^{+}_{i}$ for all $\phi^{+}_{i} \in \Phi^{+}$ and $G^{1} \not\models_{t} \phi^{-}_{i}$ for all $\phi^{-}_{i}~\in~\Phi^{-}$.   
    Suppose that $\phi^{+}=\phi^{+}_{1} \wedge \phi^{+}_{2} \ldots \wedge \phi^{+}_{|\Phi^{+}|}$, $G^{1} \models_{t} \phi^{+}$ must hold. In addition, since $\psi_{end}(n) \in \Phi^{+}$ and $deg(\psi_{end}(n))=n$, we have that $deg(\phi^{+}) \geq n$ and there exists no sequences $(S_{0},\myDots,S_{m})$ such that $m = n$ and $S_{m} \notin B \cup T$ where $T$ is the set of all terminal states, and $B$ is the set of all non-playable states in the game described by $G^{1}$. Furthermore, since $G^{1}$ is a solution $\gd$, there must exists $K$ play sequences $(S_{0}^{i},\myDots,S_{d_{i}}^{i})$ where $S_{0}^{i}=\Sinit$ for all $1 \leq i \leq K$ such that the $i$-th one ensures that $G^{1},(S_{0}^{i},\myDots,S_{d_{i}}^{i}) \not\models_{t} \phi^{-}_{i}$ and $G^{1},(S_{0}^{i},\myDots,S_{d_{i}}^{i}) \models_{t} \phi^{+}$. We also know that $0 \leq d_{i} \leq n$ for all $1 \leq i \leq K$, and $S^{i}_{d_{i}} \in T \cup B$ must hold. Note that for the case when $\Phi^{-}=\{\}$, we have that $K=1$, and there must exist a sequence $(S_{0}^{1},\myDots,S_{d_{1}}^{1})$ with $G^{1},(S_{0}^{1},\myDots,S_{d_{1}}^{1}) \models_{t} \phi^{+}$. 
    
    \subsubsection{Auxiliary notations} Before we proceed with the proof, we define some auxiliary notations. 
    
    We define the set of end states $\mathcal{E}$ as:
    \begin{equation*}
        \mathcal{E}=\{S^{i}_{d_{i}} \mid 1 \leq i \leq K\}
    \end{equation*}

    We define the set of states $\mathcal{C}$ as:
    \begin{equation*}
        \mathcal{C} = \bigcup_{i=1..K} \{S \mid S=S^{i}_{j}~and~0 \leq j \leq d_{i}\}
    \end{equation*}
    Intuitively, $\mathcal{C}$ is the set of states $S \subseteq \beta$ that are on at least one of the $K$ play sequences we discussed above.

    For any 2 states $S_{1},S_{2} \subseteq \beta$, we write $S_{1} \Rightarrow S_{2}$ iff 
    \begin{itemize}
        \item $S_{1} \in \mathcal{C}$ and $S_{2} \in \mathcal{C}$, and 
        \item For some $0 \leq j \leq d_{i} -1$ and $1 \leq i \leq K$,  $S_{1}=S^{i}_{j}$ and $S_{2}=S^{i}_{j+1}$. 
    \end{itemize}
    Intuitively, $S_{1} \Rightarrow S_{2}$ means $S_{2}$ is the successor of $S_{1}$ on \emph{some} of the $K$ playing sequences. 

    For any $S \in \mathcal{C}$ we define 
    \begin{itemize}
        \item $succ(S)=\{S' \in \mathcal{C} \mid S \Rightarrow S'\}$.
    \end{itemize}
    which is the set of all successor states of $S$. 
    
    For any $S \in \mathcal{C}$ and $p \in R$ we define
    \begin{itemize}
        \item $lg(S,p)=\{a\in \gamma(p) \mid G^{1} \cup S^{true} \models legal(p,a)\}$
    \end{itemize}
    which is the set of all legal actions for $p$ at state $S$ in $G^{1}$.

    We continue with the proof by first considering the following statement.
    
    \subsubsection{Statement 1.} For any play sequences $(S_{0},\myDots,S_{m})$ of $G^{1}$, if $S_{i} \Rightarrow S_{i+1}$ for all $0 \leq i \leq m-1$ and $S_{m} \in T \cup B$, then $G^{1},(S_{0},\myDots,S_{m}) \models_{t} \phi^{+}$.
    
    The statement trivially holds because $(S_{0},\myDots,S_{m})$ is just one of the ending sequences in $G^{1}$ and $G^{1}$ is a solution $\gd$. 

    \subsubsection{Statement 2.} If $G^{1}$ is a solution $\gd$, then there must exist a $\gd$ $G^{sol}$ that satisfies all the following 5 conditions, and $G^{sol} \models_{t} \phi^{+}$ and $G^{sol} \not\models_{t} \phi^{-}_{i}$ for all $\phi^{-}_{i} \in \Phi^{-}$. 
    \begin{enumerate}
        \item The total number of legal/next rules in $G^{sol}$ is no more than $K \cdot (n+1) \cdot |\mathcal{L}|+ n \cdot K\cdot |\mathcal{N}| \cdot (|R| + 1)$.
        \item For every ground atom $q(\vec t\,)$ that is an instance of some predicate $q \in G^{1}$ that is not $\gdl{init}$ and does not depend on $\gdl{does}$, and for every $S \in \mathcal{C}$, we have: $G^{sol} \cup S^{true} \models q(\vec t\,)$ iff $G^{1} \cup S^{true}\models q(\vec t\,)$
        \item $\mathcal{C}$ is the set of all states that are reachable from $S_{0}$ in $G^{sol}$. 
        \item For every state $S \in \mathcal{C} \setminus \mathcal{E}$ and every $S' \in succ(S)$, there exists a joint action $A$ (cf. Sec.~\ref{sec:prelim}) such that $A(p) \in lg(S,p)$ for every $p \in R$ and $S'=\{f \in \beta \mid G^{1} \cup S^{true} \cup A^{does} \models next(f) \}$. 
        \item For every state $S \in \mathcal{C} \setminus \mathcal{E}$ and joint action $A$ such that $A(p) \in lg(S,p)$ for every $p \in R$. If $S'=\{f \in \beta \mid G^{1} \cup S^{true} \cup A^{does} \models next(f) \}$, then $S' \in succ(S)$.
    \end{enumerate}

    Condition 1 ensures that the total number of rules in $G^{sol}$ is within the upper bound; hence, it is obvious that if we can prove statement 2, we prove the entire theorem. 
    
    We claim that if there exists a $\gd$ $G^{sol}$ that satisfies all 5 conditions, then $G^{sol} \models_{t} \phi^{+}$ and $G^{sol} \not\models_{t} \phi^{-}_{i}$ for all $\phi^{-}_{i} \in \Phi^{-}$ is automatically satisfied. 

    Condition 3 ensures that the state space of $G^{sol}$ is a subset of the state space of $G^{1}$. Condition 2 ensures that at any state $S$ in $G^{sol}$, any ground atom that can be part of any GTL expression holds if and only if it holds at state $S$ in $G^{1}$. Conditions 3 and 4 ensure that for every $1 \leq i \leq K$, $(S_{0}^{i},\myDots,S_{d_{i}}^{i})$ is still a valid sequence in $G^{sol}$. Therefore, combined with condition 2, $G^{sol} \not\models_{t} \phi_{i}^{-}$ still holds for all $\phi_{i}^{-} \in \Phi^{-}$. Condition 5 ensure that all sequences $(S_{0},\myDots,S_{k})$ in $G^{sol}$ must have $S_{0},S_{1},\ldots,S_{k} \in \mathcal{C}$ and $S_{i} \Rightarrow S_{i+1}$ for every $0 \leq i < k$. Hence, by Statement 1, we can ensure that $G^{sol} \models_{t} \phi^{+}$. This implies that $G^{sol}$ is a solution $\gd$ as long as $G^{1}$ is. Therefore, our proof reduces to constructing a $\gd$ $G^{sol}$ in restricted form that satisfies all the 5 conditions above based on the fact that \emph{there exists} some solution $\gd$ $G^{1}$. 

    Since $G_{R}$ cannot be changed by the repair operations (i.e., $G_{R}=G_{R}^{1}=G_{R}^{sol}$) and $G^{1}$ is in restricted form, we know that for every predicate symbol $q \neq legal$, condition 2 automatically holds in $G^{sol}$. Hence, the first step of our construction of $G^{sol}$ is to satisfy condition~2 for the case when $q=legal$. To do so, we create the following legal rule for each $S \in \mathcal{C}$, $p \in R$, and each $a \in lg(S,p)$:
    \begin{equation}
        \label{eq:legalrules}
        legal(p,a)\texttt{:-}\bigwedge_{f \in S} true(f),\bigwedge_{g \in \beta \setminus S} not~true(g). \tag{E6}
    \end{equation}
    Here, $\bigwedge_{f \in S} true(f)$ abbreviates $true(f_{1}),\ldots,true(f_{k})$ for all $f_{1},\ldots,f_{k} \in S$, and $\bigwedge_{f \in S'} not~true(f)$ abbriviates $not~true(f_{1}),\ldots,not~true(f_{t})$ for all $f_{1},\ldots,f_{t} \in S'$.

    To satisfy conditions 3, 4 and 5, for each state $S \in \mathcal{C} \setminus \mathcal{E}$, we need to map all joint actions $A$ (cf. Section 2) such that $A(p) \in lg(S,p)$ for all $p \in R$ to some $S'$ with $S \Rightarrow S'$ such that $S~\xrightarrow[]{A} S'$. Additionally, for each $S' \in succ(S)$, there must be some legal joint actions $A$ such that $S~\xrightarrow[]{A} S'$.

    Our strategy is to first link each successor state $S'$ of $S$ with some legal joint actions $A$, and for the remaining legal joint actions, we map all of them to a ``default successor'' of $S$. For the default successor of $S$, we denote it as $S_{d}$, which can be any element in $succ(S)$.

    For each state $S \in \mathcal{C} \setminus \mathcal{E}$, and for each of the $|R|$ players, we denote the legal actions of player $i$ as $a_{i}^{0},a_{i}^{1},\ldots,a_{i}^{l_{i}-1}$ where $l_{i}=|lg(S,p_{i})|$.
    We define a joint action vector as $\langle b_{1},b_{2},\ldots,b_{|R|-1},b_{|R|} \rangle$, where $0 \leq b_{i} \leq l_{i}-1$ for all $1 \leq i \leq |R|$, which corresponds to the joint action $\{(p_{1},a_{1}^{b_{1}}),(p_{2},a_{2}^{b_{2}}),\ldots,(p_{|R|},a_{|R|}^{b_{|R|}})\}$. Apparently there are $V = l_{1} \cdot l_{2} \cdot \ldots \cdot l_{|R|}$ many joint action vectors at state $S$. Since $G^{1}$ is a solution $\gd$, we know that $V \geq |succ(S)|$.

    Consider two joint action vectors: $v_{1}=\langle b_{1},b_{2},\ldots,b_{|R|} \rangle $
   and $v_{2}=\langle b'_{1},b'_{2},\ldots,b_{|R|} \rangle $.
    We say $v_{1}$ is lexicographically smaller than $v_{2}$ iff there is some $1 \leq i \leq |R|$ such that $b_{i} < b'_{i}$ and $b_{j}=b'_{j}$ for all $1 \leq j < i$. Note that this is the same as the definition of one vector being smaller than the other in mathematics. For example, the action vector $\langle 1, 2, 4 \rangle $ is smaller than $\langle 1, 3, 2\rangle$. We denote the $i$-th smallest joint action vector at $S$ to be $v_{i}=\langle b^{i}_{1},b^{i}_{2},\ldots,b^{i}_{|R|} \rangle$. We first map $S$ to every successor $S' \in succ(S)$ with the joint actions that are lexicographically smaller than or equal to $v_{|succ(S)|}$.  

    Suppose that $succ(S)=\{S'_{1},\ldots,S'_{|succ(S)|}\}$, we create the following next rule for every $1 \leq i \leq |succ(S)|$ and every base proposition $g \in S'_{i}$.
    \begin{multline}
        \label{eq:nextrule1}
        next(g) \texttt{:-} \bigwedge_{f \in S} true(f), \bigwedge_{f \in \beta \setminus S} not~ true(f), \\does(p_{1},a^{b_{1}^{i}}_{1}),~does(p_{2},a^{b_{2}^{i}}_{2}),
        \\\ldots,does(p_{|R|},a^{b_{|R|}^{i}}_{|R|}). \tag{E7}
    \end{multline}
    Note that for each $S'_{i}$, we need at most $|\mathcal{N}|$ many such rules, this rule simply says that $S$ can go to $S'_{i}$ with the joint action corresponds to the joint action vector $v_{i}$.

    We then map all joint actions such that their joint action vector is greater than $v_{|succ(S)|}$ to a default successor of $S$, $S_{d}$. For clarity, we let $c=|succ(S)|$. 
    We create the following next rule for each $g \in S_{d}$ and $1 \leq i \leq |R|$.
    \begin{multline}
        \label{eq:nextrule2}
        next(g) \texttt{:-} \bigwedge_{f \in S} true(f), \bigwedge_{f \in \beta \setminus S} not~ true(f), \\does(p_{1},a^{b^{c}_{1}}_{1}),does(p_{2},a^{b^{c}_{2}}_{2}), \ldots,
does(p_{i-1},a^{b^{c}_{i-1}}_{i-1}), \\
not~does(p_{i},a^{1}_{i}),not~does(p_{i},a^{2}_{i}), \ldots,not~does(p_{i},a^{b^{c}_{i}}_{i}).
 \tag{E8}
    \end{multline}
    For each $i$, the rule of this form says that if a joint action vector $v'$ is greater than $v_{|succ(S)|}$ (aka. $v_{c}$), it can have the first $i-1$-th bit the same as $v_{c}$ (i.e., the second line) while the $i$-th bit is something greater than $b^{c}_{i}$ (i.e., the third line). If that's the case, the joint action should transit $S$ to the default successor $S_{d}$.

    The total number of legal rules we use in the form~(\eqref{eq:legalrules}) is at most the total number of states times the size of the overall move domain, which is $K \cdot (n+1) \cdot  |\mathcal{L}|$.

    The total number of next rules of the form~(\eqref{eq:nextrule1}) we use for connecting each state $S$ with its successors at least once is $\sum_{S \in \mathcal{C} \setminus \mathcal{E}} |succ(S)| \cdot |\mathcal{N}|$, which is $|\mathcal{N}| \cdot \sum_{S \in \mathcal{C} \setminus \mathcal{E}} |succ(S)|$. The total number of successor states of all states is bounded by $d_{1}+d_{2}+\ldots+d_{K}$. Note that since $\phi_{end}(n) \in \Phi^{+}$, we know that $d_{i} \leq n$ for all $i$. Thus, $\sum_{S \in \mathcal{C} \setminus \mathcal{E}} |succ(S)| \leq n \cdot K$.

    The total number of next rules of form~(\eqref{eq:nextrule2}) we use for connecting each state $S$ with that default successor $S_{d}$ is at most $|\mathcal{N}| \cdot |R|$ rules for each state $S \in \mathcal{C} \setminus \mathcal{E}$. In total, this is at most $n \cdot K \cdot |\mathcal{N}| \cdot |R|$.

    To sum up, the total number of legal/next rules we use is at most $K \cdot (n+1) \cdot  |\mathcal{L}|+n \cdot |\mathcal{N}| \cdot K \cdot (1+|R|)$, and condition 1 holds and so does Statement 2. 

    In conclusion, as long as there exists a solution $\gd$ $G^{1}$, we know that there must exist a solution $\gd$ $G^{sol}$ that contains at most $K \cdot (n+1) \cdot  |\mathcal{L}|+n \cdot |\mathcal{N}| \cdot K \cdot (1+|R|)$ legal/next rules and the proof is complete. 
    

\end{proof}

\setcounter{theorem}{4}
\renewcommand{\thelemma}{\Alph{section}\arabic{theorem}}
\begin{theorem}
    $\mrpt$ is $\Delta_{3}^{P}$-complete.
\end{theorem}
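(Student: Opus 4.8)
The plan is to prove both directions separately, treating $\mrpb$ as a $\Sigma_2^P$ oracle by Theorem~\ref{brp:sigma2}. For membership I would compute the optimal repair cost by binary search over this oracle and then settle the tuple question with a single re-weighted oracle call, giving a $P^{\Sigma_2^P}=\Delta_3^P$ procedure. For hardness I would reuse the exact gadget constructed in the proof of Theorem~\ref{brp:sigma2} and reduce from the $\Delta_3^P$-complete ``lexicographically smallest satisfying assignment bit'' problem for $\exists\forall$-QBFs.

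\textbf{Membership.} First I would observe that, by Definitions~\ref{def:repset} and~\ref{def:cost}, a valid repair has at most $|\mathcal{I}|\cdot|Dom|$ tuples, each of polynomially bounded cost, so the total cost of any repair has polynomially many bits. Since ``there is a solution repair of cost $\le C$'' is monotone in $C$, I would binary-search for the least $C^{*}$ at which the $\mrpb$ oracle answers yes (if none exists the task is unsolvable and $\mrpt$ answers no), using polynomially many $\Sigma_2^P$ queries. Given $C^{*}$, to test whether $t=\langle i,dom\rangle$ occurs in some optimal repair I would reweight: set $cost'(j,dom')=2\,cost(j,dom')$ for $\langle j,dom'\rangle\neq t$ and $cost'(i,dom)=2\,cost(i,dom)-1$, so that for every repair $\mathcal{R}$ we get $cost'(\mathcal{R})=2\,cost(\mathcal{R})$ when $t\notin\mathcal{R}$ and $2\,cost(\mathcal{R})-1$ when $t\in\mathcal{R}$. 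I would then verify that the $\mrpb$ instance with cost $cost'$ and bound $2C^{*}-1$ is a yes-instance iff some cost-$C^{*}$ solution contains $t$: any repair meeting $2C^{*}-1$ must use $t$ (otherwise $cost'$ is even and at least $2C^{*}$) and must then have original cost exactly $C^{*}$, while conversely an optimal repair containing $t$ attains $cost'=2C^{*}-1$. One additional oracle call therefore decides $\mrpt$, placing it in $\Delta_3^P$.

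\textbf{Hardness.} I would reduce from the $\Delta_3^P$-complete problem~\cite{krentel1992generalizations} of deciding whether $x_n=\top$ in the lexicographically smallest satisfying assignment to the existential variables of a valid QBF of form~\eqref{qbf}, keeping the same $G$, $\Phi^{+}$, $\Phi^{-}$ and cost function as in the proof of Theorem~\ref{brp:sigma2}. The crucial structural fact is that, since every change other than deleting some $true(x_i)$ costs $2^{n}$ whereas the budget forces cost $\le 2^{n}-1$, every solution repair is exactly the set of deletions $true(x_i)$ it performs, i.e.\ a satisfying existential assignment (deleting $true(x_i)$ encoding $x_i=\top$). Because deleting $true(x_i)$ costs $2^{n-i}$, the cost of a solution is the binary number whose bit $n-i$ records $x_i$; hence distinct solutions have distinct costs, the minimum-cost solution is unique, and it corresponds precisely to the lexicographically smallest satisfying assignment (with false below true and $x_1$ most significant). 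Consequently $x_n=\top$ in that assignment iff bit $0$ of $C^{*}$ is set iff the unique optimal repair contains $\langle n+2m,(-,true(x_n))\rangle$, which is exactly the $\mrpt$ query; the whole construction is PTIME, and I would fix the convention that an unsatisfiable QBF maps to a no-instance to match the ``no optimal repair'' case.

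\textbf{Main obstacle.} I expect the delicate part to be the membership reweighting argument: I must show rigorously that doubling all costs and shaving one unit off $t$ cleanly separates ``optimal and containing $t$'' repairs (odd target cost $2C^{*}-1$) from every other repair (even $cost'$, hence at least $2C^{*}$), and that binary search plus the final probe together issue only polynomially many oracle calls, which relies on the total repair cost having polynomial bit-length. On the hardness side the subtle point is aligning orderings: verifying that the geometric weights $2^{n-i}$ make cost minimization coincide with lexicographic minimization, and that no cheaper ``off-budget'' repair touching another rule can slip under $2^{n}-1$, so the optimal solution is genuinely unique and its parity faithfully reports the queried bit.
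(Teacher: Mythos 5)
Your proposal is correct and follows essentially the same route as the paper's proof: membership via binary search for the optimal cost $C^{*}$ followed by the doubling-minus-one reweighting trick and a single extra $\mrpb$ oracle call, and hardness by reusing the Theorem~\ref{brp:sigma2} gadget to reduce from the lexicographically smallest satisfying existential assignment problem of \cite{krentel1992generalizations}. If anything, your write-up makes the parity argument separating optimal repairs containing $t$ (odd cost $2C^{*}-1$) from all other repairs (even cost $\geq 2C^{*}$) more explicit than the paper does.
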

\begin{proof} (Membership) Suppose the change tuple is $\langle i, dom \rangle$. Since the maximum possible cost of a valid repair is bounded ($\defshort$~\ref{def:repset} and~\ref{def:cost}) and the answer to $\mrpb$ is monotonic, we can binary search for the lowest cost bound $C$ such that the $\mrpb$ returns ``yes''. If such a $C$ doesn't exist, the answer to $\mrpt$ is ``no''. Once we have the smallest $C$ after the binary search procedure, we can check if a change tuple is some lowest-cost repair with an additional $\mrpb$ oracle call. Basically, we set a new cost function: $cost'(j,dom')=2 \cdot cost(j,dom')$ if $\langle j,dom' \rangle \neq \langle i,dom \rangle$, and $cost'(i,dom) \!= \!2 \cdot cost(i,dom) \!- \! 1$. 
Here, $j \in \mathcal{I}$ and $dom \in Dom$ still holds. Then, we query $\mrpb$ with $\langle G,\Phi^{+},\Phi^{-},cost',2 \cdot C-1 \rangle$. We claim that the answer to $\mrpt$ is yes iff the modified $\mrpb$ returns true. Note that, under the new cost function, the cost of any repair that contains $\langle i,dom \rangle$ is $2 \cdot A - 1$ where $A$ is the cost of the repair under the old cost function. Similarly, the cost of any repair that does not contain $\langle i,dom \rangle$ is $2 \cdot A$ under the old cost function. If the new $\mrpb$ returns yes, it means that the optimal solution costs $2 \cdot C -1$ for the new MRP, thus, there is a repair cost of $C$ for the old repair task. Hence, $\mrpt$ returns yes. Conversely, if $\mrpt$ returns yes, the answer to $\mrpb$ must be yes, because there is a solution repair of cost $2 \cdot C - 1$ by using the same repair for the old MRP. 
The total number of $\mrpb$ oracle calls is polynomial w.r.t. the size of the input. By $\theoremshort$~\ref{brp:sigma2}, $\mrpt$ is in $P^{\Sigma_2^{P}}=\Delta_{3}^{P}$. 
    
    (Hardness) For a TQBF of the form~(\eqref{qbf}), we create an $\mrpt$ with the same $G$, $\Phi^{+}$, $\Phi^{-}$, and cost function as in $\theoremshort$~\ref{brp:sigma2}. Since~(\eqref{qbf}) is true, we know that the optimal repair costs $\leq 2^{n}\!-\!1$. Note that because of the way the cost function is constructed (i.e., $1,2,4,\ldots,2^{n-1}$), we know that the optimal repair is also unique. 
    The proof of $\theoremshort$~\ref{brp:sigma2} shows a one-to-one correspondence between the satisfiable assignments of existential variables in~(\ref{qbf}) and valid repairs with cost $\leq 2^{n}\!-\!1$. Concretely, for a TQBF of the form (\eqref{qbf}), let $\sigma = \existblock \rightarrow \{\top,\bot\}$ be an assignment to the variables in $\existblock$ of (\eqref{qbf}). We know that $\sigma$ is a satisfiable assignment to existential variables in $\Psi$ (i.e., $\forall \univblock ~E_{\sigma}$ is a tautology) iff the repair containing change tuples $\{\langle i+2m ,(-,true(x_{i}))\rangle~|~\sigma(x_{i})=\top\}$ is a solution repair to the $\mrpb$ defined in $\theoremshort$~\ref{brp:sigma2}. Hence, the $\Delta_{3}^{P}$-hard problem ``Deciding if the lexicographically minimum satisfiable assignment of the existential variables in a TQBF of form (\ref{qbf}) has $x_{n}=\top$~''~\cite{mazzotta2024quantifying,krentel1992generalizations} 
    can be reduced to checking if the change tuple $\langle n\!+\!2m,(-,true(x_{n})) \rangle$ is in the optimal repair $\mathcal{R}$. This implies that $\mrpt$ is $\Delta_3^{P}$-hard. 
\end{proof}
\begin{theorem}
    MRP is $F\Delta_3^{P}$-complete.
\end{theorem}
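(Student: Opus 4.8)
The plan is to prove both containment in $F\Delta_{3}^{P}$ and $F\Delta_{3}^{P}$-hardness, reusing almost all of the machinery from Theorem~\ref{mrp:tuple}.

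For \textbf{membership}, recall that MRP must output an \emph{actual} lowest-cost repair, so I would first recover the optimal cost and then the repair itself. Since the total cost of any valid repair is bounded by a value with only polynomially many bits (Definitions~\ref{def:repset} and~\ref{def:cost}) and the answer to $\mrpb$ is monotone in the cost bound, a binary search over the bound determines the optimal cost $C^{*}$ with polynomially many $\mrpb$ queries, each a $\Sigma_{2}^{P}$ oracle call. To commit to a \emph{single} optimal repair rather than merely testing individual tuples, I would perturb the cost function so that the optimum becomes unique: fix an enumeration $t_{1},\dots,t_{N}$ of the (polynomially many) candidate change tuples, scale every original cost by a large factor, and add strictly decreasing tie-breaker weights indexed by this enumeration, so that the unique minimizer of the perturbed cost is exactly the lexicographically least original optimum. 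The perturbed costs still have polynomially many bits, so this stays within budget. Under a \emph{unique} optimum, the predicate ``some optimal repair contains $t_{j}$'' collapses to ``the optimum contains $t_{j}$'', so I would run the $\mrpt$ decision procedure of Theorem~\ref{mrp:tuple} once per tuple and collect the positive answers to reconstruct the whole repair. Each such query lies in $\Delta_{3}^{P}$ and there are only polynomially many of them, so the entire computation uses polynomially many $\Sigma_{2}^{P}$ oracle calls; hence MRP is in $F\Delta_{3}^{P}$.

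For \textbf{hardness}, I would reduce from the $F\Delta_{3}^{P}$-complete function problem of computing the lexicographically smallest satisfying assignment to the existential block of a TQBF of the form~\eqref{qbf}~\cite{krentel1992generalizations}, using verbatim the game description, $\Phi^{+}$, $\Phi^{-}$, and the cost schedule $1,2,4,\dots,2^{n-1}$ from Theorem~\ref{mrp:tuple}. That proof already shows that this cost schedule forces the optimal repair to be \emph{unique} and places its change tuples $\langle i+2m,(-,true(x_{i}))\rangle$ in one-to-one correspondence with the lexicographically smallest satisfying assignment, with $x_{i}=\top$ iff the corresponding tuple is present. Therefore a single call to MRP, followed by reading off which tuples $\langle i+2m,(-,true(x_{i}))\rangle$ occur in the returned repair, yields the desired assignment in polynomial time, establishing $F\Delta_{3}^{P}$-hardness.

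The \textbf{main obstacle} is the commitment step in the membership argument: in contrast to $\mrpt$, the function problem cannot simply test tuples independently but must output one coherent optimal repair. The tie-breaking perturbation resolves this, and the crux is to verify that the chosen scaling and tie-breaker weights (i) keep the original cost as the dominant term, so that optimality under the perturbed cost implies optimality under the original cost, (ii) make the perturbed optimum strictly unique, and (iii) remain polynomially bounded in size so that each per-tuple $\mrpt$ query stays in $\Delta_{3}^{P}$. Once this is in place, the hardness direction is immediate from the uniqueness already argued in Theorem~\ref{mrp:tuple}, and the $F\Delta_{3}^{P}$ bound is tight.
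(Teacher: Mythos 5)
Your proposal is correct, and its hardness half coincides with the paper's: the same reduction from computing the lexicographically minimum satisfying assignment of the existential block of a TQBF of form~(\ref{qbf}), reusing the instance and the cost schedule $2^{n-1},\dots,2,1$ from Theorem~\ref{mrp:tuple}, with the assignment read off from the tuples $\langle i+2m,(-,true(x_{i}))\rangle$ in the returned repair. The membership half, however, takes a genuinely different route. The paper runs an \emph{adaptive} greedy self-reduction: after the binary search yields the optimal cost $C$, all costs are doubled, and the algorithm sweeps once through all change tuples, tentatively discounting the current tuple's cost to $2\cdot cost(i,dom)-1$ and asking the $\mrpb$ oracle whether a solution of cost at most $2C-n$ exists (where $n-1$ tuples have been committed so far); a ``yes'' commits the tuple and tightens the bound. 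This keeps all numbers within a factor of two of the original costs and uses the $\Sigma_{2}^{P}$ oracle directly. You instead make the optimum \emph{unique} by a global lexicographic perturbation and then decide membership of each tuple independently via the $\mrpt$ procedure of Theorem~\ref{mrp:tuple}; this is non-adaptive and cleanly reuses Theorem~\ref{mrp:tuple} as a black box, at the price of introducing exponentially large (though polynomial-bit-size) tie-breaker weights and having to verify your conditions (i)--(iii). One correction to your write-up: ``strictly decreasing'' tie-breaker weights do not by themselves guarantee uniqueness, since distinct subsets can have equal sums (e.g., $3=1+2$); you need superincreasing weights such as $w_{j}=2^{N-j}$, which is also what your claim of recovering the \emph{lexicographically least} original optimum implicitly requires. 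With that choice, distinct tuple sets have distinct tie-breaker sums, dominance and uniqueness hold, and your argument goes through; note also that your initial binary search for $C^{*}$ is then needed only to detect unsolvable instances, since the per-tuple $\mrpt$ calls perform their own searches internally.
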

\begin{proof}
    (Membership) Consider an MRP instance $\langle G, \Phi^{+}, \Phi^{-}, cost \rangle$.
    Suppose the change tuple is $\langle i, dom \rangle$. Since the maximum possible cost of a valid repair is bounded ($\defshort$~\ref{def:repset} and~\ref{def:cost}) and the answer to $\mrpb$ is monotonic, we can binary search for the lowest cost bound $C$ such that the $\mrpb$ returns ``yes''. If such a $C$ doesn't exist, there is no solution to MRP. Once we have the smallest $C$ after the binary search procedure, we modify the cost function to be $cost'(i,dom)=2 \cdot cost(i,dom)$, for all $\langle i, dom\rangle \in \mathcal{I} \times Dom$.
    
    We iterate through all elements of $\langle i, dom \rangle \in \mathcal{I} \times Dom$. We define $\mathcal{R}=\{\}$, and let $n=1$. For each $\langle i, dom \rangle$, we do the following:
    \begin{enumerate}
        \item Let $cost'(i,dom)=2 \cdot cost(i,dom)-1$. Query the $\mrpb$ oracle for the problem $\langle G, \Phi^{+}, \Phi^{-}, cost', 2 \cdot C - n \rangle$.
        \item If the result is ``yes'', add $\langle i, dom \rangle$ to $\mathcal{R}$, and set $n=n+1$.
        \item Otherwise, set $cost'(i,dom)=2 \cdot cost(i,dom)$.
    \end{enumerate}
    $\mathcal{R}$ is the optimal solution repair. Since the total number of calls to the $\mrpb$ oracle is polynomial w.r.t. the size of the input, we know that MRP is in $F \Delta_{3}^{P}$.
    
    (Hardness) Similar to the proof of Theorem~\ref{mrp:tuple}. For a TQBF of the form~(\eqref{qbf}), we create an $\mrpt$ with the same $G$, $\Phi^{+}$, $\Phi^{-}$, and cost function as in $\theoremshort$~\ref{brp:sigma2}. We reduce the $F\Delta_{3}^{P}$-hard problem ``Finding the lexicographically minimum satisfiable assignment of the existential variables in a TQBF of form (\ref{qbf})''~\cite{krentel1992generalizations} to finding the optimal repair of the MRP. And we should know that the lexicographically minimum satisfiable assignment has $x_{i}=\top$ iff $\langle i+2m,(-,true(x_{i}))\rangle$ is in the optimal repair, which completes the hardness part of the proof.  
\end{proof}
\begin{theorem}
          $\mrpb$ is NP-complete, 
     $\mrpt$ $\Delta_{2}^{P}$-complete, and MRP $F\Delta_{2}^{P}$-complete if $\Phi^{+}=\{\}$.      
\end{theorem}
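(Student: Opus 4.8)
The plan is to observe that the only structural change induced by $\Phi^{+}=\{\}$ occurs in the verification step: whereas the general membership argument of Theorem~\ref{brp:sigma2} needs a co-NP oracle to confirm $G'\models_{t}\phi^{+}$, dropping $\Phi^{+}$ leaves only the existential checks $G'\not\models_{t}\phi_i^{-}$. By Theorem~\ref{theorem:basecase}, $G'\not\models_{t}\phi_i^{-}$ holds iff the associated polynomial-size ASP program \emph{has} a stable model, so each such check is in NP. For $\mrpb$ membership I would therefore guess, simultaneously, the repair $\mathcal{R}$ together with a witnessing stable model for every $\phi_i^{-}\in\Phi^{-}$, then verify in PTIME that $\mathcal{R}$ is valid with cost $\le C$ (Definitions~\ref{def:repset},~\ref{def:cost}), that $G'=rep(G,\mathcal{R})$ is correctly computed, and that each guessed model indeed witnesses $G'\not\models_{t}\phi_i^{-}$; this places $\mrpb$ in NP. The memberships of $\mrpt$ and MRP then follow by re-running the binary-search-plus-oracle arguments of Theorems~\ref{mrp:tuple} and~\ref{mrp:optimization} \emph{verbatim}, except that every $\mrpb$ oracle call is now an NP oracle, yielding $P^{\mathrm{NP}}=\Delta_2^{P}$ and $FP^{\mathrm{NP}}=F\Delta_2^{P}$ respectively.

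For the hardness directions I would reuse the gadget of Theorem~\ref{brp:sigma2} with the universal block stripped away. Given a 3-CNF $F=C_1\wedge\dots\wedge C_k$ over $x_1,\dots,x_n$, I build a single-player $\gd$ whose repairable next rules are $next(x_i)\texttt{:-}true(x_i)$, so that deleting $true(x_i)$ turns the rule into the fact $next(x_i)$ and forces $x_i$ true in step~$1$ (while it stays false otherwise). The essential point, and the step I expect to be the main obstacle, is that the DNF-triggers-$terminal$ device of Theorem~\ref{brp:sigma2} would here only encode (trivial) DNF-satisfiability once the universal quantifier formerly supplied by $\Phi^{+}$ is gone. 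To recover genuine CNF satisfiability I would instead place the satisfaction logic entirely in the unmodifiable part $G_{R}$: auxiliary rules $sat_j\texttt{:-}\sigma(l_j^1)$, $sat_j\texttt{:-}\sigma(l_j^2)$, $sat_j\texttt{:-}\sigma(l_j^3)$ for each clause, and a single guarded rule $terminal\texttt{:-}sat_1,\dots,sat_k,true(r_1)$. A fresh proposition $r_1$, false in $\Sinit$ but made true in step~$1$ via the (unrepaired) fact $next(r_1)$, guarantees $\Sinit$ is non-terminal so that a step exists, and ensures $terminal$ fires in step~$1$ precisely when the repair-induced assignment satisfies $F$. Taking $\Phi^{-}=\{\neg\X terminal\}$ and $\Phi^{+}=\{\}$, a valid repair is a solution iff $G'\not\models_{t}\neg\X terminal$, i.e.\ iff some assignment satisfies $F$; costing each deletion $1$ and every other change above the bound $n$ then makes $\mrpb$ answer ``yes'' iff $F\in\mathrm{SAT}$, giving NP-hardness.

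For $\mrpt$ and MRP hardness I would keep the same construction but install the weighted cost $cost(\text{delete } true(x_i))=2^{\,n-i}$ (all other changes priced out of the bound), exactly as in Theorem~\ref{brp:sigma2}, so that the cost of a repair equals the binary value of the assignment it encodes. Distinct assignments then have distinct costs, so a satisfiable $F$ admits a \emph{unique} minimum-cost solution, and that solution encodes the lexicographically smallest satisfying assignment of $F$ (with $x_1$ most significant and false $<$ true). Reducing the $\Delta_2^{P}$-complete problem ``does the lexicographically smallest satisfying assignment set $x_n$ to true?'' to testing whether $\langle i_n,(-,true(x_n))\rangle$ lies in the optimal repair gives $\Delta_2^{P}$-hardness of $\mrpt$, and reducing the $F\Delta_2^{P}$-complete problem of \emph{computing} that assignment to reading off the deletions in the optimal repair gives $F\Delta_2^{P}$-hardness of MRP. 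As the paper already notes via Example~\ref{example:repair}, this $\Phi^{+}=\{\}$ regime is exactly the case of repairing weak winnability alone, where only the existence of one winning sequence per player must be witnessed, which is what makes the existential NP checks suffice.
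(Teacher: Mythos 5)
Your membership arguments coincide with the paper's: the paper likewise guesses the repair together with one witnessing $deg(\phi_{i}^{-})$-max sequence per $\phi_{i}^{-} \in \Phi^{-}$ (a sequence and a stable model of the program from Theorem~\ref{theorem:basecase} are interchangeable polynomial-size certificates), and it likewise dispatches $\mrpt$ and MRP by rerunning the binary-search-plus-oracle arguments one level down. For hardness your gadget genuinely differs: the paper encodes an assignment of the SAT instance through \emph{head changes} of legal rules (turning $legal(p_{i},neg)$ into $legal(p_{i},pos)$) in an $n$-player game, puts the clause logic into next rules triggered by $\gdl{does}$, and uses $\Phi^{-}=\{\neg \X goal(p_{1},100)\}$ with a conjunctive goal rule in $G_{R}$; you instead encode the assignment by deleting the bodies of frame axioms $next(x_{i})\texttt{:-}\,true(x_{i})$ in a single-player game and place the clause logic entirely in $G_{R}$. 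The latter is a clean alternative, since putting the $sat_{j}$ and $terminal$ rules in $G_{R}$ makes them unmodifiable by definition, whereas the paper has to protect its clause rules through the cost function. Your weighted costs $2^{n-i}$ and the reduction from lexicographically-least satisfying assignments for the $\mrpt$ and MRP parts are the same device the paper uses.

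However, one step of your reduction fails as written: your game has \emph{no legal rules at all}. Under the paper's semantics, if the single player has no legal action at $\Sinit$, then $\Sinit \in B$ (non-playable), so the only $1$-max sequence is the singleton $(\Sinit)$, and by the last-state clause of the GTL semantics $\X terminal$ holds vacuously on it. Consequently $rep(G,\mathcal{R}) \not\models_{t} \neg \X terminal$ for \emph{every} valid repair $\mathcal{R}$, including the empty one, and your $\mrpb$ instance answers ``yes'' regardless of whether $F$ is satisfiable. You took care that $\Sinit$ is non-terminal (via $r_{1}$), but ``a step exists'' also requires playability, i.e., at least one legal move. The fix is exactly what the paper builds in from the start with its rules $legal(p_{i},neg)$: give your player a move domain and a legal fact, e.g.\ $\gdl{input}(p,noop)$ and $legal(p,noop).$ in $G_{L}$, and price every change to that rule above the cost bound. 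With that patch, your construction goes through and yields the same completeness results.
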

\begin{proof}
        We only show that $\mrpb$ is NP-complete. Since proving $\mrpt$ is $\Delta_2^{P}$-complete, and MRP is $F\Delta_2^{P}$-complete when $\Phi^{+}=\{\}$ is very similar to the process of showing general $\mrpt$ is $\Delta_3^{P}$-complete, and MRP is $F\Delta_3^{P}$-complete after showing $\mrpb$ is $\Sigma_2^{P}$-complete.
        
        (Membership) One can guess a repair $\mathcal{R}$ and a $deg(\phi_{i})$-max sequence $(\Sinit^{i},\ldots,S^{i}_{n_{i}})$ where $\Sinit^{i}=\Sinit$ for each $1 \leq i \leq |\Phi^{-}|$. One can check if $\mathcal{R}$ costs $\leq C$ and $rep(G,\mathcal{R}),(\Sinit^{i},\ldots,S^{i}_{n_{i}}) \not\models_{t} \phi_{i}$  for each $\phi_{i} \in \Phi^{-}$ in PTIME. Hence, $\mrpb$ with $\Phi^{+}=\{\}$ is in NP. 
        
        (Hardness) Consider a SAT formula
        \begin{equation}
            \label{eq:sat}
            \Psi = C_{1} \wedge C_{2} \wedge \ldots \wedge C_{m} \tag{E10}
        \end{equation}
        such that $C_{i}=l_{i}^{1} \vee l_{i}^{2} \vee l_{i}^{3}$ where each $l_{i}^{j}$ is a literal over variables $X=\{x_{1},\ldots,x_{n}\}$. We know that deciding if $\Psi$ is satisfiable is NP-hard. We reduce such a problem to an $\mrpb$ with $\Phi^{+}=\{\}$. Consider a $\gd$ $G$ with $n$ players $p_{1},\ldots,p_{n}$, base propositions $c_{1},\ldots,c_{m}$, and the move domain of each player is $pos$ or $neg$. $G$ contains the following rules:
        \begin{itemize}
            \item $R_{1}=\bigcup_{i=1..n} \{[r_{i}]~ legal(p_{i},neg).\}$
            \item $R_{2}=\bigcup_{i=1..m} \{[r_{i+n}]~next(c_{i}) \texttt{:-} \mu(l_{i}^{1})\}$.
            \item $R_{3}=\bigcup_{i=1..m} \{[r_{i+n+m}]~next(c_{i}) \texttt{:-} \mu(l_{i}^{2})\}$.
            \item $R_{4}=\bigcup_{i=1..m} \{[r_{i+n+2m}]~next(c_{i}) \texttt{:-} \mu(l_{i}^{3})\}$.
            \item $R_{5}=goal(p_{1},100) \texttt{:-} true(c_{1}),\ldots,true(c_{m}).$
        \end{itemize}
         where, $\mu(l_{i}^{j})=\begin{cases} 
        does(p_{k},pos)& if~l_{i}^{j}=x_{k}~and~x_{k} \in X\\
        does(p_{k},neg)& if~l_{i}^{j}=\neg x_{k}~and~x_{k} \in X\\
    \end{cases}$
    Let $C=2^{n}-1$, $G_{L}=R_{1}$, $G_{N}=R_{2} \cup R_{3} \cup R_{4}$, $G_{R}=R_{5}$, $|G_{E}|=0$, $\Phi^{+}=\{\}$, and $\Phi^{-}=\{\neg \X goal(p_{1},100)\}$. 
    The cost function is defined as (recall that $G_{C}=G_{L} \cup G_{N} \cup G_{E}$ and $\mathcal{I}=\{1,2,\myDots,|G_{C}|\}$):
    \begin{itemize}
        \item $cost(i,(c,legal(p_{i},pos)))=2^{n-i}$ for $1 \leq i \leq n$
        \item $cost(i,dom)=2^{n}$, for other $i \in \mathcal{I}$ and $dom \in Dom$.
    \end{itemize}
    Informally, we have a GD that has rules of form $R_{1}$ to $R_{5}$, and changing the head of the $i$-th rule of form $R_{1}$ to $legal(p_{i},pos)$ costs $2^{n-i}$ (i.e., $2^{n-1},\ldots,2,1$) while all other modifications to the game description has a cost of $2^{n}$. The repaired GDL $G'$ must ensure that $G' \not\models_{t} \neg \X goal(p_{1},100)$, and the cost of the repair must be no more than $2^{n}-1$. Note that the $\mrpb$ $\langle G, \Phi^{+}, \Phi^{-},cost, C \rangle $ can be constructed in PTIME w.r.t. the size of $\Psi$.

    We define $\sigma:X \rightarrow \{\top,\bot\}$ as a function that assigns each variable $x_{i} \in X$ (i.e., $1 \leq i \leq n$) to be $\top$ or $\bot$, and let $\Psi_{\sigma}$ denotes the truth value of $\Psi$ given the assignment $\sigma$. 
    
    We claim that for any assignment $\sigma$ of $\Psi$, $\Psi_{\sigma}$ is true iff the repair $\mathcal{R}_{\sigma}=\{\langle i, (c, legal(p_{i},pos))\rangle|\sigma(x_{i})=\top\}$
    ensures that $rep(G,\mathcal{R_{\sigma}}) \not\models_{t}  \neg \X goal(p_{1},100)$.
    
    First, it is trivial that the cost of any repaired set $\mathcal{R_{\sigma}}$ defined above is no more than $2^{n}-1$ for any $\sigma$, and in $rep(G,R_{\sigma})$ all players $p_{1},\ldots,p_{n}$ have exactly 1 legal action at $\Sinit$ (either $pos$ or $neg$ but not both).
    
    $\Rightarrow$. If $\sigma$ is a satisfiable assignment. Then, each clause $C_{i}=l_{i}^{1} \vee l_{i}^{2} \vee l_{i}^{3}$ has at least 1 literal satisfied. Wlog assume that $l_{i}^{1}$ is the literal that is satisfied. We further assume that $l_{i}^{1}=x_{k}$ is a positive literal and $\sigma(x_{k})=\top$ (the proof of the case when $l_{i}^{1}=\neg x_{k}$ and $\sigma(x_{k})=\bot$ is similar). Then, in $rep(G,\mathcal{R_{\sigma}})$, we have that $hd(r_{k})'=legal(p_{k},pos)$. Hence, $pos$ is the only legal action for the $p_{k}$ at $\Sinit$. Due to the way $R_{2}$ to $R_{4}$ are defined, the base proposition $c_{i}$ must be justified in the next state because there must be a rule $next(c_{i}) \texttt{:-} does(p_{k},pos).$ in $R_{2}$. Note that the same claim applies to all base propositions $c_{1},\ldots,c_{m}$. Thus, $goal(p_{1},100)$ is justified in the next state. As a result, there is a sequence $(\Sinit,S_{1})$ that has $rep(G,\mathcal{R_{\sigma}}),(\Sinit,S_{1}) \models_{t}  \X goal(p_{1},100)$ which is equivalent to $rep(G,\mathcal{R_{\sigma}}),(\Sinit,S_{1}) \not\models_{t}  \neg\X goal(p_{1},100)$, which means that $rep(G,\mathcal{R_{\sigma}}) \not\models_{t}  \neg\X goal(p_{1},100)$.

    $\Leftarrow$. Let $\sigma$ be an assignment to $\Psi$ If $\mathcal{R_{\sigma}}$ is a solution repair to the $\mrpb$. Then, $rep(G,\mathcal{R_{\sigma}}) \not\models_{t}  \neg \X goal(p_{1},100)$. This is equivalent to $goal(p_{1},100)$ holds at $S_{1}$ for some valid sequence $(\Sinit,S_{1})$. Since there is only 1 rule with $goal(p_{1},100)$ in the head, if $goal(p_{1},100)$ is justified, all $true(c_{1}),\ldots,true(c_{m})$ must be justified. Since we cannot modify the initial states, the only case for $true(c_{i})$ holds is that some of the next rules containing $next(c_{i})$ are justified. In other words, one of the rules of the form $next(c_{i}) \texttt{:-}\mu(l_{i}^{j})$ (for $1 \leq j \leq 3$) must be justified.  
    Wlog, we assume the first one is justified, then it means that $\mu(l_{i}^{1})$ must hold at $S_{0}$. We further assume that $\mu(l_{i}^{1})=does(p_{k},pos)$ (the case when $\mu(l_{i}^{1})=does(p_{k},neg)$ is similar), then we know that $l_{i}^{1}=x_{k}$ and $\langle i,(c,legal(p_{k},pos)) \rangle \in \mathcal{R_{\sigma}}$. This means that $\sigma(x_{k})=\top$, which satisfies the first literal in clause $C_{i}$. Note that the same claim applies to all clauses $C_{1},\ldots,C_{m}$, hence $\sigma$ is a satisfiable assignment to $\Psi$.

    Since SAT is NP-hard, so does $\mrpb$ with $\Phi^{+}=\{\}$.

    

\end{proof}

\begin{theorem}
    Let $G$ be a valid $\gd$, $\mathcal{M}$ a stable model of $\pgen(G)$, and $\mathcal{X}$ the set of all atoms $lit,ha$ in $\mathcal{M}$. If $q(\vec t\,)$ is a ground atom with a predicate symbol in $\{true,does,$ $legal,next\}$ or appears in $G_{R}$, then for any state $S$ and joint action $A$ over $G$'s base propositions and move domain\/:
    \begin{itemize}
               \item $\Pi^{-1}(\mathcal{X}) \cup G_{R} \cup S^{true} \cup A^{does} \models q(\vec t\,)$ iff $\ginv \cup \mathcal{X} \cup G_{R} \cup S^{true} \cup A^{does} \models q(\vec t\,)$.

    \end{itemize}
\end{theorem}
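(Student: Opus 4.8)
The plan is to fix the abbreviations $G' = \Pi^{-1}(\mathcal{X}) \cup G_{R}$ and $G'' = \ginv \cup \mathcal{X} \cup G_{R}$, and to show that once both are augmented with the facts $S^{true} \cup A^{does}$ they have unique stable models that agree on every atom $q(\vec t\,)$ of interest. I would first settle well-definedness of $\models$: because $\mathcal{M}$ is a stable model of $\pgen(G)$, the construction of the generator guarantees that $\mathcal{X}$ encodes a valid repaired description, so $G'$ is a valid GDL in restricted form and hence stratified with a unique answer set; for $G''$ I would check that $\ginv$ is stratified too, since in clauses \eqref{enc:rule16}--\eqref{enc:rule19} the predicates $err\_t, err\_d$ depend with negation only on the EDB predicates $true, does, lit, \hd$, while \eqref{enc:rule20}--\eqref{enc:rule21} define $legal, next$ with negation only over $err\_t, err\_d$, and $G_{R}$ (being restricted) cannot depend on $legal$ or $next$, so no cycle through negation arises.

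The body of the argument is a case split on the predicate symbol of $q$. For $q \in \{true, does\}$ I would observe that in a restricted-form description these appear only in rule bodies, so none of $\Pi^{-1}(\mathcal{X})$, $\ginv$, $\mathcal{X}$, $G_{R}$ can derive them; hence $q(\vec t\,)$ holds in either program iff $q(\vec t\,) \in S^{true} \cup A^{does}$. For $q$ occurring in $G_{R}$ and distinct from $legal, next$, the key point is that in restricted form such a predicate cannot depend on $legal$ or $next$, so its truth is fixed by $G_{R} \cup S^{true}$ alone, which both programs share; the rules coming from $\mathcal{X}$, $\ginv$, and $\Pi^{-1}(\mathcal{X})$ only ever carry $legal$, $next$, or the auxiliary $err$ predicates in their heads, so they cannot interfere.

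The cases $q(\vec t\,) = legal(p,a)$ and $q(\vec t\,) = next(f)$ carry the real content, and here I would spell both semantics out explicitly. On the $G'$ side, $\Pi^{-1}$ reconstructs, for each $i$ with $\hd(i,(ac,(p,a))) \in \mathcal{X}$, a legal rule whose body (a subset of $\mathcal{F}$, by restricted form) is read off from the $lit(i,\cdot)$ atoms; thus $legal(p,a)$ holds iff some such rule has its body satisfied at $S$, i.e.\ for no $f$ do we have $true(f) \notin S^{true}$ with $lit(i,(pos,ba,f)) \in \mathcal{X}$, nor $true(f) \in S^{true}$ with $lit(i,(neg,ba,f)) \in \mathcal{X}$. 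On the $G''$ side, clauses \eqref{enc:rule16}--\eqref{enc:rule17} make $err\_t(i)$ derivable exactly when that same body is violated, and \eqref{enc:rule20} yields $legal(p,a)$ iff some $i$ has $\hd(i,(ac,(p,a))) \in \mathcal{X}$ with $err\_t(i)$ underivable; matching the two existential-over-$i$ conditions closes this case. Since $G'$ and $G''$ are valid descriptions, $legal$ cannot depend on $does$, matching the fact that \eqref{enc:rule16}--\eqref{enc:rule17} and \eqref{enc:rule20} never inspect $does$. The $next$ case is analogous, except that the body may also contain literals from $\mathcal{A}$; clauses \eqref{enc:rule18}--\eqref{enc:rule19} capture their violation through $err\_d(i)$, and \eqref{enc:rule21} requires both $err\_t(i)$ and $err\_d(i)$ to be absent.

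The main obstacle I anticipate is making the equivalence between ``the reconstructed rule body is satisfied'' and ``the corresponding $err$ atom is not derived'' fully rigorous while simultaneously accounting for the disjunction across all rules $i$ that share the same head. In particular, care is needed to verify that rules with an empty head ($\hd(i,\emptyset) \in \mathcal{X}$) contribute nothing on either side, and that the bijectivity of $\tau$ guarantees $\Pi^{-1}$ recovers each body literal with its correct polarity, so that no spurious or missing literal can break the correspondence.
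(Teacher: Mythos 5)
Your proposal is correct and follows essentially the same route as the paper's proof: the same four-way case split on the predicate symbol of $q$, the same reduction of the $legal$ case to matching the reconstructed rule bodies against the $err\_t$ conditions of clauses \eqref{enc:rule16}--\eqref{enc:rule17} and \eqref{enc:rule20}, and the same analogous treatment of $next$ via $err\_d$ and clauses \eqref{enc:rule18}--\eqref{enc:rule19}, \eqref{enc:rule21}. Your additional explicit checks (stratification of $\ginv$, the vacuity of empty-headed rules, injectivity of $\tau$) are details the paper leaves implicit, but they do not change the argument's structure.
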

\begin{proof}
    Due to the way $\pgen(G)$ is constructed, $\Pi^{-1}(\mathcal{X}) \cup G_{R}$ must be a valid $\gd$ obtained from $G$ with some valid repair. Eventually, $\Pi^{-1}(\mathcal{X})$ will have the form $r_{1}' \cup r_{2}'\cup \ldots \cup r'_{|G_{C}|}$. And for each $i$, $hd(r'_{i})$ is of form $legal(p,a)$, or $next(f)$, or $\emptyset$.
    There are 4 cases.
    
    First, if $q \in \{true,does\}$, the statements trivially hold.
    
    Next, if $q$ is a predicate symbol appears in $G_{R}$, the statements hold because $G$ is in restricted form, any predicate symbol $q \in G_{R}$ 
    cannot appear in the head of $\ginv \cup \mathcal{X}$ or $\Pi^{-1}(\mathcal{X})$, whether $q(\vec t\,)$ holds solely depends on $S^{true}$ and $G_{R}$. 
    
    Thirdly, if $q(\vec t\,)=legal(p,a)$ for some $p,a$. Since both $\Pi^{-1}(\mathcal{X}) \cup G_{R}$ and $\ginv \cup \mathcal{X} \cup G_{R}$ are valid $\gd$s, whether $legal(p,a)$ holds or not does not depend on $A^{does}$, showing the original statement is the same as showing $\Pi^{-1}(\mathcal{X})  \cup G_{R} \cup S^{true} \models q(\vec t\,)$ iff $\ginv \cup \mathcal{X} \cup G_{R} \cup S^{true} \models q(\vec t\,)$. Since $\Pi^{-1}(\mathcal{X})$ is also in restricted form, we know that $\Pi^{-1}(\mathcal{X}) \cup S^{true} \cup G_{R} \models q(\vec t\,)$ iff there is some $i$ such that $hd(r_{i}')=legal(p,a)$ and there's no $f$ such that: $true(f) \in S^{true}$ and $not~true(f) \in bd(r_{i}')$, \textbf{or} $true(f) \notin S^{true}$ and $true(f) \in bd(r_{i}')$. This is equivalent to $\hd(i,(ac,(p,a))) \in \mathcal{X}$ for some $i$ and there's no $f$ such that: i) $true(f) \in S^{true}$ and $lit(i,(neg,ba,f)) \in \mathcal{X}$, \textbf{or} ii) $true(f) \notin S^{true}$ and $lit(i,(pos,ba,f)) \in \mathcal{X}$. This is exactly what clauses \eqref{enc:rule16}, \eqref{enc:rule17}, and \eqref{enc:rule20} in Fig.~\ref{fig:program} are modeling. \eqref{enc:rule16} and \eqref{enc:rule17} state that $err\_t(i)$ is justified iff for some $f$ and $i$: i) $true(f) \in S^{true}$ and $lit(i,(neg,ba,f)) \in \mathcal{X}$, \textbf{or} ii) $true(f) \notin S^{true}$ and $ lit(i,(pos,ba,f)) \in \mathcal{X}$. \eqref{enc:rule20} forces $\ginv \cup \mathcal{X} \cup S^{true} \cup G_{R} \models q(\vec t\,)$ iff there's some $i$ such that $\hd(i,(ac,(p,a))) \in \mathcal{X}$ and $err\_t(i)$ is not justified. 
    
    Finally, if $q(\vec t\,)=next(f)$ for some $f$, then $\Pi^{-1}(\mathcal{X}) \cup G_{R} \cup S^{true} \cup A^{does} \models q(\vec t\,)$ iff there is some $i$ such that $hd(r_{i}')=next(f)$ and 1): there's no $g$ such that: $true(g) \in S^{true}$ and $not~true(g) \in bd(r_{i}')$, \textbf{or} $true(g) \notin S^{true}$ and $true(g) \in bd(r_{i}')$, also 2): there's no $(p,a)$ such that: $does(p,a) \in A^{does}$ and $not~does(p,a) \in bd(r_{i}')$, \textbf{or} $does(p,a) \notin A^{does}$ and $does(p,a) \in bd(r_{i}')$. We've already discussed case 1) in the proof of the case when $q(\vec t\,)=legal(p,a)$, case 2) is analogous and ensured by~\eqref{enc:rule18}, ~\eqref{enc:rule19}, and~\eqref{enc:rule21}.
    
\end{proof}
\setcounter{corollary}{0}
\renewcommand{\thelemma}{\Alph{section}\arabic{corollary}}
\begin{corollary}
     Let $G$ be a valid $\gd$, $\phi$ be a GTL formula with $deg(\phi)=n$, $\mathcal{M}$ is a stable model of $\pgen(G)$, and $\mathcal{X}$ is the set of all atoms of $lit$ and $\hd$ in $\mathcal{M}$. Let $G'=\Pi^{-1}(\mathcal{X}) \cup G_{R}$, and  
    $\pver(\phi)=\plegal^{n}   \cup \pext^{n}(\ginv \cup G_{R}) \cup \enc(\phi,0) \cup \{\texttt{:-} \eta(\phi,0)\}$. Then, $G' \models_{t} \phi$ iff $\mathcal{X} \cup \pver(\phi)$ has no stable model. 
\end{corollary}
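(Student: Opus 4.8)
The plan is to combine Theorem~\ref{theorem:basecase} (GTL model-checking via ASP) with the state-wise equivalence from Theorem~\ref{theorem:inv}. First I would apply Theorem~\ref{theorem:basecase} directly to the repaired description $G' = \Pi^{-1}(\mathcal{X}) \cup G_{R}$. Because $\mathcal{M}$ is a stable model of $\pgen(G)$, the construction of $\pgen(G)$ guarantees that $G'$ is a valid $\gd$ obtainable from $G$ by a valid repair, so Theorem~\ref{theorem:basecase} applies and gives $G' \models_{t} \phi$ iff the program $V_{G'} := \plegal^{n} \cup \pext^{n}(G') \cup \enc(\phi,0) \cup \{\texttt{:-}\,\eta(\phi,0)\}$ has no stable model. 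It then remains to show that $V_{G'}$ has no stable model iff $\mathcal{X} \cup \pver(\phi)$ has none, i.e.\ that replacing $\pext^{n}(G')$ by $\mathcal{X} \cup \pext^{n}(\ginv \cup G_{R})$ preserves the (non-)existence of stable models.

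The two programs share $\plegal^{n}$, $\enc(\phi,0)$, and the constraint $\{\texttt{:-}\,\eta(\phi,0)\}$; they differ only in how the game dynamics are encoded. The core step is therefore to show that $\pext^{n}(G')$ and $\mathcal{X} \cup \pext^{n}(\ginv \cup G_{R})$ agree on every time-indexed atom whose predicate lies in $\{true, does, legal, next\}$ or occurs in $G_{R}$ --- precisely the atoms read by $\enc(\phi,0)$ and $\plegal^{n}$. The $\eta$-atoms and the generator's auxiliary atoms ($terminal$, $no\_play$, $end$) are functionally determined by these, so agreement on the former suffices to set up a bijection between the stable models of the two programs: the $\ginv$-based program merely carries additional atoms $lit, \hd$ (the facts of $\mathcal{X}$) and the derived $err\_t, err\_d$, all of which are uniquely determined and never feed back into the shared part.

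To prove the level-by-level agreement I would induct on $i$ from $0$ to $n$. The base case holds because $true(\cdot,0)$ is fixed by the $\gdl{init}$ rules, which reside in the shared $G_{R}$. For the inductive step, assume the $true(\cdot,i)$ atoms coincide in both programs; reading these as a state $S_{i}$ and the generated $does(\cdot,i)$ atoms as a joint action $A_{i}$, Theorem~\ref{theorem:inv} instantiated with $S = S_{i}$ and $A = A_{i}$ yields that $\Pi^{-1}(\mathcal{X}) \cup G_{R}$ and $\ginv \cup \mathcal{X} \cup G_{R}$ entail exactly the same ground atoms of $legal$, of the $G_{R}$-predicates (in particular $terminal$), and of $next$. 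Since $\pext^{n}$ folds $\gdl{next}(f)$ at level $i$ into $\gdl{true}(f,i+1)$, agreement on $next$ propagates the invariant to level $i+1$, closing the induction; and since $legal$ and $terminal$ agree at each level, $\plegal^{n}$ makes identical legal-move and termination choices everywhere, so the two unrollings generate precisely the same valid sequences. As the only rule that can eliminate a stable model is the constraint $\{\texttt{:-}\,\eta(\phi,0)\}$, which depends solely on the shared atoms, stable-model existence is equivalent for the two programs, and the corollary follows.

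I expect the main obstacle to be the bookkeeping around the temporal extension itself. Theorem~\ref{theorem:inv} is phrased for a single fixed state/action pair, whereas here $S_{i}$ and $A_{i}$ are derived inside the unrolled program, so one must verify that each level-$i$ copy of the $\ginv$ clauses consults the level-$i$ $true$/$does$ facts together with the shared, index-free $\mathcal{X}$. In particular, under $\pext^{n}$ the auxiliary predicates $err\_t$ and $err\_d$ acquire a level index (they depend on $\gdl{true}$, resp.\ $\gdl{does}$) while $\hd$ and the atoms of $\mathcal{X}$ remain index-free; confirming that this wiring reproduces exactly the single-level hypotheses of Theorem~\ref{theorem:inv} at every $0 \leq i \leq n$ is the delicate part. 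Once that is checked, the inductive argument and the bijection between stable models go through routinely.
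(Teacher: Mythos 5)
Your proposal is correct and takes essentially the same route as the paper: both proofs combine Theorem~\ref{theorem:basecase} with Theorem~\ref{theorem:inv}, together with the key observation that the atoms of $\mathcal{X}$ (the $lit$/$\hd$ facts) do not depend on $\gdl{true}$ or $\gdl{does}$ and hence are left untouched by the temporal extension. The only difference is where the equivalence is cashed in --- the paper notes $\mathcal{X}\cup\pext^{n}(\ginv\cup G_{R})=\pext^{n}(\ginv\cup G_{R}\cup\mathcal{X})$ and applies Theorem~\ref{theorem:basecase} to the description $\ginv\cup G_{R}\cup\mathcal{X}$, which Theorem~\ref{theorem:inv} shows is state-by-state equivalent to $G'$, whereas you apply Theorem~\ref{theorem:basecase} to $G'$ itself and then establish equisatisfiability of the two unrolled programs by a level-by-level induction --- which is a more explicit rendering of the same argument.
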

\begin{proof}
    First note that by definition~\ref{def:temporal}, we know $\pext^{n}(\ginv \cup G_{R}) \cup \mathcal{X}$ and $\pext^{n}(\ginv \cup G_{R} \cup \mathcal{X})$ are the same program because every instances of $lit$ and $\hd$ does not depend on $true$ or $does$ and won't be modified in the temporal extension procedure. We've shown in Theorem 9 that $G'$ and $\ginv \cup G_{R} \cup \mathcal{X}$ are equivalent. Thus, at every state all atoms that can be part of any GTL formula over $G$ that holds in $G'$ must also hold in $\ginv \cup G_{R} \cup \mathcal{X}$, and for any state $S$, its successor state after making the joint action $A$ in $G'$ and  $\ginv \cup G_{R} \cup \mathcal{X}$ is the same state. Therefore, by Theorem~\ref{theorem:basecase}, the corollary holds.
\end{proof}

\end{document}